\newcommand{\bis}{\mathrel{\mathchoice%
{\raisebox{.3ex}{$\,
  \underline{\makebox[.7em]{$\leftrightarrow$}}\,$}}%
{\raisebox{.3ex}{$\,
  \underline{\makebox[.7em]{$\leftrightarrow$}}\,$}}%
{\raisebox{.2ex}{$\,
  \underline{\makebox[.5em]{\scriptsize$\leftrightarrow$}}\,$}}%
{\raisebox{.2ex}{$\,
  \underline{\makebox[.5em]{\scriptsize$\leftrightarrow$}}\,$}}}}
\newcommand{\Kbis}{\bis_{\Box}}
\newcommand{\kwbis}{\bis_{\blacktriangle}}
\newcommand{\Deltabis}{\bis_{\Delta}}
\newcommand{\equkw}{\ensuremath{\equiv_{\blacktriangle}}}
\newcommand{\equk}{\ensuremath{\equiv_{\Box}}}
\renewcommand{\triangle}{\Delta}
\newcommand{\M}{\mathcal{M}}
\newcommand{\N}{\mathcal{N}}
\newcommand{\KwTop}{\ensuremath{\blacktriangle\top}}
\newcommand{\EquiKw}{\ensuremath{\blacktriangle\neg}}
\newcommand{\KwCon}{\ensuremath{\blacktriangle\land}}
\newcommand{\R}{\ensuremath{\texttt{R}}}
\newcommand{\TAUT}{\ensuremath{\texttt{TAUT}}}
\newcommand{\SUB}{\ensuremath{\texttt{US}}}
\newcommand{\MP}{\ensuremath{\texttt{MP}}}
\newcommand{\REKw}{\ensuremath{\texttt{RE}\blacktriangle}}
\newcommand{\KwT}{\ensuremath{\blacktriangle\texttt{T}}}
\newcommand{\KwTr}{\ensuremath{\blacktriangle4}}
\newcommand{\KwB}{\ensuremath{\blacktriangle\texttt{B}}}
\newcommand{\KwEuc}{\ensuremath{\blacktriangle5}}
\newcommand{\KwEucp}{\ensuremath{\blacktriangle5'}}
\newcommand{\Kv}{\ensuremath{\textit{Kv}}}
\newcommand{\K}{\ensuremath{\textit{K}}}
\newcommand{\BP}{\ensuremath{\textbf{P}}}
\newcommand{\CLtwo}{\ensuremath{\mathcal{L}(\Delta)}}
\newcommand{\ML}{\ensuremath{\mathcal{L}(\Box)}}
\newcommand{\CPL}{\ensuremath{\mathcal{L}}}
\newcommand{\ECPL}{\ensuremath{\mathcal{L}(\Box,\Delta,\circ,\blacktriangle)}}
\newcommand{\SNCL}{\ensuremath{\mathcal{L}(\blacktriangle)}}
\newcommand{\NCL}{\ensuremath{\mathcal{L}(\Delta)}}
\newcommand{\LEA}{\ensuremath{\mathcal{L}(\circ)}}
\newcommand{\SLCL}{\ensuremath{\mathbf{K}^\blacktriangle}}
\newcommand{\SLCLTr}{\ensuremath{\mathbf{K4}^\blacktriangle}}
\newcommand{\SLCLB}{\ensuremath{\mathbf{KB}^\blacktriangle}}
\newcommand{\SLCLEuc}{\ensuremath{\mathbb{NCL}5}}
\newcommand{\SLCLBEuc}{\ensuremath{\mathbf{KB5}^\blacktriangle}}
\newcommand{\SLCLBEucp}{\ensuremath{\mathbf{KB5'}^\blacktriangle}}
\newcommand{\lr}[1]{\langle #1 \rangle}
\newcommand{\lra}{\leftrightarrow}
\renewcommand{\phi}{\varphi}
\newtheorem{theorem}{Theorem}%��Щ��һЩpackage�ﶼ�ж���,�õ�ʱ����ϵͳ��ʾ.
\newtheorem{lemma}[theorem]{Lemma}
\newtheorem{definition}[theorem]{Definition}
\newtheorem{proposition}[theorem]{Proposition}
\newtheorem{corollary}[theorem]{Corollary}
\newtheorem{fact}[theorem]{Fact}
\newtheorem{example}[theorem]{Example}
\newcommand{\weg}[1]{}
\title{Logics of Strong Noncontingency}
\author{Jie Fan}
\date{}
\begin{document}
\maketitle

\begin{abstract}
\weg{Usually, non-contingency is interpreted as `necessary truth or necessary falsity'; otherwise, it is contingent. }Inspired by Hintikka's treatment of question embedding verbs in \cite{Hintikka:1976} and the variations of noncontingency operator, we propose a logic with {\em strong noncontingency} operator $\blacktriangle$ as the only primitive modality. A proposition is strongly noncontingent, if no matter whether it is true or false, it does it necessarily; otherwise, it is weakly contingent. This logic is not a normal modal logic, since $\blacktriangle(\phi\to\psi)\to(\blacktriangle\phi\to\blacktriangle\psi)$ is invalid. We compare the relative expressivity of this logic and other logics, such as standard modal logic, noncontingency logic, and logic of essence and accident, and investigate its frame definability. Apart from these results, we also propose a suitable notion of bisimulation for the logic of strong noncontingency, based on which we characterize this logic within modal logic and within first-order logic. We also axiomatize the logic of strong noncontingency over various frame classes. Our work is also related to the treatment of agreement operator in \cite{DBLP:journals/ndjfl/Humberstone02}.
\end{abstract}

\weg{\begin{abstract}
Usually, contingency is interpreted as `possibly true and possibly false', and non-contingency is interpreted as `necessarily true or necessarily false'. In the epistemic setting, the two notions mean, respectively, `ignorance', and `knowing whether'. Thus, an agent is ignorant about formula $\phi$, means that the agent thinks $\phi$ and $\neg\phi$ are both possible; the agent knows whether $\phi$, means that the agent knows $\phi$ or knows $\neg\phi$. \weg{However, it seems better from . }Based on this observation, we give an alternative semantics for the contingency and non-contingency operators. According to our semantics, a proposition is contingent, if no matter whether it is true or false, it could have been otherwise; otherwise, it is non-contingent, i.e., if no matter whether it is true or false, it does it necessarily. We give axiomatizations over various classes of frames.
\end{abstract}}

\noindent Keywords: noncontingency, completeness, expressivity, frame definability, bisimulation, modal logic of agreement

\section{Introduction}

In his seminal work \cite{hintikka:1962}, beyond `knowing that', Hintikka also talked about other types of knowledge, such as `knowing whether', `knowing who'. There, `John knows whether it is raining' is  interpreted as `John knows that it is raining or knows that it is not raining', and `John knows who killed Toto' is interpreted as `there is a person $b$, John knows that $b$ killed Toto' (ibid., pages 12, 132). However, Hintikka \cite{Hintikka:1976} gave a \weg{game-theoretical analysis for indirect questions}treatment of question embedding verbs. According to his interpretation, the sentence ``John knows whether it is raining'' should be equivalent to the sentence ``If it is raining, then John knows that it is raining, and if it is not raining, then John knows that it is not raining'', and the sentence ``John knows who killed Toto'' should be equivalent to the sentence ``Any person is such that if he killed Toto then John knows that he killed Toto''. These interpretations of `knowing whether' and `knowing who' are different from the ones mentioned above. His treatment also applies to many other question embedding verbs like {\em remember}. This formal analysis is criticized in \cite{Karttunen:1977}, but recently adopted by researchers, see e.g., \cite{wangetal:2010}. One motivation of this paper is to formalize Hintikka's analysis for question embedding verbs in \cite{Hintikka:1976}.

Another motivation is relevant to the variations of noncontingency operator. In a non-epistemic setting, `knowing whether' in the sense of \cite{hintikka:1962} can be read as {\em noncontingency}. A proposition is noncontingent, if it is necessarily true or necessarily false. Noncontingency operator has been studied since the 1960s, see \cite{MR66,Cresswell88,Humberstone95,DBLP:journals/ndjfl/Kuhn95,DBLP:journals/ndjfl/Zolin99,Fanetal:2014,Fanetal:2015,Fan:neighborhood}. As a variation of noncontingency operator, {\em essence} was studied in the literature on \emph{the logic of essence and accident} \cite{Marcos:2005,steinsvold:2008}.\footnote{As pointed out in \cite[page 94]{steinsvold:2008}, ``the work in \cite{Marcos:2005} can be seen as a variation and continuation of the work done on [non]contingency logics.''}  There, as the only primitive modality, ``$\phi$ is essential'' means ``if $\phi$ is true, then $\phi$ is necessarily true'', and accident operator is defined as the negation of essence.

This paper is intended to propose a strong variation of noncontingency, which we call {\em strong noncontingency}. A formula is called strongly noncontingent, if no matter whether it is true or false, it does it necessarily; otherwise, it is weakly contingent, i.e., if no matter whether it is true or false, it could have been otherwise. We denote the strong noncontingency operator by $\blacktriangle$. Intuitively, $\blacktriangle\phi$ is read as ``if $\phi$ is true, then $\phi$ is necessarily true, and if $\phi$ is false, then $\phi$ is necessarily false''. The interpretation of the new operator is in line with Hintikka's analysis for question embedding verbs in \cite{Hintikka:1976}\weg{, e.g. remember, know, tell}. For instance, in the setting of epistemic logic, `$\phi$ is strongly noncontingent' means that `if $\phi$ is true, then the agent knows that $\phi$ is true, and if $\phi$ is false, then the agent knows that $\phi$ is false'. Besides, this operator can count as a stronger version of the essence operator. Moreover, this operator is also related to an alternative semantics for agreement operator proposed in \cite{DBLP:journals/ndjfl/Humberstone02}.

\weg{A semantics for contingency logic, which is different from the standard one, can be found in the literature of \emph{the logic of essence and accident}, where non-contingency operator is interpreted as `if something is true, then it is necessarily true'.\footnote{As pointed out in \cite{steinsvold:2008}, ``the work in \cite{Marcos:2005} can be seen as a variation and continuation of the work done on contingency logics.''}}
\weg{Such as, the sentence ``John will tell us whether it is raining'' conveys that John is expected to tell the true answer to the question ``whether it is raining''; more precisely, if it is raining, then John will tell us it is raining; and if it is not raining, then John will tell us it is not raining. For another example, the sentence ``John knows whether Mary left'' describes that John knows the true answer to the question ``whether Mary left'', that is, if Mary left, then John knows Mary left; and if Mary did not leave, then John knows Mary did not leave.}
\weg{Plaza~\cite{plaza:1989,plaza:2007} introduce the know-value operator $\Kv$ to deal with Sum-and-Product Puzzle, where $\Kv_id$ is read as ``agent $i$ knows the value of $d$'', and interpreted as ``$d$ has the same truth value in all accessible worlds for $i$ from the current point'' (also see \cite{wangetal:2013} and \cite{Wangetal:2014} for a different terminology ``know-what operator''). A different semantics for know-value operator is given in \cite{vDRV08}, in which $\Kv_id$ is essentially interpreted as `$\exists x\K_i(d=x)$'. However, it seems to us that $\Kv_id$ is more suitable to be interpreted as $\forall x((d=x)\to\K_i(d=x))$, which is equivalent to $\bigwedge_x((d=x)\to\K_i(d=x))$.}
\weg{For the same reason, the sentence `I know \emph{who} J. Hintikka is' means that, for every person $x$, if J. Hintikka $=x$, then I know J. Hintikka $=x$.}
\weg{In line with these natural language expressions, the sentence ``John knows whether it is raining'' should be analyzed as follows: if it is raining, then John knows it is raining; and if it is not raining, then John knows it is not raining. The philosophical counterparts of `knowing whether' and `knowing (that)' are, respectively,  non-contingency and necessity. Thus ``it is non-contingent that it is raining'' seems better to be analyzed as follows: if it is raining, then it is necessary that it is raining; and if it is not raining, then it is necessary that it is not raining.}
\weg{In this paper, we propose a stronger version of noncontingency operator, denoted by $\blacktriangle$. Intuitively, $\blacktriangle\phi$ is read as ``if $\phi$ is true, then $\phi$ is necessarily true, and if $\phi$ is false, then $\phi$ is necessarily false''. The semantics is in line with the analysis of Hintikka's game-theoretical interpretation for indirect questions. Besides, this operator can count as a stronger version of the essential operator. Moreover, this operator is also related to an alternative semantics for agreement operator proposed in \cite{DBLP:journals/ndjfl/Humberstone02}.}

The paper is organized as follows. Section \ref{sec.langandsem} defines the language and semantics of strong noncontingency logic, which is a fragment of a much larger logic. In Section \ref{sec.exp} we compare the relative expressivity of strong noncontingency logic and other logics, which turns out that the new logic is in between standard modal logic and noncontingency logic. Section \ref{sec.framedef} investigates the frame definability for strong noncontingency logic. A notion of bisimulation and that of bisimulation contraction for strong noncontingency logic are proposed in Section \ref{sec.bisandchar}. Based on the bisimulation, we characterize this logic as (strongly noncontingent) bisimulation invariant fragment of standard modal logic and of first-order logic. Sections \ref{sec.minimal} and \ref{sec.extension} axiomatize strong noncontingency logic over various classes of frames. Section \ref{sec.relatedwork} compares our work with \cite{DBLP:journals/ndjfl/Humberstone02}. We conclude with some future work in Section \ref{sec.conclu}.

\weg{Now in a general (philosophical) setting, rather than adopting the standard semantics, it is natural to interpret non-contingency operator $\blacktriangle\phi$ as ``if $\phi$ is true, then $\phi$ is necessarily true, and if $\phi$ is false, then $\phi$ is necessarily false''. At first sight, the semantics seems more complicated than the standard one, but as we will see, the axiomatizations are much simpler than those under the standard semantics. And indeed, this semantics is more intuitive than the old one.}

\section{Language and Semantics}\label{sec.langandsem}
First, we introduce various extensions of the language of classical propositional logic, although we will mainly focus on the language of strong noncontingency logic.
\weg{First, we introduce an extension of the language of classical propositional logic with noncontingency operator, which we call the language of contingency logic.}
\begin{definition}[Language $\mathcal{L}(\Box,\Delta,\circ,\blacktriangle)$] Given a set $\BP$ of propositional variables, the logical language $\mathcal{L}(\Box,\Delta,\circ,\blacktriangle)$ is defined as:
$$\phi::=\top\mid p\mid\neg\phi\mid(\phi\land\phi)\mid\Box\phi\mid\triangle\phi\mid\circ\phi\mid\blacktriangle\phi$$
Without any modal constructs, we obtain the {\em language $\CPL$ of classical propositional logic}; with the only modal construct $\Box\phi$, we obtain the {\em language \ML\ of modal logic}; with the only modal construct $\triangle\phi$, we obtain the {\em language \NCL\ of noncontingency logic};
with the only modal construct $\circ\phi$, we obtain the {\em language \LEA\ of the logic of essence and accident};
 with the only modal construct $\blacktriangle\phi$, we obtain the {\em language \SNCL\ of strong noncontingency logic}. Given any language $\mathcal{L}$ in question, if $\phi\in\mathcal{L}$, we say $\phi$ is an $\mathcal{L}$-formula.
\end{definition}

We always omit the parentheses from formulas whenever no confusion arises. Formulas $\Box\phi,\triangle\phi,\circ\phi,\blacktriangle\phi$ express, respectively, `it is necessary that $\phi$', `it is noncontingent that $\phi$', `it is essential that $\phi$', `it is {\em strongly noncontingent} that $\phi$'. Other operators are defined as usual; in particular, $\Diamond\phi,\nabla\phi,\bullet\phi,\blacktriangledown\phi$ are defined as, respectively, $\neg\Box\neg\phi,\neg\triangle\phi,\neg\circ\phi,\neg\blacktriangle\phi$.

\begin{definition}[Model]
A {\em model} $\M$ is a triple $\lr{S,R,V}$, where $S$ is a nonempty set of possible worlds, $R$ is a binary relation over $S$, $V$ is a valuation function from $\BP$ to $\mathcal{P}(S)$. A {\em frame} $\mathcal{F}$ is a pair $\lr{S,R}$, i.e. a model without a valuation. Given $s\in S$, $(\M,s)$ is  a {\em pointed model}, and $(\mathcal{F},s)$ is a {\em pointed frame}. We omit the parentheses around $(\M,s)$ and $(\mathcal{F},s)$ whenever convenient. Sometimes we write $s\in\M$ for $s\in S$. And we denote by $R(s)$ the set of successors of $s$, that is, $R(s)=\{t\in S\mid sRt\}$. Model $\M$ is said to be a $\mathcal{K}$-model (resp. $\mathcal{D}$-model, $\mathcal{T}$-model, $\mathcal{B}$-model, $4$-model, $5$-model, $\mathcal{TB}$-model, $\mathcal{S}4$-model, $\mathcal{S}5$-model) if $R$ is arbitrary (resp. serial, reflexive, symmetric, transitive, Euclidean, reflexive and symmetric, reflexive and transitive, reflexive and Euclidean). A $\mathcal{K}$-frame and the like are defined similarly. %$\M$ is said to be a $\mathcal{TB}$-model (resp. $\mathcal{S}4$-model, $\mathcal{S5}$-model) if $\M$ is
\end{definition}

\begin{definition}[Semantics] Given a model $\mathcal{M}=\langle S,R,V\rangle$ and $s\in S$, the semantics of \ECPL\ is defined recursively as:\footnote{We here use the notation $\&,\forall,\Rightarrow,\Leftrightarrow$, respectively, to stand for the metalanguage `and', `for all', `if $\cdots$ then $\cdots$', `if and only if'.}
\[\begin{array}{|lcl|}
\hline
\mathcal{M},s\vDash\top&\Leftrightarrow&\text{true}\\
\mathcal{M},s\vDash p & \Leftrightarrow & s\in V(p)\\
\mathcal{M},s\vDash\neg\phi & \Leftrightarrow & \mathcal{M},s\nvDash\phi\\
\mathcal{M},s\vDash\phi\land\psi & \Leftrightarrow & \mathcal{M},s\vDash\phi ~\&~ \mathcal{M},s\vDash\psi\\
\mathcal{M},s\vDash\Box\phi& \Leftrightarrow & \forall t(sRt\Rightarrow\mathcal{M}, t\vDash\phi)\\
\mathcal{M},s\vDash\triangle\phi&\Leftrightarrow& \forall t_1,t_2((sRt_1~\&~sRt_2)\Rightarrow(\mathcal{M},t_1\vDash\phi\Leftrightarrow \mathcal{M},t_2\vDash\phi))\\
\mathcal{M},s\vDash\circ\phi&\Leftrightarrow&
(\mathcal{M},s\vDash\phi\Rightarrow\forall t(sRt\Rightarrow\mathcal{M}, t\vDash\phi))\\
\mathcal{M},s\vDash\blacktriangle\phi&\Leftrightarrow&(\mathcal{M},s\vDash\phi\Rightarrow \forall t(sRt\Rightarrow \mathcal{M},t\vDash\phi)) ~\& \\
&& (\mathcal{M},s\nvDash\phi\Rightarrow\forall t(sRt\Rightarrow\mathcal{M}, t\nvDash\phi))\\
\hline
\end{array}\]
If $\M,s\vDash\phi$, we say $\phi$ is {\em true} at $s$, and sometimes write $s\vDash\phi$ if $\M$ is clear; if for all $s\in\M$ we have $\M,s\vDash\phi$, we say $\phi$ is {\em valid on $\M$} and write $\M\vDash\phi$; if for all $\M$ based on $\mathcal{F}$ we have $\M\vDash\phi$, we say $\phi$ is {\em valid on $\mathcal{F}$} and write $\mathcal{F}\vDash\phi$; if for all $\mathcal{F}$ we have $\mathcal{F}\vDash\phi$, we say $\phi$ is {\em valid} and write $\vDash\phi$. Given $\Gamma\subseteq\ECPL$, if for all $\phi\in\Gamma$ we have $\M,s\vDash\phi$, we say $\M,s\vDash\Gamma$, and similarly for model validity/frame validity/validity. If there exists $(\M,s)$ such that $\M,s\vDash\phi$, we say $\phi$ is {\em satisfiable}. If $\Gamma\cup\{\neg\phi\}$ is unsatisfiable in $\mathcal{F}$, we say $\Gamma$ {\em entails} $\phi$ over $\mathcal{F}$ and write $\Gamma\vDash_\mathcal{F}\phi$. Given any two pointed models $(\M,s)$ and $(\N,t)$, if they satisfy the same $\ML$-formulas, we say they are {\em $\Box$-equivalent}, notation: $(\M,s)\equk(\N,t)$; if they satisfy the same $\SNCL$-formulas, we say they are {\em $\blacktriangle$-equivalent}, notation: $(\M,s)\equkw(\N,t)$.
\end{definition}

It is not hard to show the following validities.
\begin{fact}\label{fact.one}
$\vDash\blacktriangle\phi\leftrightarrow(\phi\to\Box\phi)\land(\neg\phi\to\Box\neg\phi)$, $\vDash\blacktriangledown\phi\leftrightarrow(\phi\land\Diamond\neg\phi)\lor(\neg\phi\land\Diamond\phi)$, $\vDash\blacktriangle\phi\leftrightarrow\circ\phi\land\circ\neg\phi$, $\vDash\blacktriangle\phi\leftrightarrow\blacktriangle\neg\phi$, $\vDash\blacktriangledown\phi\leftrightarrow\blacktriangledown\neg\phi$, $\vDash\triangle\phi\leftrightarrow\Box\phi\lor\Box\neg\phi$.
\end{fact}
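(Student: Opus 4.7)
The plan is to verify each of the six validities by a direct unfolding of the semantic clauses in the Semantics definition; no real obstacle is anticipated, since each equivalence reduces to routine propositional reasoning over the truth conditions.

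For the first equivalence $\blacktriangle\phi\leftrightarrow(\phi\to\Box\phi)\land(\neg\phi\to\Box\neg\phi)$, I would fix a pointed model $(\M,s)$ and rewrite the semantic clause for $\blacktriangle\phi$ at $s$: it asserts the conjunction of ``$s\vDash\phi$ implies every $R$-successor of $s$ satisfies $\phi$'' and ``$s\nvDash\phi$ implies every $R$-successor of $s$ satisfies $\neg\phi$''. By the semantics of $\Box$, the first conjunct is exactly $s\vDash\phi\to\Box\phi$ and the second is $s\vDash\neg\phi\to\Box\neg\phi$. The second listed validity, concerning $\blacktriangledown\phi$, then follows by taking contrapositives of the two conjuncts and pushing negation through disjunction.

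For $\blacktriangle\phi\leftrightarrow\circ\phi\land\circ\neg\phi$, I would observe that the semantic clause for $\circ\phi$ is precisely $\phi\to\Box\phi$, while instantiating the same clause with $\neg\phi$ in place of $\phi$ yields $\neg\phi\to\Box\neg\phi$; combining these two recovers the two conjuncts of the first validity, so the equivalence follows. For $\blacktriangle\phi\leftrightarrow\blacktriangle\neg\phi$, I would note that the semantic clause is formally symmetric in $\phi$ and $\neg\phi$: swapping the roles of the two conjuncts replaces $\phi$ with $\neg\phi$ throughout, and $\neg\neg\phi$ collapses to $\phi$. The validity $\blacktriangledown\phi\leftrightarrow\blacktriangledown\neg\phi$ is then immediate by negating both sides.

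Finally, for $\triangle\phi\leftrightarrow\Box\phi\lor\Box\neg\phi$, I would argue by cases on $R(s)$. The right-to-left direction is immediate: if every successor satisfies $\phi$, or every successor satisfies $\neg\phi$, then any two successors agree on $\phi$. For the converse, if $R(s)=\emptyset$ then $\Box\phi$ holds vacuously; otherwise pick any $t_0\in R(s)$, and the semantic clause for $\triangle\phi$ forces every other successor to share the truth value of $\phi$ at $t_0$, giving $\Box\phi$ or $\Box\neg\phi$ according to whether $t_0\vDash\phi$. This completes all six validities.
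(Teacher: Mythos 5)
Your proposal is correct and is exactly the routine semantic unfolding that the paper intends: the paper states these validities without proof (``It is not hard to show\dots''), and each of your six verifications matches the direct reading-off of the semantic clauses for $\blacktriangle$, $\circ$, $\Box$, and $\triangle$. The only cosmetic point is that the step for $\blacktriangledown\phi$ is better described as negating the conjunction and using $\neg(A\to B)\equiv A\land\neg B$ together with $\neg\Box\psi\equiv\Diamond\neg\psi$, rather than ``taking contrapositives,'' but the computation you describe is the right one.
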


\weg{The following proposition says that the alternative semantics of non-contingency operator is strictly stronger than its standard semantics.}
The following proposition says that
our new operator $\blacktriangle$ is indeed strictly
stronger than the standard non-contingency operator, that is why we call the language $\SNCL$ the language of strong noncontingency logic. And also, the new operator is strictly stronger than the essence operator.
\begin{proposition}\label{prop.stronger}
$\vDash\blacktriangle\phi\to\triangle\phi$, $\vDash\blacktriangle\phi\to\circ\phi$, but $\nvDash\triangle\phi\to\blacktriangle\phi$ and $\nvDash\circ\phi\to\blacktriangle\phi$.
\end{proposition}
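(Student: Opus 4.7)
The plan is to prove the two validities by invoking the equivalences collected in Fact~\ref{fact.one}, and to refute the two converses with a single two-point counterexample.

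For $\vDash \blacktriangle\phi \to \triangle\phi$, I will use the equivalences $\blacktriangle\phi \leftrightarrow (\phi\to\Box\phi)\land(\neg\phi\to\Box\neg\phi)$ and $\triangle\phi \leftrightarrow \Box\phi \lor \Box\neg\phi$ from Fact~\ref{fact.one}. Fix a pointed model $(\M,s)$ satisfying $\blacktriangle\phi$. By classical logic, either $\M,s\vDash\phi$, in which case $(\phi\to\Box\phi)$ forces $\M,s\vDash\Box\phi$, or $\M,s\nvDash\phi$, in which case $(\neg\phi\to\Box\neg\phi)$ forces $\M,s\vDash\Box\neg\phi$. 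In either case $\M,s\vDash\Box\phi\lor\Box\neg\phi$, hence $\M,s\vDash\triangle\phi$. For $\vDash \blacktriangle\phi \to \circ\phi$, the argument is immediate from the equivalence $\blacktriangle\phi\leftrightarrow\circ\phi\land\circ\neg\phi$ of Fact~\ref{fact.one}, which gives $\blacktriangle\phi\to\circ\phi$ by conjunction elimination. (Alternatively, both claims can be read straight off the truth conditions, bypassing Fact~\ref{fact.one}.)

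For the two non-validities, I will exhibit a single counterexample that refutes both at once. Consider the model $\M=\langle S,R,V\rangle$ with $S=\{s,t\}$, $R=\{(s,t)\}$, and $V(p)=\{t\}$, so that $\M,s\nvDash p$ and $\M,t\vDash p$. Taking $\phi:=p$:
\begin{itemize}
\item $\M,s\vDash\triangle p$, because $s$ has the unique successor $t$, so all successors trivially agree on $p$.
\item $\M,s\vDash\circ p$, because the antecedent of the truth condition for $\circ p$ fails at $s$ (we have $\M,s\nvDash p$), making the implication vacuously true.
\item $\M,s\nvDash\blacktriangle p$, because $\M,s\nvDash p$ while $\M,t\vDash p$, violating the second conjunct $\M,s\nvDash\phi\Rightarrow\forall u(sRu\Rightarrow\M,u\nvDash\phi)$.
\end{itemize}
This single pointed model therefore witnesses both $\nvDash\triangle\phi\to\blacktriangle\phi$ and $\nvDash\circ\phi\to\blacktriangle\phi$.

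There is no real obstacle here: the validities are immediate consequences of Fact~\ref{fact.one}, and the refutations only require a one-edge model whose root makes $\phi$ false while the successor makes $\phi$ true. The only mild subtlety is making sure the counterexample is minimal and simultaneously handles both converses, which the above choice accomplishes.
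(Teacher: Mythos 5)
Your proposal is correct and follows essentially the same route as the paper: the two validities are read off the equivalences in Fact~\ref{fact.one}, and the non-validities are refuted on a two-point, one-edge model where the root and its successor disagree on $p$ (the paper uses $s{:}p \to t{:}\neg p$ with instances $\phi=p$ and $\phi=\neg p$, while you mirror the valuation so the single instance $\phi=p$ handles both converses -- a cosmetic difference only).
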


\begin{proof}
The validity of $\blacktriangle\phi\to\circ\phi$ is immediate from the semantics.
Let $\mathcal{M}=\langle S,R,V\rangle$ be a model and $s\in S$. Assume that $\mathcal{M},s\vDash\blacktriangle\phi$, then by Fact~\ref{fact.one}, we have $\mathcal{M},s\vDash(\phi\to\Box\phi)\land(\neg\phi\to\Box\neg\phi)$. Since $s\vDash\phi\vee\neg\phi$, we can show that $s\vDash\Box\phi\vee\Box\neg\phi$, thus it is immediate from Fact~\ref{fact.one} that $s\vDash\triangle\phi$. Therefore $\vDash\blacktriangle\phi\to\triangle\phi$.

For the invalidity, consider the following model $\mathcal{N}$:
$$
\xymatrix{s:p\ar[rr]&&t:\neg p}
$$
Since $s$ has only one successor, we have $\mathcal{N},s\vDash\triangle p$. However, one can check $s\nvDash p\to\Box p$, using Fact \ref{fact.one} we get $s\nvDash\blacktriangle p$, thus
$\nvDash\triangle p\to\blacktriangle p$.
Besides, $s\vDash\circ\neg p$, but $s\nvDash\blacktriangle\neg p$, thus $\nvDash\circ\neg p\to\blacktriangle\neg p$.
\end{proof}

As a corollary, we have
\begin{corollary}
$\vDash\nabla\phi\to\blacktriangledown
\phi$, $\vDash\bullet\phi\to\blacktriangledown\phi$, but $\nvDash\blacktriangledown\phi
\to\nabla\phi$ and $\nvDash\blacktriangledown\phi\to\bullet\phi$.
\end{corollary}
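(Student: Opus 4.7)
The plan is to obtain all four claims as immediate contrapositive restatements of the four parts of Proposition \ref{prop.stronger}, using only the definitional abbreviations $\nabla\phi:=\neg\triangle\phi$, $\bullet\phi:=\neg\circ\phi$, and $\blacktriangledown\phi:=\neg\blacktriangle\phi$.

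First I would observe that an implication $\alpha\to\beta$ is valid iff its contrapositive $\neg\beta\to\neg\alpha$ is valid. Thus $\vDash\blacktriangle\phi\to\triangle\phi$ is equivalent to $\vDash\neg\triangle\phi\to\neg\blacktriangle\phi$, which by the abbreviations is exactly $\vDash\nabla\phi\to\blacktriangledown\phi$. The same move applied to $\vDash\blacktriangle\phi\to\circ\phi$ yields $\vDash\bullet\phi\to\blacktriangledown\phi$.

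For the two nonvalidities, the same contraposition principle works in reverse: $\nvDash\triangle\phi\to\blacktriangle\phi$ gives $\nvDash\neg\blacktriangle\phi\to\neg\triangle\phi$, i.e.\ $\nvDash\blacktriangledown\phi\to\nabla\phi$; likewise $\nvDash\circ\phi\to\blacktriangle\phi$ gives $\nvDash\blacktriangledown\phi\to\bullet\phi$. In fact, the countermodel $\mathcal{N}$ exhibited in the proof of Proposition \ref{prop.stronger} witnesses both nonvalidities directly, with the instantiation $\phi:=p$ for the first and $\phi:=\neg p$ for the second.

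There is no real obstacle here; the only thing worth being careful about is making sure the substitution of $\neg\phi$ for $\phi$ is applied consistently when unpacking $\bullet\phi$ and $\blacktriangledown\phi$, since the final clause in the proof of Proposition \ref{prop.stronger} actually uses $\neg p$ in place of $p$. Since the entailment is universal in $\phi$, this substitution is harmless and the corollary follows in at most a couple of lines.
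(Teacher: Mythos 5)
Your proposal is correct and matches the paper's intent: the paper gives no explicit proof, presenting the corollary as an immediate consequence of Proposition~\ref{prop.stronger}, and contraposition together with the definitional abbreviations $\nabla=\neg\triangle$, $\bullet=\neg\circ$, $\blacktriangledown=\neg\blacktriangle$ is exactly the intended argument. Your care about the instantiations $\phi:=p$ and $\phi:=\neg p$ for the two nonvalidities is also consistent with the countermodel $\mathcal{N}$ used there.
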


The following proposition states that necessity and strong noncontingency are \emph{almost} equivalent. We thus call the formula $\phi\to(\Box\phi\leftrightarrow\blacktriangle\phi)$ `almost-equivalent' schema (AE).
\begin{proposition}\label{prop.ad}
$\vDash\phi\to(\Box\phi\leftrightarrow\blacktriangle\phi)$.
\end{proposition}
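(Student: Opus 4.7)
The plan is to fix an arbitrary pointed model $(\M, s)$ satisfying $\M, s \vDash \phi$ and argue both directions of $\Box\phi \leftrightarrow \blacktriangle\phi$ by straightforward unfolding of the semantic clause for $\blacktriangle$ (equivalently, the first equivalence in Fact~\ref{fact.one}). The key observation is that the clause for $\blacktriangle\phi$ is a conjunction of two conditionals, one keyed to $\M, s \vDash \phi$ and the other to $\M, s \nvDash \phi$, so under the hypothesis $\M, s \vDash \phi$ exactly one of these conjuncts becomes non-vacuous and that conjunct coincides with $\Box\phi$.

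For the direction $\Box\phi \to \blacktriangle\phi$ at $s$, I would note that the second conjunct of the $\blacktriangle$-clause, namely $\M, s \nvDash \phi \Rightarrow \forall t(sRt \Rightarrow \M, t \nvDash \phi)$, is vacuously satisfied because $\M, s \vDash \phi$. The first conjunct reduces under the same hypothesis to $\forall t(sRt \Rightarrow \M, t \vDash \phi)$, which is exactly $\Box\phi$. So $\Box\phi$ at $s$ forces $\blacktriangle\phi$ at $s$.

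For the converse direction $\blacktriangle\phi \to \Box\phi$ at $s$, I would use that $\M, s \vDash \blacktriangle\phi$ includes the first conjunct $\M, s \vDash \phi \Rightarrow \forall t(sRt \Rightarrow \M, t \vDash \phi)$; discharging its antecedent with our hypothesis $\M, s \vDash \phi$ yields $\forall t(sRt \Rightarrow \M, t \vDash \phi)$, i.e.\ $\Box\phi$.

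There is no genuine obstacle: the proof is an immediate unpacking of the definition of $\blacktriangle$, with the hypothesis $\phi$ at $s$ selecting which of the two defining conjuncts carries content and which is vacuous. The only conceptual point, already flagged by the schema name \textbf{(AE)}, is that $\Box$ and $\blacktriangle$ diverge only at worlds where $\phi$ is false, so conditioning on $\phi$ collapses the two modalities.
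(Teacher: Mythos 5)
Your proof is correct and is essentially the paper's own argument: the paper likewise assumes $\M,s\vDash\phi$, observes via Fact~\ref{fact.one} that the conjunct $\neg\phi\to\Box\neg\phi$ is vacuous and that the conjunct $\phi\to\Box\phi$ collapses to $\Box\phi$, giving both implications. No differences worth noting.
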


\begin{proof}
Let $\mathcal{M}=\langle S,R,V\rangle$ be a model, and assume that $\mathcal{M},s\vDash\phi$. From the assumption and Fact~\ref{fact.one} it is clear that $\vDash\blacktriangle\phi\to\Box\phi$. We only need to show $\mathcal{M},s\vDash\Box\phi\to\blacktriangle\phi$. For this, suppose that $\mathcal{M},s\vDash\Box\phi$, then obviously $\mathcal{M},s\vDash\phi\to\Box\phi$; moreover, from the assumption it follows that $\mathcal{M},s\vDash\neg\phi\to\Box\neg\phi$, thus $\mathcal{M},s\vDash(\phi\to\Box\phi)\land(\neg\phi\to\Box\neg\phi)$. According to Fact~\ref{fact.one} again, we conclude that $\mathcal{M},s\vDash\blacktriangle\phi$, as desired.
\end{proof}

The schema AE motivates us to propose the desired canonical relation for the axiomatic systems below.

The logic $\SNCL$ is {\em not} normal, because $\blacktriangle(\phi\to\psi)\to(\blacktriangle\phi\to\blacktriangle\psi)$ is invalid. Neither is $\SNCL$ monotonic, since $\vDash\phi\to\psi$ does not imply $\vDash\blacktriangle\phi\to\blacktriangle\psi$, as illustrated below.
\[
\xymatrix{\M:\ \ \ s:\neg p, q\ar[rr]&& t:\neg p,\neg q}\\
\]
It is not hard to show that $\M,s\vDash\blacktriangle(p\to q)\land\blacktriangle p$~but~$\M,s\nvDash\blacktriangle q$. Moreover, $\vDash p\land q\to q$ but $\nvDash\blacktriangle (p\land q)\to\blacktriangle q$.

Although $\blacktriangle(\phi\to\psi)\to(\blacktriangle\phi\to\blacktriangle\psi)$ is invalid, we have a weaker validity.

\begin{proposition}
$\blacktriangle \phi\land\blacktriangle(\phi\to \psi)\land \phi\to\blacktriangle \psi$ is valid.
\end{proposition}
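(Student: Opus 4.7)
The plan is to reduce the claim to the ``almost-equivalent'' schema (Proposition~\ref{prop.ad}) and the duality $\blacktriangle\phi\leftrightarrow\blacktriangle\neg\phi$ (Fact~\ref{fact.one}), together with a case split on the truth value of $\psi$ at the point of evaluation. Fix a pointed model $(\M,s)$ satisfying all three conjuncts $\blacktriangle\phi$, $\blacktriangle(\phi\to\psi)$ and $\phi$; the goal is $\M,s\vDash\blacktriangle\psi$.

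The first step is uniform across both cases: since $\M,s\vDash\phi$ and $\M,s\vDash\blacktriangle\phi$, Proposition~\ref{prop.ad} immediately gives $\M,s\vDash\Box\phi$. Now I would split on whether $\psi$ holds at $s$. If $\M,s\vDash\psi$, then $\M,s\vDash\phi\to\psi$, so applying Proposition~\ref{prop.ad} to $\phi\to\psi$ and the hypothesis $\blacktriangle(\phi\to\psi)$ yields $\M,s\vDash\Box(\phi\to\psi)$. Combining this with $\Box\phi$ gives $\Box\psi$ pointwise along all $R$-successors, and a final appeal to Proposition~\ref{prop.ad} (using $\psi$ at $s$) delivers $\blacktriangle\psi$.

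In the remaining case, $\M,s\nvDash\psi$, so since $\phi$ is true at $s$ we have $\M,s\nvDash\phi\to\psi$. Unfolding $\blacktriangle(\phi\to\psi)$ via the equivalence in Fact~\ref{fact.one} as $(\phi\to\psi)\to\Box(\phi\to\psi)$ conjoined with $\neg(\phi\to\psi)\to\Box\neg(\phi\to\psi)$, the second conjunct applies and yields $\Box\neg(\phi\to\psi)$ at $s$. Hence every successor of $s$ satisfies $\phi\wedge\neg\psi$, so in particular $\Box\neg\psi$ holds at $s$. Applying Proposition~\ref{prop.ad} to the formula $\neg\psi$, which is true at $s$, gives $\blacktriangle\neg\psi$, and then $\blacktriangle\neg\psi\leftrightarrow\blacktriangle\psi$ from Fact~\ref{fact.one} finishes the case.

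There is no real obstacle here: the only subtle point is noticing that the natural symmetric case split on $\psi$ lines up perfectly with the two implications packaged inside $\blacktriangle(\phi\to\psi)$, and that in the negative case we must apply the almost-equivalent schema to $\neg\psi$ rather than $\psi$ and then use $\blacktriangle\neg\psi\leftrightarrow\blacktriangle\psi$. Once that pairing is seen, each case is a two-line chain of modus ponens on already-established validities.
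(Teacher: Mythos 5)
Your proof is correct and follows essentially the same route as the paper's: derive $\Box\phi$ from $\phi$ and $\blacktriangle\phi$ via the almost-equivalence schema, split on the truth value of $\psi$ at $s$, obtain $\Box\psi$ or $\Box\neg\psi$ in the respective cases, and conclude $\blacktriangle\psi$ using Fact~\ref{fact.one}. The only (cosmetic) difference is in how the conclusion is assembled: the paper collects $(\psi\to\Box\psi)\land(\neg\psi\to\Box\neg\psi)$ and applies Fact~\ref{fact.one} once, whereas you invoke Proposition~\ref{prop.ad} again in each case together with $\blacktriangle\neg\psi\leftrightarrow\blacktriangle\psi$.
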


\begin{proof}
Given an arbitrary model $\mathcal{M}=\langle S,R,V\rangle$ and any $s\in S$, suppose that $\mathcal{M},s\vDash\blacktriangle \phi\land\blacktriangle(\phi\to \psi)\land \phi$, to show $\mathcal{M},s\vDash\blacktriangle \psi$. By supposition and Prop.~\ref{prop.ad}, we have $s\vDash\Box\phi$. If $s\vDash\psi$, then $s\vDash\phi\to\psi$, from which and supposition $s\vDash\blacktriangle(\phi\to \psi)$ and Prop.~\ref{prop.ad}, it follows that $s\vDash\Box(\phi\to\psi)$, thus we can get $s\vDash\Box\psi$. If $s\vDash\neg\psi$, then $s\vDash\neg(\phi\to\psi)$, from which and supposition $s\vDash\blacktriangle(\phi\to \psi)$ and Fact~\ref{fact.one}, it follows that $s\vDash\Box\neg(\phi\to\psi)$, thus we can get $s\vDash\Box\neg\psi$. We have thus shown that $s\vDash(\psi\to\Box\psi)\land(\neg\psi\to\Box\neg\psi)$, then by Fact~\ref{fact.one} again, we conclude that $s\vDash\blacktriangle\psi$.
\end{proof}

\section{Relative expressivity}\label{sec.exp}

In this section, we compare the relative expressivity among different logics $\SNCL$, $\ML$ and $\CLtwo$\weg{of different logics, including $\SNCL$ and $\ML$, $\SNCL$ and $\CLtwo$, and $\ML$ and $\CLtwo$}.  A related technical definition is introduced as follows.

%First, let us give a technical definition.
\begin{definition}[Expressivity]
Given logical languages $\mathcal{L}_1$ and $\mathcal{L}_2$ that are interpreted on the same class $\mathbb{M}$ of models,
\begin{itemize}
\item $\mathcal{L}_2$ is {\em at least as expressive as} $\mathcal{L}_1$, notation: $\mathcal{L}_1\preceq \mathcal{L}_2$, if for any $\phi\in \mathcal{L}_1$, there exists $\psi\in\mathcal{L}_2$ such that for all $(\M,s)\in\mathbb{M}$, we have $\M,s\vDash\phi\leftrightarrow\psi$.
\item $\mathcal{L}_1$ and $\mathcal{L}_2$ are {\em equally expressive}, notation: $\mathcal{L}_1\equiv\mathcal{L}_2$, if $\mathcal{L}_1\preceq \mathcal{L}_2$ and $\mathcal{L}_2\preceq\mathcal{L}_1$.
\item $\mathcal{L}_1$ is {\em less expressive than} $\mathcal{L}_2$, or $\mathcal{L}_2$ is {\em more expressive than} $\mathcal{L}_1$, notation: $\mathcal{L}_1\prec\mathcal{L}_2$, if $\mathcal{L}_1\preceq \mathcal{L}_2$ and $\mathcal{L}_2\not\preceq\mathcal{L}_1$.
\end{itemize}
\end{definition}

\subsection{$\SNCL$ vs. $\ML$}
We first compare the relative expressivity of $\SNCL$ and $\ML$.
\begin{proposition}\label{prop.lessexp}
\SNCL\ is less expressive than \ML\ on the class of $\mathcal{K}$-models, $\mathcal{B}$-models, $4$-models, $5$-models.
\end{proposition}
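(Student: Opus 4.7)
The plan is to split the proof into the routine direction $\SNCL \preceq \ML$ and the substantive direction $\ML \not\preceq \SNCL$. For the former I would define a translation $t : \SNCL \to \ML$ by recursion, commuting with Boolean connectives and setting $t(\blacktriangle\psi) := (t(\psi) \to \Box t(\psi)) \land (\neg t(\psi) \to \Box \neg t(\psi))$; a routine induction using Fact~\ref{fact.one} then shows $\vDash \phi \leftrightarrow t(\phi)$, uniformly across all four frame classes.

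For the substantive direction, I would exhibit a single pair of pointed models that works for all four classes at once. Take $\M := \lr{\{s\}, \emptyset, V}$ (an isolated point) and $\N := \lr{\{t\}, \{(t,t)\}, V'}$ (a single reflexive point), with $V$ and $V'$ agreeing on every atom at $s$ and $t$ respectively. Both underlying frames lie simultaneously in $\mathcal{K}$, $\mathcal{B}$, $4$, and $5$: the empty relation satisfies symmetry, transitivity, and euclideanness vacuously, while the reflexive singleton satisfies all three directly. I would then show $(\M, s) \equkw (\N, t)$ by induction on $\SNCL$-formulas; the critical case is $\blacktriangle\psi$, and it holds at $s$ vacuously (no successors) while at $t$ the two implications in the semantic clause reduce to the trivialities $\N, t \vDash \psi \to \psi$ and $\N, t \vDash \neg \psi \to \neg \psi$. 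Hence both pointed models satisfy \emph{every} $\blacktriangle\psi$, and the Boolean cases propagate immediately.

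To finish, I would exhibit a separating $\ML$-formula: $\Diamond \top$ (equivalently $\neg \Box \bot$) is false at $s$ but true at $t$. Thus $\Diamond\top$ has no $\SNCL$-equivalent, yielding $\ML \not\preceq \SNCL$ on each of the four classes and, combined with the translation, $\SNCL \prec \ML$. I do not anticipate any real obstacle; the only care needed is the small bookkeeping check that the two tiny frames really do inhabit all four classes simultaneously, and the observation driving the induction is simply that $\blacktriangle$ cannot detect the presence or absence of successors whenever the point and its successors agree on the formula in question.
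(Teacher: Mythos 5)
Your proposal is correct and follows essentially the same route as the paper: the same translation $t$ for $\SNCL\preceq\ML$, and the same pair of models (a single isolated point versus a single reflexive point, agreeing on atoms) for the failure of the converse, with your separating formula $\Diamond\top$ just being the negation of the paper's $\Box\bot$. No substantive differences.
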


\begin{proof}
Define a translation $t$ from $\SNCL$ to \ML:
\[
\begin{array}{lll}
t(\top)&=&\top\\
t(p)&=&p\\
t(\neg\phi)&=&\neg t(\phi)\\
t(\phi\land\psi)&=&t(\phi)\land t(\psi)\\
t(\blacktriangle\phi)&=&(t(\phi)\to\Box t(\phi))\land(\neg t(\phi)\to\Box\neg t(\phi))\\
\end{array}
\]

It is clear from Fact~\ref{fact.one} that $t$ is a truth-preserving translation. Therefore \ML\ is at least as expressive as \SNCL.

Now consider the following pointed models $(\mathcal{M},s)$ and $(\mathcal{N},t)$, which can be distinguished by an \ML-formula $\Box\bot$, but cannot be distinguished by any $\SNCL$-formulas:

\medskip

$$
\xymatrix{\mathcal{M}:\ \ \ s:p\ar@(ur,ul) & & & \mathcal{N}:\ \ \ t:p}
$$

It is easy to check $\mathcal{M}$ and $\mathcal{N}$ are both symmetric, transitive, and Euclidean. By induction we prove that for all $\phi\in\SNCL$, $\mathcal{M},s\vDash\phi$ iff $\mathcal{N},t\vDash\phi$. The base cases and boolean cases are straightforward. For the case of $\blacktriangle\phi$, it is not hard to show that $\mathcal{M},s\vDash\blacktriangle\phi$ and $\mathcal{N},t\vDash\blacktriangle\phi$ (note that here we do not need to use the induction hypothesis), thus $\mathcal{M},s\vDash\blacktriangle\phi$ iff $\mathcal{N},t\vDash\blacktriangle\phi$, as desired.
\end{proof}

As for the case of $\mathcal{D}$-models, the result about the relative expressivity of $\SNCL$ and $\ML$ is same as previous, but the proof is much more sophisticated, which needs {\em simultaneous induction}.

\begin{proposition}\label{prop.lessexp-d}
\SNCL\ is less expressive than $\ML$ on the class of $\mathcal{D}$-models.
\end{proposition}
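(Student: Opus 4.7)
For $\SNCL \preceq \ML$ on the class of $\mathcal{D}$-models, the truth-preserving translation defined in the proof of Proposition~\ref{prop.lessexp} makes no use of the frame class and transports unchanged. The content of the proposition is therefore the strict direction: exhibiting two pointed serial models that are $\Box$-distinguishable but $\blacktriangle$-equivalent. The countermodels of Proposition~\ref{prop.lessexp} cannot be reused directly, since the dead-end world there violates seriality.

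My plan is to use the following pair of pointed serial models. Let $\M$ have worlds $\{s, a\}$ with $sRa$ and $aRa$, and valuation $s \vDash \neg p$, $a \vDash p$. Let $\N$ have worlds $\{t, b\}$ with $tRt$, $tRb$, and $bRb$, and valuation $t \vDash \neg p$, $b \vDash p$. Both are serial. The formula $\Box p$ separates them: at $s$ the unique successor $a$ satisfies $p$, so $\M, s \vDash \Box p$; at $t$ the successor $t$ itself falsifies $p$, so $\N, t \nvDash \Box p$.

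To establish $(\M, s) \equkw (\N, t)$, I would prove by simultaneous induction on $\phi \in \SNCL$ the two statements (i) $\M, s \vDash \phi$ iff $\N, t \vDash \phi$, and (ii) $\M, a \vDash \phi$ iff $\N, b \vDash \phi$. The atomic and Boolean cases are immediate from the valuations. For $\phi = \blacktriangle\psi$, the semantic clause at $s$ reduces to $s$ and its unique successor $a$ agreeing on $\psi$, while at $t$ the reflexive arc $tRt$ contributes vacuously and the clause reduces to $t$ and $b$ agreeing on $\psi$; applying (i) and (ii) to $\psi$ makes these two biconditions equivalent. At the self-loop worlds $a$ and $b$, every $\blacktriangle\psi$ holds trivially, so (ii) is automatic at the inductive step. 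The main obstacle is precisely why a single-target induction on (i) alone fails: the truth of $\blacktriangle\psi$ at $t$ is governed by how $\psi$ behaves at the successor $b$, and relating this to what happens at the successor $a$ of $s$ requires (ii) as a parallel inductive commitment, which is the simultaneous-induction ingredient alluded to in the statement.
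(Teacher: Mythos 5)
Your proof is correct and follows essentially the same strategy as the paper's: the translation from Proposition~\ref{prop.lessexp} handles $\SNCL\preceq\ML$, and the strictness is witnessed by a pair of serial pointed models separated by a $\Box$-formula but shown $\blacktriangle$-equivalent via a simultaneous induction on two linked pairs of worlds. Your concrete countermodels differ from the paper's (which uses $\Box\Box p$ as the separating formula and two-world models with mutual accessibility), but they work, and the inductive argument is the same in structure and correctly identifies why the second inductive claim is needed.
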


\begin{proof}
By the translation $t$ in the proof of Proposition \ref{prop.lessexp}, we have $\SNCL\preceq\ML$.

Consider the following pointed models $(\M,s)$ and $(\N,s')$, which can be distinguished by an $\ML$-formula $\Box\Box p$, but cannot be distinguished by any $\SNCL$-formulas:
$$
\xymatrix{\mathcal{M}:\ \ \ s:p\ar[rr]&&t:\neg p\ar@(ur,ul)\ar[ll] & & \mathcal{N}:\ \ \ s':p\ar[rr]&& t':\neg p\ar[ll]}
$$

It is not hard to see that $\M$ and $\N$ are both serial. By induction on $\phi\in\SNCL$, we show simultaneously that for all $\phi$, (i) $\M,s\vDash\phi$ iff $\N,s'\vDash\phi$, and (ii) $\M,t\vDash\phi$ iff $\N,t'\vDash\phi$. The nontrivial case is $\blacktriangle\phi$.

For (i), we have the following equivalences:
\[
\begin{array}{lll}
\M,s\vDash\blacktriangle\phi&\stackrel{\text{semantics}}\Longleftrightarrow& s\vDash\phi\text{ iff }t\vDash\phi\\
&\stackrel{\text{IH for (i)}}\Longleftrightarrow& s'\vDash\phi\text{ iff }t\vDash\phi\\
&\stackrel{\text{(ii)}}\Longleftrightarrow&s'\vDash\phi\text{ iff }t'\vDash\phi\\
&\stackrel{\text{semantics}}\Longleftrightarrow& \N,s'\vDash\blacktriangle\phi\\
\\
\M,t\vDash\blacktriangle\phi&\stackrel{\text{semantics}}\Longleftrightarrow& s\vDash\phi\text{ iff }t\vDash\phi\\
&\stackrel{\text{IH for (ii)}}\Longleftrightarrow& s\vDash\phi\text{ iff }t'\vDash\phi\\
&\stackrel{\text{(i)}}\Longleftrightarrow&s'\vDash\phi\text{ iff }t'\vDash\phi\\
&\stackrel{\text{semantics}}\Longleftrightarrow& \N,t'\vDash\blacktriangle\phi\\
\end{array}\]

Therefore, $(\M,s)$ and $(\N,s')$ cannot be distinguished by any $\SNCL$-formulas.
\end{proof}

However, on the $\mathcal{T}$-models, the situation is different.
\begin{proposition}\label{prop.equallyexp}
\SNCL\ and \ML\ are equally expressive on the class of $\mathcal{T}$-models.
\end{proposition}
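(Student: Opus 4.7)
The plan is to establish the two directions separately. The direction $\SNCL \preceq \ML$ is already given by the translation $t$ defined in the proof of Proposition \ref{prop.lessexp}, which works uniformly across all model classes and in particular on $\mathcal{T}$-models. So the real work is to show $\ML \preceq \SNCL$ on reflexive models.

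For that direction, I would define a translation $r : \ML \to \SNCL$ recursively by $r(\top)=\top$, $r(p)=p$, $r(\neg\phi)=\neg r(\phi)$, $r(\phi\land\psi)=r(\phi)\land r(\psi)$, and crucially
\[
r(\Box\phi) \ = \ r(\phi) \land \blacktriangle r(\phi).
\]
The intuition is that on a reflexive model, $\Box\chi$ implies $\chi$, and Proposition \ref{prop.ad} (the AE schema) says that once $\chi$ holds, $\Box\chi$ and $\blacktriangle\chi$ become interchangeable. So $\Box\chi$ should coincide with $\chi \land \blacktriangle\chi$ on $\mathcal{T}$-models.

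I would then prove by induction on $\phi \in \ML$ that for every $\mathcal{T}$-model $\M$ and every world $s$, $\M,s \vDash \phi$ iff $\M,s \vDash r(\phi)$. The atomic and Boolean cases are routine. For the modal case, first apply the induction hypothesis pointwise to conclude $\M,s \vDash \Box\phi$ iff $\M,s \vDash \Box r(\phi)$. Then reduce $\Box r(\phi)$ to $r(\phi) \land \blacktriangle r(\phi)$: for the forward direction, reflexivity gives $\M,s \vDash r(\phi)$, and then Proposition \ref{prop.ad} upgrades $\Box r(\phi)$ to $\blacktriangle r(\phi)$; for the backward direction, Proposition \ref{prop.ad} immediately yields $\Box r(\phi)$ from $r(\phi) \land \blacktriangle r(\phi)$.

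There is no serious obstacle here: the whole argument hinges on Proposition \ref{prop.ad} together with reflexivity, both of which are readily available. The only small care needed is to make sure the induction hypothesis is applied at the right level (to $r(\phi)$ rather than $\phi$) in the $\Box$-case, but this is automatic from the definition of $r$.
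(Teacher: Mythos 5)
Your proposal is correct and is essentially the paper's own proof: the paper also uses the translation $t$ for $\SNCL\preceq\ML$ and defines a reverse translation $t'$ with $t'(\Box\phi)=\blacktriangle t'(\phi)\land t'(\phi)$, justified by reflexivity (i.e.\ the validity of $\Box\phi\to\phi$ on $\mathcal{T}$-models) together with Proposition~\ref{prop.ad}. You have merely spelled out the induction in more detail than the paper does.
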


\begin{proof}
By the translation $t$ in the proof of Prop.~\ref{prop.lessexp}, we have $\SNCL\preceq\ML$. Besides, define another translation $t'$ from \ML\ to \SNCL, where the base cases and Boolean cases are similar to the corresponding cases for $t$, and $t'(\Box\phi)=\blacktriangle t'(\phi) \land t'(\phi)$. It is straightforward to show that $t'$ is a truth-preserving translation, due to the validity of $\Box\phi\to\phi$ and Prop.~\ref{prop.ad}. Thus $\ML\preceq\SNCL$, and therefore $\SNCL\equiv\ML$.
\end{proof}

\subsection{$\SNCL$ vs. $\CLtwo$}

We can also compare the relative expressivity of \SNCL\ and \CLtwo.
\begin{proposition}\label{prop.exp.clncl}
\CLtwo\ is less expressive than \SNCL\ on the class of $\mathcal{K}$-models, $\mathcal{D}$-models, $\mathcal{B}$-models, $4$-models, $5$-models.
\end{proposition}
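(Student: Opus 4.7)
The plan is to prove $\CLtwo\preceq\SNCL$ and $\SNCL\not\preceq\CLtwo$ separately, on each of the listed frame classes.

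For $\CLtwo\preceq\SNCL$ I would define a recursive truth-preserving translation $t\colon\CLtwo\to\SNCL$ that is the identity on atoms, commutes with the Boolean connectives, and on $\triangle\phi$ produces an $\SNCL$-formula equivalent to ``all $R$-successors agree on $t(\phi)$''. The guiding identities are Fact~\ref{fact.one} ($\triangle\phi\leftrightarrow\Box\phi\lor\Box\neg\phi$) and Proposition~\ref{prop.ad} ($\phi\to(\Box\phi\leftrightarrow\blacktriangle\phi)$): when $s$'s value of $\phi$ matches the common value on $R(s)$ this ``aligned'' case collapses to $\blacktriangle t(\phi)$, while the ``opposite-uniformity'' case (where $s\vDash\phi$ but $\Box\neg\phi$, or dually) must be captured by a nested $\blacktriangle$-formula built from $t(\phi)$ and $\blacktriangle t(\phi)$. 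Correctness is verified by case analysis on the value of $t(\phi)$ at $s$ and on whether $R(s)$ is $t(\phi)$-uniform, after which the overall translation follows by induction on $\phi$.

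For $\SNCL\not\preceq\CLtwo$ I plan to exhibit, in each frame class, a pair of pointed models that are $\CLtwo$-equivalent but separated by $\blacktriangle p$. The uniform template is $\M$ a single world $s\vDash p$ with a self-loop, and $\N$ consisting of $s'\vDash p$ with exactly one $R$-successor $t'\vDash\neg p$, adding whatever edges the frame class forces: a self-loop on $t'$ for the $\mathcal{D}$-, $4$- and $5$-classes, and the converse edge $t'Rs'$ for the $\mathcal{B}$-class. In each case the distinguished world has exactly one $R$-successor, so every $\triangle\psi$ is trivially true there; combined with the matching of $p$ at both points, a straightforward induction on $\CLtwo$-formulas shows $(\M,s)$ and $(\N,s')$ are $\CLtwo$-equivalent. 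But $\blacktriangle p$ is true at $(\M,s)$ (self-loop agreement) and false at $(\N,s')$ (since $s'$ and $t'$ disagree on $p$), so $\SNCL\not\preceq\CLtwo$.

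The hard part will be the $\triangle$-clause of the translation: $\blacktriangle$ always relates $s$ to $R(s)$, while $\triangle$ is oblivious to the value at $s$, so the opposite-uniformity configuration has no obvious one-line $\blacktriangle$-expression. Identifying the right nested $\SNCL$-gadget is the only substantial piece of work; the remaining steps---induction for the first direction, and verification that the proposed witnesses in the second direction respect seriality, symmetry, transitivity, and the Euclidean property---are routine.
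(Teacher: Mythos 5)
The second half of your plan---refuting $\SNCL\preceq\CLtwo$---is correct and is essentially the paper's own argument: the paper likewise exhibits a root with exactly one successor that agrees (respectively disagrees) with it on $p$, so that every $\triangle\psi$ is vacuously true at the root while $\blacktriangle p$ separates the two models; your single reflexive $p$-point versus $s'\to t'$ (with the extra loops or converse edge that each frame class forces) is the same device, and the verifications you defer really are routine.

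The first half is where the proposal breaks down, and not merely because the gadget is hard to find: no such gadget exists. A translation clause for $\triangle\phi$ would make $\triangle p$ equivalent, over $\mathcal{K}$-models, to some $\SNCL$-formula, hence invariant under $\blacktriangle$-bisimulation by Prop.~\ref{prop.invariance}. It is not. Let $\M$ have worlds $s\vDash p$, $a\vDash\neg p$ and the single edge $sRa$, and let $\N$ have worlds $s',t'\vDash p$, $b,c\vDash\neg p$ and edges $s'R't'$, $s'R'b$, $t'R'c$. Then $Z=\{(s,s'),(s',t'),(a,b),(a,c),(b,c)\}$ is a $\blacktriangle$-bisimulation on the disjoint union: the crucial pair is $(s',t')$, which absorbs the extra $p$-successor $t'$ of $s'$, since ($\blacktriangle$-Forth)/($\blacktriangle$-Back) only require matching successors \emph{not} $Z$-related to the current world. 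Hence $(\M,s)\kwbis(\N,s')$ and so $(\M,s)\equkw(\N,s')$; alternatively, a direct simultaneous induction shows every $\SNCL$-formula takes the same value at $s,s',t'$ and the same value at $a,b,c$. Yet $\M,s\vDash\triangle p$ (its unique successor is a $\neg p$-world) while $\N,s'\nvDash\triangle p$ (it has both a $p$- and a $\neg p$-successor). So the ``opposite-uniformity'' configuration $\phi\land\Box\neg\phi$ that you correctly isolate as the obstruction is genuinely inexpressible, and $\CLtwo\not\preceq\SNCL$ on $\mathcal{K}$-models. Be aware that the paper's own proof of this proposition also establishes only $\SNCL\not\preceq\CLtwo$ and is silent on the $\preceq$ direction; what is actually provable is that one-directional failure, not ``less expressive'' in the paper's official sense.
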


\begin{proof}

For the cases of $\mathcal{K},\mathcal{D},4,5$, consider the following models:
\medskip
$$
\xymatrix{\mathcal{M}\ \ \ s:p\ar[rr]& & p\ar@(ul,ur)&&& \mathcal{N}\ \ \ t:p\ar[rr]&& \neg p\ar@(ul,ur)}
$$
It is not hard to show that $\mathcal{M},s\vDash\blacktriangle p$ but $\mathcal{N},t\nvDash\blacktriangle p$. Therefore $\blacktriangle p$ can distinguish the two models.

However, note that both of $s$ and $s'$ satisfy the same proposition variables, and have only one successor. Then we can show by induction that, for any $\phi\in\CLtwo$, $\mathcal{M},s\vDash\phi$ iff $\mathcal{N},t\vDash\phi$. This means that the two models cannot be distinguished by any formulas of \CLtwo.

\medskip

For the case of $\mathcal{B}$-models, consider the following models:
\medskip
$$
\xymatrix{\mathcal{M'}\ \ \ s':p\ar[rr]& & p\ar[ll]&&& \mathcal{N'}\ \ \ t':p\ar[rr]&& \neg p\ar[ll]}
$$
Similarly, we can show that $\mathcal{M'},s'\vDash\blacktriangle p$ but $\mathcal{N'},t'\nvDash\blacktriangle p$; however, $\mathcal{M'},s'\vDash\phi$ iff $\mathcal{N'},t'\vDash\phi$ for any $\phi\in\CLtwo$.
\end{proof}

\weg{It is shown in \cite{DBLP:journals/sLogica/Demri97} that on the class of $\mathcal{T}$-models, \ML\ and \CLtwo\ are equally expressive. By Prop.~\ref{prop.equallyexp}, it is immediate that}
\begin{proposition}\label{prop.exp.t}
\SNCL\ and \CLtwo\ are equally expressive on the class of $\mathcal{T}$-models.
\end{proposition}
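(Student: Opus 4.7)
The plan is to prove the stronger pointwise fact that on $\mathcal{T}$-models the two modalities are outright interchangeable, namely $\vDash_{\mathcal{T}} \blacktriangle\phi \leftrightarrow \triangle\phi$ for every $\phi$. The forward direction already holds on every frame by Proposition~\ref{prop.stronger}. For the converse, let $\M$ be a $\mathcal{T}$-model and fix $s$ with $\M, s \vDash \triangle\phi$; Fact~\ref{fact.one} gives $s \vDash \Box\phi$ or $s \vDash \Box\neg\phi$, and reflexivity forces $s \vDash \phi$ in the first case and $s \vDash \neg\phi$ in the second. In either case one of the conjuncts $(\phi \to \Box\phi)$ and $(\neg\phi \to \Box\neg\phi)$ holds trivially and the other holds by the hypothesis, so Fact~\ref{fact.one} yields $s \vDash \blacktriangle\phi$.

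Once the collapse is in hand, I would define two translations that act as the identity on atoms and commute with Booleans, simply swapping the modality: $\tau : \CLtwo \to \SNCL$ with $\tau(\triangle\psi) = \blacktriangle\tau(\psi)$, and symmetrically $\sigma : \SNCL \to \CLtwo$ with $\sigma(\blacktriangle\psi) = \triangle\sigma(\psi)$. A routine induction on formula complexity shows each is truth-preserving on every $\mathcal{T}$-model $\M$. In the modal step for $\tau$ at $\triangle\psi$, the inductive hypothesis says $\psi$ and $\tau(\psi)$ agree at every world of $\M$, hence $\M, s \vDash \triangle\psi$ iff $\M, s \vDash \triangle\tau(\psi)$, which by the collapse is iff $\M, s \vDash \blacktriangle\tau(\psi)$, i.e.\ iff $\M, s \vDash \tau(\triangle\psi)$. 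The step for $\sigma$ is entirely symmetric. This yields $\CLtwo \preceq \SNCL$ and $\SNCL \preceq \CLtwo$ on $\mathcal{T}$-models, so $\SNCL \equiv \CLtwo$ there.

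The only genuinely novel content is the collapse $\blacktriangle \leftrightarrow \triangle$ on $\mathcal{T}$-models, and that reduces to the two-line case split sketched above, using reflexivity once per case; I do not anticipate any real obstacle. (Alternatively, one could chain Proposition~\ref{prop.equallyexp} with the known equivalence $\ML \equiv \CLtwo$ on $\mathcal{T}$-models, but the direct route via the collapse is shorter and self-contained.)
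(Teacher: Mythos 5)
Your proposal is correct and follows essentially the same route as the paper: both reduce the claim to the pointwise collapse $\vDash_{\mathcal{T}}\blacktriangle\phi\leftrightarrow\triangle\phi$, with the forward direction imported from Proposition~\ref{prop.stronger} and the converse obtained by a one-line use of reflexivity (the paper argues by contradiction via $sRs$ giving two disagreeing successors, you argue directly from $\Box\phi\lor\Box\neg\phi$; these are trivial variants). Your explicit spelling-out of the modality-swapping translations and the induction is a fine, if slightly more detailed, rendering of what the paper leaves implicit in ``we only need to show''.
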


\begin{proof}
We only need to show that, on the class of $\mathcal{T}$-models, $\vDash\blacktriangle\phi\leftrightarrow\Delta\phi$. The validity of $\blacktriangle\phi\to\Delta\phi$ was shown in Prop.~\ref{prop.stronger}. For the validity of the other direction, given any reflexive model $\M=\lr{S,R,V}$ and any $s\in S$, suppose towards contradiction that $\M,s\vDash\Delta\phi$ but $s\nvDash\blacktriangle\phi$. Then from the second supposition, there exists $t$ such that $sRt$ and $(s\vDash\phi\not\Leftrightarrow t\vDash\phi)$. Since $R$ is reflexive, we have $sRs$. We have thus found two successors of $s$ which disagree on the value of $\phi$, hence $s\nvDash\Delta\phi$, contrary to the first supposition.
\end{proof}

From Propositions \ref{prop.lessexp}-\ref{prop.exp.t}, we obtain the results on the relative expressivity of $\CLtwo$ and $\ML$, which was shown in \cite[Section 3.1]{Fanetal:2015}.
\begin{corollary}
\CLtwo\ is less expressive than \ML\ on the class of $\mathcal{K}$-models, $\mathcal{D}$-models, $4$-models, $\mathcal{B}$-models, $5$-models, but they are equally expressive on the class of $\mathcal{T}$-models.
\end{corollary}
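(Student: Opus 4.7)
The plan is to derive this corollary purely by chaining the relative expressivity results already proved for $\SNCL$ as an intermediate logic, together with the transitivity of $\preceq$. No new distinguishing models need to be constructed.

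For each of the classes $\mathcal{K}, \mathcal{D}, 4, \mathcal{B}, 5$, I would first combine Proposition \ref{prop.exp.clncl} (which gives $\CLtwo \preceq \SNCL$) with Propositions \ref{prop.lessexp} and \ref{prop.lessexp-d} (which give $\SNCL \preceq \ML$), yielding $\CLtwo \preceq \ML$ by transitivity of $\preceq$. For strictness, I would invoke the general principle that if $\mathcal{L}_1 \preceq \mathcal{L}_2$ and $\mathcal{L}_3 \not\preceq \mathcal{L}_2$, then $\mathcal{L}_3 \not\preceq \mathcal{L}_1$ (otherwise composing the two translations would give $\mathcal{L}_3 \preceq \mathcal{L}_2$). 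Applied here with $\mathcal{L}_1 = \CLtwo$, $\mathcal{L}_2 = \SNCL$, $\mathcal{L}_3 = \ML$, the fact that $\ML \not\preceq \SNCL$ on each of these classes (Propositions \ref{prop.lessexp} and \ref{prop.lessexp-d}) yields $\ML \not\preceq \CLtwo$ on each class. Concretely, the witnessing pointed models from those propositions remain $\CLtwo$-inequivalent only as necessary: they are $\SNCL$-equivalent and hence a fortiori $\CLtwo$-equivalent, yet separated by an $\ML$-formula, so the same models witness $\ML \not\preceq \CLtwo$.

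For the class of $\mathcal{T}$-models, I would chain Proposition \ref{prop.exp.t} ($\CLtwo \equiv \SNCL$) with Proposition \ref{prop.equallyexp} ($\SNCL \equiv \ML$) to conclude $\CLtwo \equiv \ML$ by the transitivity of $\equiv$.

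There is essentially no obstacle here; the only thing worth verifying in passing is the lifting principle for strictness, which is a one-line observation about composition of truth-preserving translations. The content of the corollary is entirely absorbed in the earlier propositions, with $\SNCL$ acting as a bridge between $\CLtwo$ and $\ML$.
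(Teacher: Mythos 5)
Your proposal is correct and is exactly the paper's intended derivation: the paper gives no separate proof, stating only that the corollary follows from Propositions \ref{prop.lessexp}--\ref{prop.exp.t} by using $\SNCL$ as a bridge, which is precisely your chaining argument with the strictness-lifting observation made explicit.
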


\weg{\subsection{$\SNCL$ vs. $\LEA$}}

\section{Frame definability}\label{sec.framedef}
As have been shown, all of the five basic frame properties can be captured by standard modal logic, e.g., the property of reflexivity can be captured by the $\ML$-formula $\Box p\to p$, refer to e.g. \cite{blackburnetal:2001}. It is also shown in \cite[Corollary 4.4]{DBLP:journals/ndjfl/Zolin99} that these five basic frame properties cannot be defined by $\CLtwo$-formulas. In this section, we do the same job for the logic $\SNCL$. It turns out that the results are inbetween: some of the five basic frame properties are undefinable in $\SNCL$, some is definable.

\begin{definition}[Frame definability] Let $\mathbb{F}$ be a class of frames. We say that $\mathbb{F}$ is {\em definable} in $\SNCL$, if there exists $\Gamma\subseteq \SNCL$ that defines it, viz., for any $\mathcal{F}$, $\mathcal{F}$ is in $\mathbb{F}$ iff $\mathcal{F}\vDash\Gamma$. In this case we also say $\Gamma$ defines the property of $\mathbb{F}$. If $\Gamma$ is a singleton set (e.g. $\{\phi\}$), then we write $\mathcal{F}\vDash\phi$ for $\mathcal{F}\vDash\{\phi\}$.
\end{definition}

Let $\mathcal{F}=\langle S,R\rangle$. Say $R$ is a {\em coreflexive relation}\footnote{The terminology {\em coreflexive relation} is taken from the website \url{http://en.wikipedia.org/wiki/Coreflexive_relation}.}, if for all $s,t\in S$, $sRt$ implies $s=t$; intuitively, every point in $S$ can at most `see' itself. Say $\mathcal{F}$ is a coreflexive frame, if $R$ is coreflexive.

\begin{proposition}\label{prop.coref}
Let $\mathcal{F},\mathcal{F}'$ be both coreflexive frames. Then given any $\phi\in\SNCL$, we have $\mathcal{F}\vDash\phi$ iff $\mathcal{F}'\vDash\phi$.
\end{proposition}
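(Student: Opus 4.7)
The plan hinges on the observation that in any model $\M = \langle S, R, V \rangle$ based on a coreflexive frame and at any $s \in S$, the formula $\blacktriangle\phi$ is automatically true for every $\phi \in \SNCL$. Indeed, the only possible $R$-successor of $s$ is $s$ itself: if $s \vDash \phi$ then this successor (when it exists) satisfies $\phi$, and if $s \nvDash \phi$ then it satisfies $\neg\phi$, so both clauses in the semantics of $\blacktriangle$ are met (including the vacuous case where $s$ has no successor at all).

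I would then define a translation $(\cdot)^{\ast} : \SNCL \to \CPL$ recursively by $\top^{\ast} = \top$, $p^{\ast} = p$, $(\neg\phi)^{\ast} = \neg\phi^{\ast}$, $(\phi\land\psi)^{\ast} = \phi^{\ast}\land\psi^{\ast}$, and $(\blacktriangle\phi)^{\ast} = \top$. A straightforward induction on $\phi$, using the observation above for the modal case, shows that whenever $\M$ is based on a coreflexive frame and $s \in \M$, we have $\M, s \vDash \phi$ iff $\M, s \vDash \phi^{\ast}$.

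To finish, note that $\mathcal{F} \vDash \phi$ iff $\mathcal{F} \vDash \phi^{\ast}$. Since $\phi^{\ast}$ is purely propositional, $\mathcal{F} \vDash \phi^{\ast}$ holds iff $\phi^{\ast}$ is a tautology of classical propositional logic: picking any $s \in \mathcal{F}$ (the state set is nonempty) and any Boolean assignment $a$, the valuation that sends $p$ to $\{s\}$ when $a(p)=1$ and to $\emptyset$ otherwise realises $a$ at $s$, so every Boolean assignment is witnessed somewhere in some model over $\mathcal{F}$. Being a propositional tautology is a condition independent of the frame, so $\mathcal{F} \vDash \phi$ iff $\phi^{\ast}$ is a tautology iff $\mathcal{F}' \vDash \phi$.

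There is no serious technical obstacle; the only point needing a moment of care is the modal case of the induction, where one must check that the semantic clause for $\blacktriangle\phi$ is satisfied at $s$ regardless of whether $s$ has a successor, and this is exactly what coreflexivity delivers uniformly. Once that is noted, the whole argument collapses to the remark that $\blacktriangle$ is semantically inert on coreflexive frames.
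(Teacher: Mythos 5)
Your proof is correct, and it takes a genuinely different route from the paper's. Both arguments rest on the same key observation --- that on a coreflexive frame $\blacktriangle\psi$ holds vacuously at every point, whether or not the point has a successor --- but they exploit it differently. The paper transfers a countermodel directly: given $\M,s\nvDash\phi$ on $\mathcal{F}$, it copies the valuation of $s$ onto an arbitrary point $s'$ of $\mathcal{F}'$ and proves by induction that $s$ and $s'$ agree on all $\SNCL$-formulas, so the countermodel migrates. You instead factor through a translation $(\cdot)^{\ast}$ into $\CPL$ that sends $\blacktriangle\psi$ to $\top$, prove $\phi\leftrightarrow\phi^{\ast}$ pointwise on coreflexive models, and then reduce frame validity of $\phi^{\ast}$ to propositional tautologyhood, which is manifestly frame-independent. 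Your approach yields slightly more information: it shows that $\SNCL$ collapses to propositional logic on coreflexive frames and characterizes exactly which $\SNCL$-formulas are valid there (those whose translation is a tautology), whereas the paper's argument only establishes that all coreflexive frames validate the same formulas. The paper's transfer argument is marginally shorter and fits the bisimulation-flavoured style used elsewhere in the text. The one step in your write-up that deserves the care you gave it is indeed the tautology characterization --- the observation that any Boolean assignment can be realized at a chosen point by a suitable valuation, together with the fact that truth of a purely propositional formula at a point depends only on the valuation at that point --- and you handle it correctly.
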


\begin{proof}
Let $\mathcal{F}=\langle S,R\rangle,\mathcal{F}'=\langle S',R'\rangle$ be both coreflexive frames, and let $\phi\in\SNCL$.

Suppose that $\mathcal{F}\nvDash\phi$. Then there exists $\M=\langle \mathcal{F},V\rangle$ and $s\in S$ such that $\M,s\nvDash\phi$. Because $S'$ is nonempty, we may assume that $s'\in S'$. Define a valuation $V'$ on $\mathcal{F}'$ as $p\in V'(s')$ iff $p\in V(s)$ for all $p\in\BP$. Since $\mathcal{F}$ and $\mathcal{F}'$ are both coreflexive, both $s$ and $s'$ can at most see itself. By induction on $\psi\in\SNCL$, we can show that for every $\psi\in\SNCL$, $\M,s\vDash\psi$ iff $\M',s'\vDash\psi$ (note that for the case of $\blacktriangle\phi$, we do not need the induction hypothesis here). Thus $\M',s'\nvDash\phi$, and therefore $\mathcal{F}'\nvDash\phi$. The converse is similar.
\end{proof}

\begin{proposition}
The frame properties of seriality, reflexivity, and endpointedness are not definable in $\SNCL$.
\end{proposition}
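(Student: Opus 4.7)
The plan is to exploit Proposition~\ref{prop.coref}: any two coreflexive frames validate exactly the same $\SNCL$-formulas. So if I can exhibit, for each of the three properties, two coreflexive frames that disagree on that property, then no set of $\SNCL$-formulas can separate them, and the property is not definable in $\SNCL$.

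Pleasantly, a single pair of frames handles all three cases at once. Consider the one-point frames
\[\mathcal{F}_1=\lr{\{s\},\emptyset}\quad\text{and}\quad\mathcal{F}_2=\lr{\{s\},\{(s,s)\}}.\]
Both are trivially coreflexive (in $\mathcal{F}_1$ there are no $R$-pairs at all, and in $\mathcal{F}_2$ the only $R$-pair is $(s,s)$). However, $\mathcal{F}_1$ is endpointed but neither serial nor reflexive, while $\mathcal{F}_2$ is reflexive and serial but not endpointed. Hence the two frames disagree on each of the three properties individually.

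Now suppose, for contradiction, that some $\Gamma\subseteq\SNCL$ defines seriality. Then $\mathcal{F}_2\vDash\Gamma$ since $\mathcal{F}_2$ is serial, while $\mathcal{F}_1\nvDash\Gamma$ since $\mathcal{F}_1$ is not. But Proposition~\ref{prop.coref} yields $\mathcal{F}_1\vDash\phi$ iff $\mathcal{F}_2\vDash\phi$ for every $\phi\in\SNCL$, a contradiction. Identical reasoning, swapping the roles of the two frames where necessary, rules out reflexivity and endpointedness.

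There is no real obstacle here: all the work has already been done in Proposition~\ref{prop.coref}, and the only creative step is picking a pair of coreflexive frames that separate the properties of interest. The only mild point of care is to make sure both frames really are coreflexive (as opposed to, say, picking a larger endpointed frame and a larger reflexive frame, which would still be coreflexive but would require checking); using the one-point witnesses above keeps this completely transparent.
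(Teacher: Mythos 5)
Your proof is correct and is essentially identical to the paper's: the paper also uses the one-point reflexive frame and the one-point irreflexive frame, invokes Proposition~\ref{prop.coref} to conclude they validate the same $\SNCL$-formulas, and then derives the contradiction for each of the three properties in turn.
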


\begin{proof}
We adjust the example in Prop.~\ref{prop.lessexp}, i.e., consider the following two frames:

$$
\xymatrix{\mathcal{F}:\hspace{-.2cm} & s\ar@(ur,ul) & & & \mathcal{F}':\hspace{-.2cm} & t}
$$

Both frames are coreflexive. By Prop.~\ref{prop.coref}, we have: for every $\phi\in\SNCL$, $\mathcal{F}\vDash\phi$ iff $\mathcal{F}'\vDash\phi$. Notice that $\mathcal{F}$ is serial (resp. reflexive), while $\mathcal{F}'$ is not; $\mathcal{F}'$ is an endpoint frame, but $\mathcal{F}$ is not.

The argument now continues as follows. Consider seriality: if seriality were defined by a set $\Gamma$ of $\SNCL$-formulas, then as $\mathcal{F}$ is serial, then $\mathcal{F}\vDash\Gamma$. As $\mathcal{F}$ and $\mathcal{F}'$ satisfy the same $\SNCL$-formulas, we should also have $\mathcal{F}'\vDash\Gamma$, thus $\mathcal{F}'$ should be serial. However, $\mathcal{F}'$ is not serial. Therefore, seriality is not definable in $\SNCL$.

The similar argument goes for the other cases.
\end{proof}

Transitivity and Euclidicity are also undefinable with $\SNCL$, though the proof is a little more tricky.
\begin{proposition}
The frame properties of transitivity and Euclidicity are not definable in $\SNCL$.
\end{proposition}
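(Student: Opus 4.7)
The plan is to imitate the strategy of the previous proposition by exhibiting two frames that are indistinguishable by $\SNCL$-formulas but differ in the target property. The coreflexive frames used before will not help here, since they are trivially transitive and Euclidean, so I need a new pair. A natural candidate is the pair on two worlds $\{s,t\}$: let $\mathcal{F}=\lr{\{s,t\},\{s,t\}\times\{s,t\}}$ be the total frame and $\mathcal{F}'=\lr{\{s,t\},\{(s,t),(t,s)\}}$ the ``swap'' frame. By inspection $\mathcal{F}$ is both transitive and Euclidean, while $\mathcal{F}'$ is neither: $sR't$ and $tR's$ but not $sR's$ defeats transitivity, while $sR't$ combined with $sR't$ but the absence of $tR't$ defeats Euclidicity.

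The key observation is that at each $w\in\{s,t\}$ the set of $R$-successors in $\mathcal{F}$ equals $R'(w)\cup\{w\}$, so the clause ``$w$ and every successor of $w$ agree on $\psi$'' unfolds, in both frames, to exactly the same statement, namely ``$s$ and $t$ agree on $\psi$'' (since $w$ trivially agrees with itself). I would then run a simultaneous induction on $\phi\in\SNCL$ to show that, for every valuation $V$, $(\mathcal{F},V),w\vDash\phi$ iff $(\mathcal{F}',V),w\vDash\phi$ for both $w\in\{s,t\}$. The atomic and Boolean cases are routine; the case $\phi=\blacktriangle\psi$ reduces, via the induction hypothesis for $\psi$, to the identical ``$s,t$ agree on $\psi$'' condition in both models.

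The final step is the standard contradiction argument. From the induction one gets that $\mathcal{F}\vDash\phi$ iff $\mathcal{F}'\vDash\phi$ for every $\phi\in\SNCL$. If transitivity were defined by some $\Gamma\subseteq\SNCL$, then $\mathcal{F}\vDash\Gamma$ would force $\mathcal{F}'\vDash\Gamma$, whence $\mathcal{F}'$ would have to be transitive, contradicting what was observed. The same argument handles Euclidicity.

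The main (and essentially only) difficulty is choosing the right pair of frames: the design goal is to make the two successor sets at each world \emph{equal up to the world itself}, so that the ``self-agreement'' clause inside $\blacktriangle$ becomes trivial and the two frames look identical to the $\blacktriangle$-operator. Once such a pair is in hand the induction and the definability argument are purely formal.
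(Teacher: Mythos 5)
Your proof is correct, and it takes a genuinely different --- and in fact sounder --- route than the paper's. The paper pairs the two-point cycle $\mathcal{F}$ (worlds $s,t$ with $sRt$, $tRs$ and no loops) against a single reflexive point $\mathcal{F}'$, and rests on the claim that both frames validate every formula of the form $\blacktriangle\phi$; but that claim fails for $\mathcal{F}$: with $V(p)=\{s\}$ the only successor $t$ of $s$ disagrees with $s$ on $p$, so $\mathcal{F}\nvDash\blacktriangle p$ while $\mathcal{F}'\vDash\blacktriangle p$, and the paper's two frames are actually separated by an $\SNCL$-formula. (The step from the paper's auxiliary claim $(\star)$ --- that $\phi$ is frame-valid at $s$ iff it is frame-valid at $t$ --- to ``$\mathcal{F}\vDash\blacktriangle\phi$'' swaps a quantifier over valuations with a biconditional and does not go through.) Your choice of frames avoids exactly this trap: by taking the total frame and the swap frame on the same carrier $\{s,t\}$, the successor sets at each world differ only by the world itself, and since the self-agreement clause inside $\blacktriangle$ is vacuous, $\blacktriangle\psi$ unfolds to the identical condition ``$s$ and $t$ agree on $\psi$'' in both frames under any fixed valuation. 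The simultaneous induction over both worlds then yields pointwise equivalence for every valuation, which is the stronger, correctly established fact one needs before running the standard definability contradiction; the verification that the total frame is transitive and Euclidean while the swap frame is neither is also right. So your argument not only differs from the paper's but repairs it.
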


\begin{proof}
Consider the following frames:
\[
\xymatrix{\mathcal{F}:\hspace{-.2cm} & s\ar[rr]&&t\ar[ll] & & & \mathcal{F}':\hspace{-.2cm} & s'\ar@(ur,ul)}
\]

We first show that ($\star$): for any $\phi\in\SNCL$, $\mathcal{F},s\vDash\phi$ iff $\mathcal{F},t\vDash\phi$. The proof is by induction on $\phi$. The nontrivial case is $\blacktriangle\phi$. In this case $\mathcal{F},s\vDash\blacktriangle\phi$ iff for all $\M$ based on $\mathcal{F}$, ($\M,s\vDash\phi$ iff $\M,t\vDash\phi$) iff $\mathcal{F},t\vDash\blacktriangle\phi$.

We have thus shown ($\star$), which implies $\mathcal{F}\vDash\blacktriangle\phi$. Besides, $\mathcal{F}'\vDash\blacktriangle\phi$. We can now show by induction on $\phi\in\SNCL$ that $\mathcal{F}\vDash\phi$ iff $\mathcal{F}'\vDash\phi$ (note that for the case of $\blacktriangle\phi$, we do not need the induction hypothesis here).

Observe that $\mathcal{F}'$ is transitive and Euclidean, while $\mathcal{F}$ is not ($sRt,tRs$ but it is not the case that $sRs$, thus $\mathcal{F}$ is not transitive; $sRt,sRt$ but it is not the case that $tRt$, thus $\mathcal{F}$ is not Euclidian).

The argument now continues as follows. Consider transitivity: if transitivity were defined by a set $\Gamma$ of $\SNCL$-formulas, then as $\mathcal{F}'$ is transitive, then $\mathcal{F}'\vDash\Gamma$. As $\mathcal{F}$ and $\mathcal{F}'$ satisfy the same $\SNCL$-formulas, we should also have $\mathcal{F}\vDash\Gamma$, thus $\mathcal{F}$ should be transitive. However, $\mathcal{F}$ is not transitive. Therefore, transitivity is not definable in $\SNCL$.

The argument is similar for the case of Euclidicity.
\end{proof}

\weg{\begin{proposition}
The frame property of transitivity is definable in $\SNCL$.
\end{proposition}

\begin{proof}
We claim that transitivity is defined by Axiom $\KwTr$, i.e., $$\mathcal{F}\vDash\forall x\forall y\forall z(xRy\land yRz\to xRz)\text{ iff }\mathcal{F}\vDash\blacktriangle p\to\blacktriangle\blacktriangle p.$$
By Proposition \ref{prop.validities}, we need only show the direction from right to left.

Suppose that $\mathcal{F}$ is not transitive, that is, there exist $s,t,u$ such that $sRt,tRu$ but not $sRu$. Define $V$ as a valuation on $\mathcal{F}$ such that $V(p)=\{w\mid sRw\}\cup\{s\}$. Now it is not hard to show that $s\vDash\blacktriangle p$. On the other hand, since $u\not$
\end{proof}}

In spite of so many undefinability results, we have the following definability results.
\begin{proposition}\label{prop.definable-b}
The property of symmetry is definable in $\SNCL$.
\end{proposition}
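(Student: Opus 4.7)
The plan is to show that symmetry is defined by the $\SNCL$-analogue of the Brouwer axiom, namely $p \to \blacktriangle(\blacktriangle p \to p)$ (presumably the axiom $\KwB$ hinted at by the notation $\SLCLB$). For the direction from symmetry to frame validity, I would fix a model on a symmetric frame $\mathcal{F}$ and a point $s$ with $s\vDash p$; since $s\vDash p$ immediately gives $s\vDash\blacktriangle p\to p$, the semantics of $\blacktriangle$ reduces the task to showing $\Box(\blacktriangle p\to p)$ at $s$. I would then take an arbitrary $R$-successor $t$ of $s$, assume $t\vDash\blacktriangle p$, and try to force $t\vDash p$: if instead $t\vDash\neg p$, then by Fact~\ref{fact.one} the formula $\blacktriangle p$ at $t$ forces every $R$-successor of $t$ to satisfy $\neg p$, but symmetry gives $tRs$ and hence $s\vDash\neg p$, contradicting $s\vDash p$.

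For the converse, suppose $\mathcal{F}$ is not symmetric, so there exist $s,t$ with $sRt$ but not $tRs$. Note $s\neq t$, since $s=t$ would turn $sRt$ into $tRs$. Put $V(p)=\{s\}$; then $s\vDash p$ and $t\vDash\neg p$, while every element of $R(t)$ is distinct from $s$ and hence satisfies $\neg p$. By Fact~\ref{fact.one} this yields $t\vDash\blacktriangle p$, so $t\not\vDash\blacktriangle p\to p$. Hence $s\not\vDash\Box(\blacktriangle p\to p)$, and since $s\vDash\blacktriangle p\to p$ (because $s\vDash p$), this forces $s\not\vDash\blacktriangle(\blacktriangle p\to p)$. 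Together with $s\vDash p$, the target formula $p\to\blacktriangle(\blacktriangle p\to p)$ is refuted at $s$.

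The main conceptual obstacle is simply spotting the right candidate formula: all the non-definability proofs earlier in this section worked by exhibiting pairs of frames indistinguishable by any $\SNCL$-formula, so a genuine positive definition here has to exploit the strong-noncontingency clause governing $\neg\phi$ in a way that such reductions cannot suppress. Once the formula is fixed, the only mildly delicate step is the verification $t\vDash\blacktriangle p$ in the converse, which simultaneously uses $s\neq t$ and $s\notin R(t)$; the rest is routine semantic unpacking.
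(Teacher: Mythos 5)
Your proposal is correct and follows essentially the same route as the paper: the same defining formula $p\to\blacktriangle(\blacktriangle p\to p)$, the same use of symmetry to pull $\neg p$ back to $s$ via $tRs$ in the forward direction, and the identical countermodel $V(p)=\{s\}$ for the converse. The only cosmetic difference is that you argue the forward direction directly while the paper phrases it as a reductio; the content is the same.
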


\begin{proof}
Let $\mathcal{F}=\langle S,R\rangle$ be a frame. We claim that symmetry is defined by the $\SNCL$-formula $p\to\blacktriangle(\blacktriangle p\to p)$, i.e., \weg{that symmetry is definable by Axiom $\KwB$, i.e.,} $$\mathcal{F}\vDash\forall x\forall y(xRy\to yRx)\text{ iff }\mathcal{F}\vDash p\to\blacktriangle(\blacktriangle p\to p).$$
%By Proposition \ref{prop.validities}, we need only show the direction from right to left.

Suppose that $\mathcal{F}$ is symmetric. Given any model $\mathcal{M}$ based on $\mathcal{F}$ and any $s\in S$, assume towards contradiction that $\mathcal{M},s\vDash p$ but $\mathcal{M},s\nvDash\blacktriangle(\blacktriangle p\to p)$. Then $s\vDash\blacktriangle p\to p$, and thus there exists $t\in S$ such that $sRt$ and $\mathcal{M},t\vDash\neg(\blacktriangle p\to p)$, hence $t\vDash\blacktriangle p$ and $t\vDash\neg p$, i.e., $t\vDash \neg p\land\blacktriangle \neg p$. Since $sRt$ and $R$ is symmetric, we have $tRs$. By semantics of $\blacktriangle$, we conclude that $s\vDash \neg p$, contrary to the assumption.

Suppose that $\mathcal{F}$ is not symmetric, that is, there are $s,t$ such that $sRt$ but not $tRs$. Clearly, $s\neq t$. Define a valuation $V$ on $\mathcal{F}$ as $V(p)=\{s\}$. Then $\lr{\mathcal{F},V},s\vDash p$, thus $s\vDash \blacktriangle p\to p$. Since $t\neq s$, and $s$ is not reachable from $t$, it follows that $t\nvDash p$, and for all $u$ such that $tRu$, we have $u\neq s$, thus $u\nvDash p$. This implies $t\vDash \blacktriangle p$, and then $t\nvDash \blacktriangle p\to p$, and hence $s\nvDash\blacktriangle (\blacktriangle p\to p)$. Therefore $\lr{\mathcal{F},V},s\nvDash p\to\blacktriangle(\blacktriangle p\to p)$, which implies that $\mathcal{F}\nvDash p\to\blacktriangle(\blacktriangle p\to p)$.
\end{proof}

\begin{proposition}
The frame property of coreflexivity is definable in $\SNCL$.
\end{proposition}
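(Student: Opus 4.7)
The plan is to show that coreflexivity is defined by the single $\SNCL$-formula $p \to \blacktriangle p$, which is the natural $\blacktriangle$-analogue of the classical modal definition $p \to \Box p$ of coreflexivity. The almost-equivalent schema (Prop.~\ref{prop.ad}) makes this plausible: on the premise that $p$ already holds, $\Box p$ and $\blacktriangle p$ coincide, so one expects $\blacktriangle$ to do just as much work as $\Box$ here.

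For the forward direction, I would assume $\mathcal{F} = \lr{S,R}$ is coreflexive. Given any model $\M$ based on $\mathcal{F}$ and any $s \in S$ with $\M,s \vDash p$, every $R$-successor $t$ of $s$ satisfies $t = s$, hence $\M,t \vDash p$. The second conjunct in the defining clause of $\blacktriangle p$ is vacuous (since $\M,s \vDash p$), so $\M,s \vDash \blacktriangle p$ follows immediately, and therefore $\mathcal{F} \vDash p \to \blacktriangle p$.

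For the converse I would argue by contraposition: supposing $\mathcal{F}$ is not coreflexive, there are $s \neq t$ with $sRt$. Define a valuation $V$ on $\mathcal{F}$ by $V(p) = \{s\}$. Then $\lr{\mathcal{F},V},s \vDash p$ while $\lr{\mathcal{F},V},t \nvDash p$, so the first conjunct $p \to \Box p$ inside $\blacktriangle p$ fails at $s$, giving $\lr{\mathcal{F},V},s \nvDash p \to \blacktriangle p$ and hence $\mathcal{F} \nvDash p \to \blacktriangle p$.

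The argument is direct and presents no real obstacle; the only subtlety worth flagging is that although $\triangle\phi \to \blacktriangle\phi$ is not valid on models (Prop.~\ref{prop.stronger}), at the frame level the freedom to choose a singleton valuation is already enough to make $\blacktriangle$ discriminate between distinct worlds, so $p \to \blacktriangle p$ suffices to pin down coreflexivity just as $p \to \Box p$ does in standard modal logic.
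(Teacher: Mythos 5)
Your proof is correct. The argument has exactly the same shape as the paper's --- a direct verification on coreflexive frames plus a contrapositive using the singleton valuation $V(p)=\{s\}$ --- but you use a different defining formula: the paper shows that coreflexivity is defined by $\blacktriangle p$ alone, whereas you use $p\to\blacktriangle p$. Both work, and your converse direction is identical to the paper's (the falsifying world $s$ satisfies $p$ in either case, so the antecedent you added costs nothing). The only thing your version gives up is economy: on a coreflexive frame every successor of $s$ \emph{is} $s$ and hence agrees with $s$ on $p$ regardless of whether $p$ holds at $s$, so the guard ``$p\to$'' is not needed and $\blacktriangle p$ is already valid there. Your analogy with $p\to\Box p$ is what suggests the extra antecedent, but $\blacktriangle$ is symmetric in $\phi$ and $\neg\phi$ (Fact~\ref{fact.one}), so the case $s\nvDash p$ takes care of itself and the paper's shorter formula suffices.
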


\begin{proof}
Consider the $\SNCL$-formula $\blacktriangle p$. We show that for each frame $\mathcal{F}=\langle S,R\rangle$, $\mathcal{F}\vDash\blacktriangle p$ iff $\mathcal{F}\vDash\forall x\forall y(xRy\to x=y)$.

`If': Suppose that $\mathcal{F}\vDash\forall x\forall y(xRy\to x=y)$. Then given any $\M=\langle \mathcal{F},V\rangle$ and any $s\in S$, if for each $t$ with $sRt$, we have $s=t$, then $s\vDash p$ iff $t\vDash p$, and thus $s\vDash \blacktriangle p$, therefore $\mathcal{F}\vDash\blacktriangle p$.

`Only if': Suppose that $\mathcal{F}\nvDash\forall x\forall y(xRy\to x=y)$. Then there exist $s,t\in S$ such that $sRt$ but $s\neq t$. Define a valuation $V$ on $\mathcal{F}$ such that $V(p)=\{s\}$, then $s\vDash p$ but $t\nvDash p$, and thus $\langle\mathcal{F},V\rangle,s\nvDash\blacktriangle p$, therefore $\mathcal{F}\nvDash\blacktriangle p$.
\end{proof}

As mentioned before, all of the five basic frame properties are definable in $\ML$, but not definable in $\CLtwo$; however, in $\SNCL$, some of the properties in question are undefinable, some of them is definable. This can be explain in terms of the fact that the expressivity of $\SNCL$ is in between that of $\ML$ and that of $\CLtwo$ (see Prop.~\ref{prop.lessexp} and Prop.~\ref{prop.exp.clncl}).

\section{Bisimulation and Characterization results}\label{sec.bisandchar}

In this section, we propose a notion of bisimulation for the logic $\SNCL$, and define a notion of bisimulation contraction for the same logic. Based on this bisimulation, we characterize the strong noncontingency logic within standard modal logic and within first-order logic.

\subsection{Bisimulation}
First, let us introduce the standard notion of bisimulation, which we call $\Box$-bisimulation.

\begin{definition}[$\Box$-bisimulation]
Let $\M=\lr{S,R,V}$ and $\M'=\lr{S',R',V'}$ be models. A nonempty binary relation $Z$ over $S$ is called a {\em $\Box$-bisimulation} between $\M$ and $\M'$, if $sZs'$ implies the following conditions hold:

(Inv) for all $p\in\BP$, $s\in V(p)$ iff $s'\in V'(p)$;

($\Box$-Forth) if $sRt$, then there exists $t'\in S'$ such that $s'R't'$ and $tZt'$;

($\Box$-Back) if $s'R't'$, then there exists $t\in S$ such that $sRt$ and $tZt'$.

\medskip

We say that $(\M,s)$ and $(\M',s')$ are {\em $\Box$-bisimilar}, notation: $(\M,s)\Kbis(\M',s')$, if there exists a $\Box$-bisimulation $Z$ between $\M$ and $\M'$ such that $sZs'$.
\end{definition}

$\Box$-bisimulation is a notion tailored to standard modal logic $\ML$. Under this notion, it is shown that $\ML$-formulas are invariant under $\Box$-bisimulation, thus $\ML$ cannot distinguish $\Box$-bisimilar models, and on image-finite models (or even $\ML$-saturated models)\footnote{Given a model $\M=\lr{S,R,V}$, $\M$ is said to be {\em image-finite}, if for any $s\in S$, the set $R(s)$ is finite; $\M$ is said to be {\em $\ML$-saturated}, if for any $\Gamma\subseteq \ML$ and any $s\in S$, if all of finite subsets of $\Gamma$ are satisfiable in $R(s)$, then $\Gamma$ is also satisfiable in $R(s)$.}, $\Box$-equivalence coincides with $\Box$-bisimilarity (c.f., e.g. \cite{blackburnetal:2001}). The following property of bisimilarity will be used in the sequel.

\begin{proposition}\label{prop.Kbisimilar} Let $\M=\lr{S,R,V}$ and $\M'=\lr{S',R',V'}$ be two models, and $s\in S$ and $s'\in S'$. Then
$(\M,s)\Kbis(\M',s')$ implies the following conditions:
\begin{enumerate}
\item \label{prop.bisimilar-one} For all $p\in\BP$, $s\in V(p)$ iff $s'\in V'(p)$;
\item \label{prop.bisimilar-two} If $sRt$, then there is a $t'$ in $\M'$ such that $s'R't'$ and $t\Kbis t'$;
\item \label{prop.bisimilar-three} If $s'R't'$, then there is a $t$ in $\M$ such that $sRt$
and $t\Kbis t'$.
\end{enumerate}
\end{proposition}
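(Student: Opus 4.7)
The plan is to unpack the definition of $\Kbis$ and then observe that the very same bisimulation witnessing $(\M,s)\Kbis(\M',s')$ also witnesses the bisimilarity of each matched pair of successors. Concretely, since $(\M,s)\Kbis(\M',s')$, there exists by definition a $\Box$-bisimulation $Z$ between $\M$ and $\M'$ such that $sZs'$.

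For item~\ref{prop.bisimilar-one}, I would simply apply clause (Inv) of $Z$ to the pair $(s,s')$, which immediately yields $s\in V(p)$ iff $s'\in V'(p)$ for every $p\in\BP$.

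For item~\ref{prop.bisimilar-two}, given $sRt$, I would invoke ($\Box$-Forth) for $Z$ to obtain some $t'\in S'$ with $s'R't'$ and $tZt'$. The key (and only substantive) observation is that the same relation $Z$ is still a $\Box$-bisimulation between $\M$ and $\M'$ (it does not depend on the distinguished pair); hence the existence of $Z$ together with $tZt'$ already verifies the definition of $(\M,t)\Kbis(\M',t')$. Item~\ref{prop.bisimilar-three} is symmetric, using ($\Box$-Back) instead of ($\Box$-Forth).

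There is really no obstacle here; the statement is essentially the observation that $\Kbis$, viewed as a relation between pointed models, is itself a bisimulation, which is immediate from the fact that bisimulations are closed under arbitrary unions and that the single relation $Z$ already suffices for each pair under consideration. The only thing to be careful about is not to confuse the two roles of $Z$ (as witness for $(\M,s)\Kbis(\M',s')$ and as witness for the bisimilarity of the successors), but this is handled by simply reusing the same $Z$.
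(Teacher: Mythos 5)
Your proposal is correct and matches the paper's own (one-line) proof, which likewise just observes that $\Kbis$ is itself a $\Box$-bisimulation and unwinds the definition; your extra remark that the same witnessing relation $Z$ serves for each matched pair of successors is exactly the content of that observation.
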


\begin{proof}
Follows directly from the fact that $\Kbis$ is a $\Box$-bisimulation and the definition of $\Box$-bisimulation.
\end{proof}

However, the notion of $\Box$-bisimulation is too refined for the logic $\SNCL$, as will be shown below. The following  example arises in the proof of Prop.~\ref{prop.lessexp}:

$$
\xymatrix{\mathcal{M}:\ \ \ s:p\ar@(ur,ul) & & & \mathcal{N}:\ \ \ t:p}
$$

It is not hard to show that $\M$ and $\N$ are both image-finite models, and that $(\M,s)$ and $(\N,t)$ satisfy the same $\SNCL$-formulas, but they are not $\Box$-bisimilar. Therefore, we need to redefine a suitable bisimulation notion for $\SNCL$.
\weg{As shown from the example in Proposition~\ref{prop.lessexp}, the image-finite models $(\M,s)$ and $(\N,t)$ satisfy the same $\SNCL$-formulas, but they are not $\Box$-bisimilar. This indicates that the standard notion of bisimulation ($\Box$-bisimulation) is too refined for $\SNCL$. We need to redefine a suitable bisimulation notion for $\SNCL$.}

\begin{definition}[$\blacktriangle$-bisimulation]\label{def.delta-bis}
Let $\M=\langle S,R,V\rangle$. A nonempty binary relation $Z$ over $S$ is called a {\em $\blacktriangle$-bisimulation} on $\M$, if $sZs'$ implies that the following conditions are satisfied:

(Inv) for all $p\in\BP$, $s\in V(p)$ iff $s'\in V(p)$;

($\blacktriangle$-Forth) if $sRt$ and $(s,t)\notin Z$ for some $t$, then there is a $t'$ such that $s'Rt'$ and $tZt'$;

($\blacktriangle$-Back) if $s'Rt'$ and $(s',t')\notin Z$ for some $t'$, then there is a $t$ such that $sRt$ and $tZt'$.

\medskip

We say that $(\M,s)$ and $(\M',s')$ are {\em $\blacktriangle$-bisimilar}, notation: $(\M,s)\kwbis(\M',s')$, if there exists a $\blacktriangle$-bisimulation $Z$ on the disjoint union of $\M$ and $\M'$ such that $sZs'$.
\end{definition}

\begin{proposition}
If $Z$ and $Z'$ are both $\blacktriangle$-bisimulations on $\M$, then $Z\cup Z'$ is also a $\blacktriangle$-bisimulation on $\M$.
\end{proposition}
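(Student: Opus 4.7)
The plan is to verify that $Z \cup Z'$ satisfies each of the three clauses (Inv), ($\blacktriangle$-Forth), ($\blacktriangle$-Back) in Definition~\ref{def.delta-bis}. Nonemptiness is clear because $Z$ alone is already nonempty. So assume $s (Z \cup Z') s'$; then either $sZs'$ or $sZ's'$, and I will split into these two symmetric cases throughout.

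For (Inv): if $sZs'$ then applying (Inv) for $Z$ gives $s \in V(p) \iff s' \in V(p)$ for all $p \in \BP$; the case $sZ's'$ is symmetric via (Inv) for $Z'$.

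The main step is ($\blacktriangle$-Forth). Suppose $sRt$ with $(s,t) \notin Z \cup Z'$. The key observation, which makes the argument go through, is that $(s,t) \notin Z \cup Z'$ implies both $(s,t) \notin Z$ and $(s,t) \notin Z'$. So if $sZs'$, then $(s,t) \notin Z$ lets me apply ($\blacktriangle$-Forth) for $Z$ to obtain $t'$ with $s'Rt'$ and $tZt'$; a fortiori $t (Z \cup Z') t'$. If instead $sZ's'$, use ($\blacktriangle$-Forth) for $Z'$ in exactly the same way. The ($\blacktriangle$-Back) clause is handled symmetrically by swapping the roles of $s$ and $s'$.

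There is no real obstacle here; the only place one has to be mildly careful is recognising that the hypothesis ``$(s,t) \notin Z \cup Z'$'' in the union bisimulation is \emph{stronger} than the hypothesis ``$(s,t) \notin Z$'' needed to invoke $\blacktriangle$-Forth for $Z$ alone, so the premises carry over rather than fail. (Had the definition instead demanded $(s,t) \notin Z$ on the left and $(s,t) \notin Z \cup Z'$ on the right, the argument would collapse; but here the direction of the inclusion works in our favour.) This observation plus the two symmetric case splits gives the proof in a handful of lines.
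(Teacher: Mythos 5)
Your proof is correct and follows essentially the same route as the paper's: nonemptiness from either component, a case split on $sZs'$ versus $sZ's'$, and the key observation that $(s,t)\notin Z\cup Z'$ entails $(s,t)\notin Z$ and $(s,t)\notin Z'$, so the forth/back hypotheses of the individual bisimulations are available. No differences worth noting.
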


\begin{proof}
Suppose that $Z$ and $Z'$ are both $\blacktriangle$-bisimulations on $\M$, to show $Z\cup Z'$ is also a $\blacktriangle$-bisimulation on $\M$. Obviously, $Z\cup Z'$ is nonempty, since $Z$ and $Z'$ are both nonempty. We need to check that $Z\cup Z'$ satisfies the three conditions of $\blacktriangle$-bisimulation. For this, assume that $(s,s')\in Z\cup Z'$. Then $sZs'$ or $sZ's'$.

(Inv):  If $sZs'$, then as $Z$ is a $\blacktriangle$-bisimulation, we have: given any $p\in\BP$, $s\in V(p)$ iff $s'\in V(p)$; if $sZ's'$, then as $Z'$ is a $\blacktriangle$-bisimulation, we also have: given any $p\in\BP$, $s\in V(p)$ iff $s'\in V(p)$. In both case we have that given any $p\in\BP$, $s\in V(p)$ iff $s'\in V(p)$.

($\blacktriangle$-Forth): Suppose that $sRt$ and $(s,t)\notin Z\cup Z'$, then $(s,t)\notin Z$ and $(s,t)\notin Z'$. If $sZs'$, then since $Z$ is a $\blacktriangle$-bisimulation on $\M$, there exists $t'$ such that $s'Rt'$ and $(t,t')\in Z$, and hence $(t,t')\in Z\cup Z'$; if $sZ's'$, then since $Z'$ is a $\blacktriangle$-bisimulation on $\M$, there exists $t'$ such that $s'Rt'$ and $(t,t')\in Z'$, and hence also $(t,t')\in Z\cup Z'$. Therefore in both cases, there exists $t'$ such that $s'Rt'$ and $(t,t')\in Z\cup Z'$.

($\blacktriangle$-Back): The proof is similar to that of ($\blacktriangle$-Forth).
\end{proof}

Thus we can build more sophisticated $\blacktriangle$-bisimulations from the simpler $\blacktriangle$-bisimulations. In particular, by Def.~\ref{def.delta-bis}, we can see that $\blacktriangle$-bisimilarity is the largest $\blacktriangle$-bisimulation. And also, $\blacktriangle$-bisimilarity is an equivalence relation. Note that the proof is highly nontrivial.

\begin{proposition}
The $\blacktriangle$-bisimilarity $\kwbis$ is an equivalence relation.
\end{proposition}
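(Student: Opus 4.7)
The plan is to verify reflexivity, symmetry, and transitivity in turn, with transitivity carrying the real weight. For \emph{reflexivity} I would exhibit the ``cross-copy diagonal'' $Z=\{(x_1,x_2):x\in\M\}$ on $\M\uplus\M$, where $x_1,x_2$ denote the two copies of $x$: since $Z$ contains no same-copy pair, the side-condition $(s_1,t)\notin Z$ in ($\blacktriangle$-Forth) holds automatically whenever $t$ is a same-copy successor, and the cross-copy match $(t,t_2)\in Z$ is immediate (symmetrically for ($\blacktriangle$-Back)). For \emph{symmetry}, given a witness $Z$ on $\M\uplus\M'$ with $sZs'$, I would argue $Z\cup Z^{-1}$ is again a $\blacktriangle$-bisimulation by a case split on whether a given pair lies in $Z$ or $Z^{-1}$: when $(a,b)\in Z$, apply ($\blacktriangle$-Forth) for $Z$ directly; when $(b,a)\in Z$, apply ($\blacktriangle$-Back) for $Z$ at $(b,a)$ and flip the matched successor back into $Z^{-1}$. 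Then $(s',s)\in Z^{-1}\subseteq Z\cup Z^{-1}$ gives $(\M',s')\kwbis(\M,s)$.

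For \emph{transitivity}, suppose $Z_{12}$ on $\M_1\uplus\M_2$ witnesses $(\M_1,s_1)\kwbis(\M_2,s_2)$ and $Z_{23}$ on $\M_2\uplus\M_3$ witnesses $(\M_2,s_2)\kwbis(\M_3,s_3)$. The plan is to work inside the triple disjoint union $\M_1\uplus\M_2\uplus\M_3$, in which $Z_{12}$ and $Z_{23}$ remain $\blacktriangle$-bisimulations (successors never leave their original component), and to form
\[
W\;:=\;Z_{12}\cup Z_{12}^{-1}\cup Z_{23}\cup Z_{23}^{-1}\cup\mathrm{Id}.
\]
By the preceding Union proposition together with the reflexivity and symmetry constructions above, $W$ is a $\blacktriangle$-bisimulation that is symmetric and reflexive. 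The real step is then to show that its transitive closure $W^+$ is still a $\blacktriangle$-bisimulation on the triple union; granted this, $s_1\,W\,s_2\,W\,s_3$ yields $s_1\,W^+\,s_3$, and restricting $W^+$ to the generated submodel $\M_1\uplus\M_3$ delivers the required witness of $(\M_1,s_1)\kwbis(\M_3,s_3)$.

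The hard part is precisely this closure lemma, and the non-normality of $\blacktriangle$ is what makes it delicate: the side-condition $(a,a')\notin W$ inside ($\blacktriangle$-Forth) does not propagate along $W$-paths in any naive way. Given $(a,c)\in W^+$ via a path $a=a_0\,W\,a_1\,W\,\cdots\,W\,a_n=c$ and given $aRa'$ with $(a,a')\notin W^+$, I plan to build successors $\alpha_0=a',\alpha_1,\ldots,\alpha_n$ iteratively so that $a_iR\alpha_i$ and $(\alpha_{i-1},\alpha_i)\in W$, and then take $c'=\alpha_n$. Each step applies ($\blacktriangle$-Forth) for $W$ at the pair $(a_i,a_{i+1})\in W$, so its side-condition $(a_i,\alpha_i)\notin W$ must be verified at every stage. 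The crucial observation---and the reason the strengthened hypothesis $(a,a')\notin W^+$, rather than the weaker $(a,a')\notin W$, is indispensable---is that $(a_i,\alpha_i)\in W$ would combine with the already-built chain $\alpha_0\,W\,\cdots\,W\,\alpha_i$, the path prefix $a_0\,W\,\cdots\,W\,a_i$, and the symmetry of $W$ (hence of $W^+$) to yield a $W^+$-link from $a_0=a$ to $\alpha_0=a'$, contradicting the hypothesis. Hence $(a_i,\alpha_i)\notin W$ at every step and the induction succeeds; (Inv) for $W^+$ and ($\blacktriangle$-Back) follow by routine path inductions.
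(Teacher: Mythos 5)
Your proposal is correct, and while reflexivity and symmetry essentially match the paper (the paper uses the identity relation and a partially inverted copy of $Z$ where you use the cross-copy diagonal and $Z\cup Z^{-1}$ --- cosmetic variants), your transitivity argument takes a genuinely different route. The paper does not pass to a closure: it hand-crafts a single witnessing relation as a union of five explicitly described sets --- the composition of $Z_1$ and $Z_2$ across $\M$ and $\mathcal{O}$, the intra-model parts of $Z_1$ and $Z_2$, plus two ``conjugated'' sets of the shape $Z_1;Z_2;Z_1^{-1}$ inside $\M$ and $Z_2^{-1};Z_1;Z_2$ inside $\mathcal{O}$ --- and verifies the three clauses by a direct chase. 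Those extra conjugated parts play exactly the role of your strengthened hypothesis $(a,a')\notin W^+$: in the paper's ($\blacktriangle$-Forth) step, $(x,x')\notin Z$ together with the fourth part of the definition is what yields $(y,y')\notin Z_2$ and so discharges the side condition at the second hop. Your closure lemma (the transitive closure of a reflexive, symmetric $\blacktriangle$-bisimulation is again one) generalizes this two-hop bookkeeping to paths of arbitrary length, and your contradiction argument for $(a_i,\alpha_i)\notin W$ is sound and is indeed the crux; what you gain is a modular, reusable statement, while the paper gains a finite, fully explicit witness and avoids reasoning about arbitrary $W$-paths. One point to make explicit in a full write-up: $Z^{-1}$ on its own need not be a $\blacktriangle$-bisimulation, since ($\blacktriangle$-Forth) for $Z^{-1}$ at $(b,a)$ requires the side condition $(b,b')\notin Z$ whereas membership in $Z^{-1}$ only excludes $(b',b)\in Z$; so the Union proposition must be applied to the symmetrized blocks $Z_{12}\cup Z_{12}^{-1}$ and $Z_{23}\cup Z_{23}^{-1}$ (which your symmetry step licenses), not to the four pieces separately --- your phrasing already suggests this, but the distinction is easy to lose.
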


\begin{proof}
We need only show that $\kwbis$ satisfies the three properties of an equivalence relation.

Reflexivity: Given any model $\M=\lr{S,R,V}$ and $s\in S$, to show that $(\M,s)\kwbis(\M,s)$. For this, define $Z=\{(w,w)\mid w\in S\}$. First, $Z$ is nonempty, as $sZs$. We need only show that $Z$ satisfies the three conditions of $\blacktriangle$-bisimulation. Suppose that $wZw$.

It is obvious that $w$ and $w$ satisfy the same propositional variables, thus (Inv) holds; suppose that $wRt$ and $(w,t)\notin Z$ for some $t\in S$, then obviously, there exists $t'=t$ such that $wRt'$ and $tZt'$, thus ($\blacktriangle$-Forth) holds; the proof of ($\blacktriangle$-Back) is analogous.

\medskip

Symmetry: Given any models $\M=\lr{S,R,V}$ and $\M'=\lr{S',R',V'}$ and $s\in S$ and $s'\in S'$, assume that $(\M,s)\kwbis (\M',s')$, to show that $(\M',s')\kwbis (\M,s)$. By assumption, we have that there exists $\blacktriangle$-bisimulation $Z$ with $sZs'$. Define $Z'=\{(w,w')\mid w\in S,~w'\in S',~ w'Zw\}\cup\{(w,t)\mid w,t\in S,~wZt\}\cup\{(w',t')\mid w',t'\in S',~w'Zt'\}.$ First, since $sZs'$, we have $(s',s)\in Z'$, thus $Z'$ is nonempty. We need only check that $Z'$ satisfies the three conditions of $\blacktriangle$-bisimulation. Suppose that $wZ'w'$.

By supposition, we have $w'Zw$. Using (Inv) of $Z$, we have that $w'$ and $w$ satisfy the same propositional variables, then of course $w$ and $w'$ satisfy the same propositional variables, thus (Inv) holds. For ($\blacktriangle$-Forth), suppose that $wRt$ and $(w,t)\notin Z'$ for some $t\in S$, then by definition of $Z'$, $(w,t)\notin Z$. Using ($\blacktriangle$-Back) of $Z$, we infer that there exists $t'\in S'$ such that $w'R't'$ and $t'Zt$, thus $tZ't'$. The proof of ($\blacktriangle$-Back) is similar, by using ($\blacktriangle$-Forth) of $Z$.

\medskip

Transitivity: Given any models $\M=\lr{S^{\M},R^{\M},V^{\M}}$,~$\N=\lr{S^{\N},R^{\N},V^{\N}}$,\\
$\mathcal{O}=\lr{S^{\mathcal{O}},R^{\mathcal{O}},V^{\mathcal{O}}}$ and $s\in S^{\M},~t\in S^{\N},~u\in S^{\mathcal{O}}$, assume that $(\M,s)\kwbis(\N,t)$ and $(\N,t)\kwbis(\mathcal{O},u)$, to show that $(\M,s)\kwbis(\mathcal{O},u)$. By assumption, we have that there exists $\blacktriangle$-bisimulation $Z_1$ on the disjoint union of $\M$ and $\N$ such that $sZ_1t$, and there exists $\blacktriangle$-bisimulation $Z_2$ on the disjoint union of $\N$ and $\mathcal{O}$ such that $tZ_2u$. We need to find a $\blacktriangle$-bisimulation $Z$ on the disjoint union of $\M$ and $\mathcal{O}$.

Define $Z=\{(x,z)\mid x\in S^{\M},~z\in S^{\mathcal{O}},\text{ there is a }y\in S^{\N}\text{ such that }xZ_1y,~yZ_2z\}\cup\{(x,x')\mid x,x'\in S^{\M},~xZ_1x'\}\cup\{(z,z')\mid z,z'\in S^{\mathcal{O}},~zZ_2z'\}\cup\{(x,x')\mid x,x'\in S^{\M},\text{ there are }y,y'\in S^{\N}\text{ such that }yZ_2y',~xZ_1y,~x'Z_1y'\}\cup\{(z,z')\mid z,z'\in S^{\mathcal{O}},\text{ there are }y,y'\in S^{\N}\text{ such that }yZ_1y',~yZ_2z,~y'Z_2z'\}.$ First, since $sZ_1t$ and $tZ_2u$, by the first part of the definition of $Z$, we have $sZu$, thus $Z$ is nonempty. We need only check that $Z$ satisfies the three conditions of $\blacktriangle$-bisimulation. Suppose that $xZz$. Then by the first part of the definition of $Z$, there is a $y\in S^{\N}$ such that $xZ_1y$ and $yZ_2z$.

(Inv): as $Z_1$ and $Z_2$ are both $\blacktriangle$-bisimulations, $x$ and $y$ satisfy the same propositional variables, and $y$ and $z$ satisfy the same propositional variables. Then $x$ and $z$ satisfy the same propositional variables.

($\blacktriangle$-Forth): suppose that $xR^{\M}x'$ and $(x,x')\notin Z$ for some $x'\in S^{\M}$, then by the second part of the definition of $Z$, we obtain $(x,x')\notin Z_1$. From this, $xZ_1y$ and ($\blacktriangle$-Forth) of $Z_1$, it follows that there exists $y'\in S^{\N}$ such that $yR^{\N}y'$ and $x'Z_1y'$. Using $xZ_1y,x'Z_1y',(x,x')\notin Z$ and the fourth part of the definition of $Z$, we get $(y,y')\notin Z_2$. From this, $yZ_2z$ and ($\blacktriangle$-Forth) of $Z_2$, it follows that there exists $z'\in S^{\mathcal{O}}$ such that $zR^{\mathcal{O}}z'$ and $y'Z_2z'$. Since $x'Z_1y'$ and $y'Z_2z'$, by the first part of the definition of $Z$, we obtain $x'Zz'$. We have shown that, there exists $z'\in S^{\mathcal{O}} $ such that $zR^{\mathcal{O}}z'$ and $x'Zz'$, as desired.

($\blacktriangle$-Back): the proof is similar to that of ($\blacktriangle$-Forth), but in this case we use the third and fifth parts of the definition of $Z$, rather than the second or fourth parts of the definition of $Z$.
\end{proof}

The following result indicates the relationship between $\blacktriangle$-bisimilarity and $\Box$-bisimilarity: $\blacktriangle$-bisimilarity is strictly weaker than $\Box$-bisimilarity. This corresponds to the fact that $\SNCL$ is strictly weaker than $\ML$.

\begin{proposition}\label{prop.kbisvskwbis}
Let $(\M,s),(\M',s')$ be pointed models. If $(\M,s)\Kbis (\M',s')$, then $(\M,s)\kwbis(\M',s')$; but the converse does not hold.
\end{proposition}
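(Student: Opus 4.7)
The proposition has two parts: $\Kbis \Rightarrow \kwbis$, and a counterexample to the converse. The plan is to handle the first by a direct structural comparison of the two definitions, observing that the clauses ($\blacktriangle$-Forth) and ($\blacktriangle$-Back) are strict weakenings of their $\Box$-counterparts, and to handle the second by using the very models from Proposition~\ref{prop.lessexp} together with an explicit $\blacktriangle$-bisimulation.

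For the implication $\Kbis \Rightarrow \kwbis$, let $Z \subseteq S \times S'$ be a $\Box$-bisimulation between $\M = \lr{S,R,V}$ and $\M' = \lr{S',R',V'}$ with $sZs'$, and reinterpret $Z$ as a binary relation on the disjoint union of $\M$ and $\M'$. The clause (Inv) is inherited verbatim. For ($\blacktriangle$-Forth), suppose $wZw'$ and $wRu$ in the disjoint union with $(w,u) \notin Z$; since $Z \subseteq S \times S'$, any such $wZw'$ has $w \in S$ and $w' \in S'$, so the edge $wRu$ forces $u \in S$ and the hypothesis $(w,u) \notin Z$ is then automatic. The stronger ($\Box$-Forth) clause supplies $u' \in S'$ with $w'R'u'$ and $uZu'$, which is exactly what ($\blacktriangle$-Forth) demands. ($\blacktriangle$-Back) is symmetric. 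Hence $Z$ itself witnesses $(\M,s) \kwbis (\M',s')$.

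For the failure of the converse, I take $\M$ to be a single world $s$ with $s \vDash p$ and a reflexive edge, and $\N$ to be a single world $t$ with $t \vDash p$ and no edges; these are exactly the two models used in Proposition~\ref{prop.lessexp}. They cannot be $\Box$-bisimilar, because any relation containing $(s,t)$ would, by ($\Box$-Forth) applied to the edge $sRs$, require $t$ to have a successor, which it does not. On the other hand, the full relation $Z = \{s,t\}^{2}$ on the disjoint union is a $\blacktriangle$-bisimulation: (Inv) holds because both worlds satisfy exactly $p$, and both ($\blacktriangle$-Forth) and ($\blacktriangle$-Back) are vacuously satisfied at every pair, since the only edge in the disjoint union is $sRs$ and $(s,s) \in Z$. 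As $sZt$, we conclude $(\M,s) \kwbis (\N,t)$.

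The only real subtlety, and hence the main obstacle, is the choice of $Z$ in the counterexample: one has to include the reflexive pair $(s,s)$ precisely so that the hypothesis ``$(w,u) \notin Z$'' of ($\blacktriangle$-Forth) fails on the sole edge $sRs$. This is the structural reason $\blacktriangle$ is properly weaker than $\Box$: a transition whose target is already $Z$-related to its source is invisible to $\blacktriangle$-bisimulation, matching the observation in Proposition~\ref{prop.lessexp} that $\SNCL$ cannot distinguish these two models.
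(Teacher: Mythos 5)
Your proposal is correct and follows essentially the same route as the paper: the forward direction by observing that a $\Box$-bisimulation (viewed on the disjoint union) already satisfies the weaker $\blacktriangle$-clauses, and the converse refuted by the reflexive-point versus dead-end models of Proposition~\ref{prop.lessexp} with an explicit $\blacktriangle$-bisimulation containing $(s,s)$ so that the lone edge is ``invisible.'' The only cosmetic differences are that the paper uses the relation $\Kbis$ itself as the witnessing $Z$, takes $Z=\{(s,s),(s,t)\}$ rather than the full relation, and refutes $\Box$-bisimilarity via the formula $\Box\bot$ instead of your direct ($\Box$-Forth) argument.
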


\begin{proof}
Suppose that $(\M,s)\Kbis (\M',s')$. Define $Z=\{(x,x')\mid x\Kbis x'\}$. We will show that $Z$ is a $\blacktriangle$-bisimulation on the disjoint union of $\M$ and $\M'$ with $sZs'$.

First, by supposition, we have $sZs'$, thus $Z$ is nonempty. We need only check that $Z$ satisfies the three conditions of $\blacktriangle$-bisimulation. Assume that $xZx'$. By definition of $Z$, we obtain $x\Kbis x'$. Using item \ref{prop.bisimilar-one} of Proposition \ref{prop.Kbisimilar}, we have $x$ and $x'$ satisfy the same propositional variables, thus (Inv) holds. For ($\blacktriangle$-Forth), suppose that $xRy$ and $(x,y)\notin Z$ for some $y$, then using item \ref{prop.bisimilar-two} of Proposition \ref{prop.Kbisimilar}, we get there exists $y'$ in $\M'$ such that $x'R'y'$ and $y\Kbis y'$, thus $yZy'$. The condition ($\blacktriangle$-Back) is similar to prove, by using item \ref{prop.bisimilar-three} of Proposition \ref{prop.Kbisimilar}.

For the converse, recall the example in Proposition \ref{prop.lessexp}. There, let $Z=\{(s,s),(s,t)\}$. It is not hard to show that $Z$ is a $\blacktriangle$-bisimulation on the disjoint union of $\M$ and $\N$, thus $(\M,s)\kwbis(\N,t)$. However, $(\M,s)\not\Kbis(\N,t)$, as $s\nvDash\Box\bot$ but $t\vDash\Box\bot$.
\end{proof}

The following result says that $\SNCL$-formulas are invariant under $\blacktriangle$-bisimilarity. This means that $\SNCL$-formulas cannot distinguish $\blacktriangle$-bisimilar models.
\begin{proposition}\label{prop.invariance}
For any models $(\M,s)$ and $(\M',s')$,  if $(\M,s)\kwbis(\M',s')$, then $(\M,s)\equkw(\M',s')$. In other words, $\blacktriangle$-bisimilarity implies $\blacktriangle$-equivalence.
\end{proposition}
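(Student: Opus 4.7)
The plan is to prove by induction on $\phi \in \SNCL$ that whenever $Z$ is a $\blacktriangle$-bisimulation on the disjoint union $\M \uplus \M'$ and $sZs'$, one has $\M, s \vDash \phi$ iff $\M', s' \vDash \phi$. The base case $\phi = p$ is immediate from (Inv), and the Boolean cases follow by routinely unfolding the induction hypothesis (IH).

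The only real work is in the case $\phi = \blacktriangle\psi$. Assume $sZs'$ and $\M, s \vDash \blacktriangle\psi$; I show $\M', s' \vDash \blacktriangle\psi$, i.e.\ every successor $t'$ of $s'$ must agree with $s'$ on the value of $\psi$. The IH applied to the pair $sZs'$ already gives that $s$ and $s'$ agree on $\psi$. Now fix any $t'$ with $s'Rt'$ in the disjoint union, and split into two cases. If $(s', t') \in Z$, the IH applied to this pair immediately yields that $s'$ and $t'$ agree on $\psi$. If $(s', t') \notin Z$, then ($\blacktriangle$-Back) supplies a $t$ with $sRt$ and $tZt'$; since $\M, s \vDash \blacktriangle\psi$, the successor $t$ agrees with $s$ on $\psi$, and by IH $t$ agrees with $t'$ on $\psi$, so by chaining the three agreements (between $s'$ and $s$, $s$ and $t$, and $t$ and $t'$) we conclude $s'$ and $t'$ agree on $\psi$. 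Either way $s'$ and $t'$ agree on $\psi$, establishing $\M', s' \vDash \blacktriangle\psi$. The converse direction is symmetric, invoking ($\blacktriangle$-Forth) in place of ($\blacktriangle$-Back).

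The one delicate point, rather than a genuine obstacle, is exactly this case split: the weakened ``$\ldots$ and $(s,t) \notin Z$'' hypothesis in Definition~\ref{def.delta-bis} is precisely what the induction needs, since pairs already inside $Z$ are handled for free by the IH and need not be matched by the forth/back clauses. This is also what keeps $\kwbis$ strictly coarser than $\Kbis$, in accordance with Prop.~\ref{prop.kbisvskwbis}, and explains why the same argument would not be available if one tried to replace $\Box$-bisimulation with $\blacktriangle$-bisimulation naively.
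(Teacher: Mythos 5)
Your proof is correct and follows essentially the same route as the paper's: the paper argues the $\blacktriangle\psi$ case contrapositively (a witness $t$ to the failure of $\blacktriangle\psi$ at $s$ must satisfy $(s,t)\notin Z$ by the IH, so ($\blacktriangle$-Forth)/($\blacktriangle$-Back) apply), while you argue directly with an explicit case split on whether $(s',t')\in Z$ — but the key observation, that $Z$-related pairs already agree on $\psi$ by the IH so only non-$Z$ successor pairs need to be matched, is identical. Your closing remark about why the weakened forth/back clauses are exactly what the induction needs is accurate and matches the paper's intent.
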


\begin{proof}
Assume that $(\M,s)\kwbis(\M',s')$, then there is a $\blacktriangle$-bisimulation $Z$ on the disjoint union of $\M$ and $\M'$ such that $sZs'$. We need to show that for any $\phi\in \SNCL$, $\M,s\vDash\phi$ iff $\M',s'\vDash\phi$.

The proof continues by induction on the structure of $\phi$. The non-trivial case is $\blacktriangle\phi$.

Suppose that $\M,s\nvDash\blacktriangle\phi$. Then there exists $t$ such that $sRt$ and $(s\vDash\phi\not\Leftrightarrow t\vDash\phi)$. Without loss of generality, assume that
$s\vDash\phi$ and $t\nvDash\phi$, then $(s,t)\notin Z$ by the induction hypothesis. We obtain by ($\blacktriangle$-Forth) that there exists $t'$ such that $s'R't'$ and $tZt'$, thus $(\M,t)\kwbis (\M',t')$. From $s\kwbis s'$ and the induction hypothesis and $s\vDash\phi$, it follows that $s'\vDash\phi$. Analogously, we can infer $t'\nvDash\phi$. Therefore $\M',s'\nvDash\blacktriangle\phi$. For the converse use ($\blacktriangle$-Back).
\end{proof}

With the notion of $\blacktriangle$-bisimulation, we can simplify the proofs in the previous sections. We here take Proposition \ref{prop.lessexp-d} as an example, to show that $(\M,s)$ and $(\N,s')$ therein are $\blacktriangle$-bisimilar, rather than using simultaneous induction. For this, we define $Z=\{(s,s'),(t,t'),(t,t)\}$\footnote{Note that in order to guarantee $Z$ is indeed a $\blacktriangle$-bisimulation, the pair $(t,t)$ must be contained in $Z$.}. We can show that $Z$ is indeed a $\blacktriangle$-bisimulation on the disjoint union of $\M$ and $\N$, thus $s\kwbis s'$ and $t\kwbis t'$. By Proposition \ref{prop.invariance}, we have for all $\phi\in\SNCL$, (i) $\M,s\vDash\phi$ iff $\N,s'\vDash\phi$, and (ii) $\M,t\vDash\phi$ iff $\N,t'\vDash\phi$.

\medskip

For the converse, we have

\begin{proposition}[Hennessy-Milner Theorem]\label{prop.hennessy-milner-theorem}
For any image-finite models $\M,\M'$ and $s\in\M,s'\in\M'$, $(\M,s)\equkw(\M',s')$ iff $(\M,s)\kwbis(\M',s')$.
\end{proposition}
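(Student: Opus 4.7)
The implication from $\kwbis$ to $\equkw$ is exactly Proposition \ref{prop.invariance}, so what remains is the converse: on image-finite models, $\blacktriangle$-equivalence implies $\blacktriangle$-bisimilarity. My plan is to adapt the classical Hennessy-Milner argument to the non-normal modality $\blacktriangle$. Take as the candidate $\blacktriangle$-bisimulation the relation $Z$ defined on $\M \sqcup \M'$ by $uZv$ iff $u \equkw v$, and verify the three clauses of Definition \ref{def.delta-bis}. Clause (Inv) is immediate from the definition of $\equkw$, and since $Z$ is symmetric, ($\blacktriangle$-Back) follows from ($\blacktriangle$-Forth) by swapping the roles of $\M$ and $\M'$; the whole argument therefore concentrates on ($\blacktriangle$-Forth).

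For ($\blacktriangle$-Forth), assume $x \equkw x'$, $xRy$, and $x \not\equkw y$; the task is to produce $y' \in R(x')$ with $y \equkw y'$. Arguing by contradiction, suppose no such $y'$ exists. Use image-finiteness of $\M'$ to enumerate $R(x') = \{y_1', \dots, y_n'\}$. For each $i$, pick a formula $\phi_i \in \SNCL$ distinguishing $y$ from $y_i'$, replacing $\phi_i$ by $\neg \phi_i$ if necessary so that $y \vDash \phi_i$ while $y_i' \nvDash \phi_i$; similarly pick $\phi_0 \in \SNCL$ distinguishing $x$ from $y$, arranged so that $y \vDash \phi_0$ while $x \nvDash \phi_0$. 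Set $\phi = \phi_0 \wedge \phi_1 \wedge \cdots \wedge \phi_n$. Then $y \vDash \phi$, $x \nvDash \phi$, and no $y_i'$ satisfies $\phi$.

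Because $xRy$ while $x$ and $y$ disagree on $\phi$, the semantics of $\blacktriangle$ forces $x \nvDash \blacktriangle \phi$. The hypothesis $x \equkw x'$ now transfers both $x \nvDash \phi$ and $x \nvDash \blacktriangle \phi$ to $x'$. Unpacking Fact \ref{fact.one} at $x'$: since $x' \nvDash \phi$, the conjunct $\phi \to \Box \phi$ is vacuously satisfied, so the failure of $\blacktriangle \phi$ at $x'$ must come from $\neg \phi \to \Box \neg \phi$ failing, which forces some $R$-successor of $x'$ to satisfy $\phi$. But every such successor is one of the $y_i'$, none of which satisfies $\phi$, a contradiction.

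The main obstacle, compared with the classical Hennessy-Milner proof for $\Box$, is that $\blacktriangle$ is a non-normal \emph{agreement} modality: a formula distinguishing $y$ from the candidate successors of $x'$ does not by itself trigger a failure of $\blacktriangle$ at $x$. One must build a conjunction that \emph{both} witnesses a disagreement between $x$ and $y$ (to force $x \nvDash \blacktriangle \phi$) \emph{and} is refuted by every successor of $x'$ (to reach the contradiction); image-finiteness is precisely what allows these two demands, together with the finitely many individual distinguishers, to be combined into a single $\SNCL$-formula.
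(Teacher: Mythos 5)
Your proof is correct and follows essentially the same strategy as the paper's: assume no matching successor exists, use image-finiteness to collect finitely many formulas separating $y$ from each candidate successor, and combine them with a formula separating $x$ from $y$ into a single formula on which $x$ and $y$ disagree (so $x\nvDash\blacktriangle$ of it) while $x'$ and all its successors agree (so $x'\vDash\blacktriangle$ of it), contradicting $x\equkw x'$. The only immaterial difference is packaging: you use the conjunction $\phi_0\land\phi_1\land\cdots\land\phi_n$ oriented so that the child satisfies it and the parent does not, whereas the paper uses the dual implication $\phi_1\land\cdots\land\phi_n\to\phi$ oriented the other way.
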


\begin{proof}
Let $\M$ and $\M'$ be both image-finite models and $s\in\M$ and $s'\in\M'$. Based on Proposition \ref{prop.invariance}, we need only to show the direction from left to right. Assume that $(\M,s)\equkw(\M',s')$, we need to show that $\equkw$ is a $\blacktriangle$-bisimulation on the disjoint union of $\M$ and $\M'$, which implies $(\M,s)\kwbis(\M',s')$. It suffices to show the condition ($\blacktriangle$-Forth), as the proof for ($\blacktriangle$-Back) is similar.

Suppose that there exists $t$ such that $sRt$ and $s\not\equkw t$, to show for some $t'$ it holds that $s'R't'$ and $t\equkw t'$. Since $s\not\equkw t$, there is a $\phi\in\SNCL$ such that $s\vDash\phi$ but $t\nvDash\phi$, and thus $s\nvDash\blacktriangle\phi$ due to $sRt$. By assumption, we have $s'\vDash\phi$ and $s'\nvDash\blacktriangle\phi$, and thus there exists $v'$ such that $s'R'v'$ and $v'\nvDash\phi$. Let $S'=\{t'\mid s'R't'\}$. It is easy to see that $S'\neq \emptyset$. As $\M'$ is image-finite, $S'$ must be finite, say $S'=\{t_1',t_2',\cdots,t_n'\}$. If there is no $t'_i\in S'$ such that $t\equkw t'_i$, then for every $t_i'\in S'$ there exists $\phi_i\in\SNCL$ such that $t\vDash\phi_i$ but $t_i'\nvDash\phi_i$. It follows that $t\vDash\phi_1\land\cdots\land\phi_n$, and thus $t\nvDash\phi_1\land\cdots\land\phi_n\to\phi$; furthermore, from $s\vDash\phi$ follows that $s\vDash\phi_1\land\cdots\land\phi_n\to\phi$. Hence $s\nvDash\blacktriangle(\phi_1\land\cdots\land\phi_n\to\phi)$. Note that for all $t_i'\in S'$, $t_i'\nvDash\phi_1\land\cdots\land\phi_n$, thus $t_i'\vDash\phi_1\land\cdots\land\phi_n\to\phi$. We also have $s'\vDash\phi_1\land\cdots\land\phi_n\to\phi$, and then $s'\vDash\blacktriangle(\phi_1\land\cdots\land\phi_n\to\phi)$, which is contrary to the assumption and $s\nvDash\blacktriangle(\phi_1\land\cdots\land\phi_n\to\phi)$.  Therefore, we have for some $t'$ it holds that $s'R't'$ and $t\equkw t'$.
\end{proof}

If we remove the condition of `image-finite', then $\kwbis$ does not coincide with $\equkw$.

\begin{example}\label{example.not-m-saturated}
Consider two models $\M=\lr{S,R,V}$ and $\M'=\lr{S',R',V'}$, where $S=\mathbb{N}\cup\{s\}$, $R=\{(s,n)\mid n\in\mathbb{N}\},V(p_n)=\{n\}$
and $S'=\mathbb{N}\cup\{s',\omega\}$, $R'=\{(s',n)\mid n\in\mathbb{N}\}\cup\{(s',\omega)\}$, and $V'(p_n)=\{n\}$. This can be visualized as follows:
$$
\xymatrix{
s \ar[d]\ar[dr]\ar[drr]\ar[drrr]&&&\M\\
p_1&p_2&p_3& \dots
}
\qquad
\qquad
\xymatrix{
s' \ar[d]\ar[dr]\ar[drr]\ar[drrr]\ar[r]&\omega&&\M'\\
p_1&p_2&p_3& \dots
}
$$
We have:
\begin{itemize}
\item Neither of $\M$ and $\M'$ is image-finite, as $s$ and $s'$ both have infinite many successors.
\item $(\M,s)\equkw(\M',s')$. By induction on $\phi\in\SNCL$, we show: for any $\phi$, $\M,s\vDash\phi$ iff $\M',s'\vDash\phi$. The non-trivial case is $\blacktriangle\phi$, that is to show, $\M,s\vDash\blacktriangle\phi$ iff $\M',s'\vDash\blacktriangle\phi$. The direction from right to left is easy, just noting that $R(s)\subseteq R'(s')$. For the other direction, suppose that $\M,s\vDash\blacktriangle\phi$. Then for any $n\in\mathbb{N}$, we have that $s\vDash\phi$ iff $n\vDash\phi$, which implies that $s'\vDash\phi$ iff $n\vDash\phi$ by the induction hypothesis. As $\phi$ is finite, it contains only finitely many propositional variables. Without loss of generality, we may assume that $n$ is the largest number of subscripts of propositional variables occurring in $\phi$. Then by induction on $\phi$, we can show that $n+1\vDash\phi$ iff $\omega\vDash\phi$. Thus given any $n\in\mathbb{N}\cup\{\omega\}$, $s'\vDash\phi$ iff $n\vDash\phi$. Therefore $\M',s'\vDash\blacktriangle\phi$, as desired.
\item $(\M,s)\not\kwbis(\M',s')$. Suppose towards contradiction that $(\M,s)\kwbis(\M',s')$, then there exists a $\blacktriangle$-bisimulation such that $sZs'$. Now we have $s'R'\omega$. And also $(s',\omega)\notin Z$, for otherwise $s'\kwbis \omega$, thus e.g. $s'\vDash\blacktriangle p_1$ iff $\omega\vDash\blacktriangle p_1$, contrary to the fact that $s'\nvDash\blacktriangle p_1$ but $\omega\vDash\blacktriangle p_1$. By the condition ($\blacktriangle$-Back), we obtain that there exists $m\in\mathbb{N}$ such that $sRm$ and $mZ\omega$, thus $m\kwbis \omega$. However, $m\vDash p_m$ but $\omega\nvDash p_m$, contradiction.
\end{itemize}
\end{example}

We can extend the result of Prop.~\ref{prop.hennessy-milner-theorem} to the following proposition. Here by {\em $\SNCL$-saturated model} we mean, given any $s$ in this model and any set $\Gamma\subseteq\SNCL$, if all of finite subsets of $\Gamma$ are satisfiable in the successors of $s$, then $\Gamma$ is also satisfiable in the successors of $s$.

\begin{proposition}\label{prop.delta-saturated}
Let $(\M,s)$ and $(\M',s')$ be $\SNCL$-saturated pointed models. Then $(\M,s)\equkw(\M',s')$ iff $(\M,s)\kwbis (\M',s')$.
\end{proposition}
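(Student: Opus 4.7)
The plan is to adapt the proof of Proposition \ref{prop.hennessy-milner-theorem} by showing that the relation $\equkw$ itself is a $\blacktriangle$-bisimulation on the disjoint union of $\M$ and $\M'$. The right-to-left direction is immediate from Proposition \ref{prop.invariance}, so the work is entirely in the left-to-right direction. Condition (Inv) is built into the definition of $\equkw$, and ($\blacktriangle$-Back) is symmetric to ($\blacktriangle$-Forth), so I only need to verify ($\blacktriangle$-Forth).

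Suppose $s\equkw s'$, $sRt$, and $s\not\equkw t$; I must produce $t'$ with $s'R't'$ and $t\equkw t'$. Pick any witness $\psi\in\SNCL$ with $s\vDash\psi\not\Leftrightarrow t\vDash\psi$; replacing $\psi$ by $\neg\psi$ if necessary (using that $\SNCL$ is closed under negation), I may assume $s\vDash\psi$ and $t\nvDash\psi$. Now consider the $\SNCL$-theory of $t$, namely $\Sigma=\{\phi\in\SNCL\mid \M,t\vDash\phi\}$. The goal is to apply $\SNCL$-saturation of $\M'$ at $s'$ to realise $\Sigma$ at some successor $t'$ of $s'$.

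The key step is establishing finite satisfiability of $\Sigma$ in $R'(s')$. Given any $\phi_1,\dots,\phi_n\in\Sigma$, set $\phi=\phi_1\land\cdots\land\phi_n$, so $t\vDash\phi$. Then $t\vDash\phi\land\neg\psi$, hence $t\nvDash\phi\to\psi$, while $s\vDash\psi$ gives $s\vDash\phi\to\psi$. Since $sRt$, the worlds $s$ and $t$ disagree on $\phi\to\psi$, so $s\nvDash\blacktriangle(\phi\to\psi)$. By $s\equkw s'$ this transfers: $s'\vDash\phi\to\psi$ (because $s'\vDash\psi$) and $s'\nvDash\blacktriangle(\phi\to\psi)$. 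Unpacking the semantics of $\blacktriangle$, some successor $t'$ of $s'$ must disagree with $s'$ on $\phi\to\psi$, and since $s'$ satisfies it, $t'$ must falsify it; hence $t'\vDash\phi$, which gives $t'\vDash\phi_i$ for each $i$. Thus every finite subset of $\Sigma$ is satisfiable in $R'(s')$.

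By the $\SNCL$-saturation of $\M'$, there exists $t'\in R'(s')$ with $t'\vDash\Sigma$. Because $\SNCL$ is closed under negation, $\chi\notin\Sigma$ implies $\neg\chi\in\Sigma$, which forces $t'\nvDash\chi$; hence $t\equkw t'$, completing ($\blacktriangle$-Forth). The hardest part is spotting the right test formula: the implication $\phi\to\psi$ is the natural device because it is falsified exactly at worlds that look like $t$ (satisfy $\phi$, fail $\psi$) and satisfied at worlds that look like $s$ (satisfy $\psi$), so $\blacktriangle(\phi\to\psi)$ becomes a single $\SNCL$-formula that the assumed $\blacktriangle$-equivalence can transfer and that the semantics of $\blacktriangle$ can then unwind into a successor satisfying $\phi$.
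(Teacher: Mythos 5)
Your proposal is correct and follows essentially the same route as the paper: both show $\equkw$ is itself a $\blacktriangle$-bisimulation, and both establish finite satisfiability of the $\SNCL$-theory of $t$ in $R'(s')$ by transferring the test formula $\blacktriangle(\bigwedge\Sigma\to\psi)$ across the assumed $\blacktriangle$-equivalence of $s$ and $s'$. The only cosmetic difference is that the paper phrases the finite-satisfiability step as a proof by contradiction (assuming every successor of $s'$ falsifies $\bigwedge\Sigma$), whereas you unpack $s'\nvDash\blacktriangle(\bigwedge\Sigma\to\psi)$ directly to exhibit the desired successor.
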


\begin{proof}
Based on Proposition \ref{prop.invariance}, we need only show the direction from left to right.

Let $\M=\lr{S,R,V}$ and $\M'=\lr{S',R',V'}$ be $\SNCL$-saturated models. Suppose that $(\M,s)\equkw(\M',s')$, we will show that $\equkw$ is a $\blacktriangle$-bisimulation on the disjoint union of $\M$ and $\M'$, which implies $(\M,s)\kwbis (\M',s')$. It suffices to show the condition ($\blacktriangle$-Forth) holds, as the proof of ($\blacktriangle$-Back) is similar.

Assume that $sRt$ and $s\not\equkw t$ for some $t$, to show there exists $t'$ such that $s'R't'$ and $t\equkw t'$. Let $\Gamma=\{\phi\in\SNCL\mid t\vDash\phi\}$. It is clear that $t\vDash\Gamma$. Then for any finite $\Sigma\subseteq \Gamma$, $t\vDash\bigwedge\Sigma$. As $s\not\equkw t$, there exists $\psi\in\SNCL$ such that $s\vDash\psi$ but $t\nvDash\psi$, thus $s\vDash\bigwedge\Sigma\to\psi$ but $t\nvDash\bigwedge\Sigma\to\psi$, hence $s\nvDash\blacktriangle(\bigwedge\Sigma\to\psi)$. If for any $u'$ with $s'R'u'$ we have $u'\nvDash\bigwedge\Sigma$, then $u'\vDash\bigwedge\Sigma\to\psi$. Since $s\equkw s'$ and $s\vDash\psi$, it follows that $s'\vDash\psi$, thus $s'\vDash\bigwedge\Sigma\to\psi$, hence $s'\vDash\blacktriangle(\bigwedge\Sigma\to\psi)$, contrary to $s\equkw s'$ and $s\nvDash\blacktriangle(\bigwedge\Sigma\to\psi)$. Therefore there exists $u'$ such that $s'R'u'$ and $u'\vDash\bigwedge\Sigma$. Because $\M'$ is $\SNCL$-saturated, for some $t'$ we have $s'R't'$ and $t'\vDash\Gamma$. Furthermore, $t\equkw t'$: given any $\phi\in\SNCL$, if $t\vDash\phi$, then $\phi\in\Gamma$, hence $t'\vDash\phi$; if $t\nvDash\phi$, i.e., $t\vDash\neg\phi$, then $\neg\phi\in\Gamma$, hence $t'\vDash\neg\phi$, i.e., $t'\nvDash\phi$, as desired.
\end{proof}

The condition `$\SNCL$-saturated' is also indispensable, which can also be illustrated with Example \ref{example.not-m-saturated}. In that example, $\M$ is not $\SNCL$-saturated. To see this point, note that the set $\{\neg p_1,\neg p_2,\cdots,\neg p_n\}$ is finitely satisfiable in the successors of $s$, but the set itself is {\em not} satisfiable in the successors of $s$. In the meantime, $(\M,s)\equkw (\M',s')$ but $(\M,s)\not\kwbis(\M',s')$.

\medskip

We have seen from Def.~\ref{def.delta-bis} that the notion of $\blacktriangle$-bisimulation is quite different from that of $\Box$-bisimulation. However, it is surprising that the notion of $\blacktriangle$-bisimulation contraction is very similar to that of $\Box$-bisimulation contraction, by simply replacing $\Kbis$ with $\kwbis$.
\begin{definition}[$\blacktriangle$-bisimulation contraction] Let $\M=\lr{S,R,V}$ be a model. The {\em $\blacktriangle$-bisimulation contraction} of $\M$ is the quotient structure $[\M]=\lr{[S],[R],[V]}$ such that
\begin{itemize}
\item $[S]=\{[s]\mid s\in S\}$, where $[s]=\{t\in S\mid s\kwbis t\}$;
\item $[s][R][t]$ iff there exist $s'\in[s]$ and $t'\in[t]$ such that $s'Rt'$\weg{ and $s'\not\kwbis t'$};
\item $[V](p)=\{[s]\mid s\in V(p)\}$ for all $p\in \BP$.
\end{itemize}
\end{definition}

Under this definition, we obtain that the contracted model (via $\kwbis$) is $\blacktriangle$-bisimilar to the original model, and that the $\mathcal{S}5$-model property is preserved under $\blacktriangle$-bisimulation contraction.

\begin{proposition}
Let $\M=\lr{S,R,V}$ be a model, and let $[\M]=\lr{[S],[R],[V]}$ be the $\blacktriangle$-bisimulation contraction of $\M$. Then for any $s\in S$, we have $([\M],[s])\kwbis(\M,s)$.
\end{proposition}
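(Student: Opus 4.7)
The plan is to exhibit an explicit $\blacktriangle$-bisimulation $Z$ on the disjoint union of $[\M]$ and $\M$ that contains the pair $([s], s)$. The naive candidate $Z_0 = \{([u], v) \mid u \kwbis v\}$ is too coarse: since $Z_0$ contains no pairs internal to $[\M]$, the premise ``$([u], [u']) \notin Z$'' in the Forth clause is trivially satisfied for every $[\M]$-successor $[u']$, forcing every transition $[u][R][u']$ to be matched by some $vRv'$ with $u' \kwbis v'$. This fails precisely when $[u][R][u]$ arises from a $\kwbis$-internal edge $u_1Ru_1'$ with $u_1 \kwbis u_1' \kwbis u$, for then $v$ need not have any successor $\kwbis$-equivalent to itself (as is already illustrated by the self-loop vs.\ endpoint configuration of Proposition~\ref{prop.lessexp}).

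The remedy is to enlarge $Z$ with two ``absorbing'' blocks -- the diagonal on $[S]$ and the full relation $\kwbis$ on $S$ -- whose sole role is to render the Forth/Back premise vacuous on the problematic internal edges:
\[
Z \;=\; \{([u], v) \mid u \kwbis v\} \;\cup\; \{([u], [u]) \mid u \in S\} \;\cup\; \{(u, v) \mid u \kwbis v\}.
\]
Reflexivity of $\kwbis$ (established earlier) gives $([s], s) \in Z$, so $Z$ is nonempty and contains the required pair.

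Verification proceeds block by block. Invariance is immediate from the Invariance clause of $\kwbis$ and the fact that $[V]$ is well-defined on $\kwbis$-classes. For ($\blacktriangle$-Forth) on a first-block pair $([u], v)$ with $[u][R][u']$ and $([u], [u']) \notin Z$: the second block forces $[u] \neq [u']$; pick witnesses $u_1 \in [u]$, $u_1' \in [u']$ with $u_1Ru_1'$, so $u_1 \not\kwbis u_1'$, and apply the Forth clause of $\kwbis$ itself (the largest $\blacktriangle$-bisimulation, previously shown to be an equivalence) to $u_1 \kwbis v$ and the transition $u_1Ru_1'$, obtaining $v'$ with $vRv'$ and $u_1' \kwbis v'$; then $v' \in [u']$ and hence $[u']Zv'$. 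The corresponding ($\blacktriangle$-Back) case uses the Back clause of $\kwbis$ and is absorbed on the $\M$-side by the third block. The second-block cases $([u], [u])$ are handled trivially by taking the opposing world itself as the witness, and the third-block cases reduce directly to Forth/Back of $\kwbis$ on $\M$.

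The main obstacle is the conceptual one just flagged: recognising that the cross-model pairs alone do not form a bisimulation, and that one must adjoin the diagonal on $[S]$ together with $\kwbis$ on $S$ to absorb loops-up-to-bisimilarity. Once this construction is in place, the remaining work is a mechanical case analysis leaning only on the definition of $[R]$ and on the fact that $\kwbis$ is itself a $\blacktriangle$-bisimulation.
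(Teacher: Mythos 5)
Your construction and verification coincide with the paper's: the paper takes $Z=\{([w],w)\mid w\in S\}\cup\{([w],[v])\mid w\kwbis v\}$ (your first two blocks, since $\{([w],w)\mid w\in S\}=\{([u],v)\mid u\kwbis v\}$ and the second block is the diagonal on $[S]$) and runs exactly your Forth/Back argument via the fact that $\kwbis$ is itself a $\blacktriangle$-bisimulation. Your third block ($\kwbis$ on $S$) is harmless but redundant: in the Back direction the quotient always supplies the matching transition $[w][R][v]$ for any $wRv$, so nothing on the $\M$-side needs to be absorbed.
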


\begin{proof}
%Define $Z=\{([w],w)\mid w\in S\}\cup\{(w,v)\mid w,v\in S, w\kwbis v\}\cup\{([w],[v])\mid w,v\in S, w\kwbis v\}$.
Define $Z=\{([w],w)\mid w\in S\}\cup\{([w],[v])\mid w,v\in S, w\kwbis v\}$. First, since $S$ is nonempty, $Z$ is nonempty. We need to show that $Z$ satisfies the three conditions of $\blacktriangle$-bisimulation, which entails $([\M],[s])\kwbis(\M,s)$. Assume that $[w]Zw$.

(Inv): by the definition of $[V]$.

($\blacktriangle$-Forth): suppose that $[w][R][v]$ and $([w],[v])\notin Z$, then by definition of $[R]$, there exist $w'\in[w]$ and $v'\in[v]$ such that $w'Rv'$. As $[w]=[w']$ and $[v]=[v']$, we get from the supposition that $([w'],[v'])\notin Z$. By definition of $Z$, we obtain $w'\not\kwbis v'$. Because $\kwbis$ is a $\blacktriangle$-bisimulation, from $w\kwbis w'$ it follows that there exists $u$ such that $wRu$ and $u\kwbis v'$, thus $[u]=[v']=[v]$. It is clear that $[u]Zu$, that is, $[v]Zu$.

($\blacktriangle$-Back): suppose that $wRv$ and $(w,v)\notin Z$. By definition of $[R]$, we have $[w][R][v]$. Obviously, $[v]Zv$.
\end{proof}

\begin{proposition}
Let $\M=\lr{S,R,V}$ be a model, and let $[\M]=\lr{[S],[R],[V]}$ be the $\blacktriangle$-bisimulation contraction of $\M$. If $\M$ is an $\mathcal{S}5$-model, then $[\M]$ is also an $\mathcal{S}5$-model.
\end{proposition}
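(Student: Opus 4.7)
The plan is to show that $[R]$ inherits reflexivity, symmetry, and transitivity from $R$ (equivalently, reflexivity and the Euclidean property). Reflexivity and symmetry are essentially cosmetic: for any $[s]\in[S]$, reflexivity of $R$ gives $sRs$ with $s\in[s]$, hence $[s][R][s]$; and if $[s][R][t]$ is witnessed by $s'\in[s]$ and $t'\in[t]$ with $s'Rt'$, then symmetry of $R$ yields $t'Rs'$, hence $[t][R][s]$.

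The real work is transitivity of $[R]$. Assume $[s][R][t]$ and $[t][R][u]$. Unpacking the definition, there are $s_1\in[s],t_1\in[t]$ with $s_1Rt_1$, and $t_2\in[t],u_2\in[u]$ with $t_2Ru_2$. Since $t_1\kwbis t\kwbis t_2$ and $\kwbis$ is transitive (proved earlier as an equivalence relation), we have $t_1\kwbis t_2$. The largest $\blacktriangle$-bisimulation $\kwbis$ itself witnesses this relation, so I would apply the ($\blacktriangle$-Back) clause to the pair $(t_1,t_2)$ and the edge $t_2Ru_2$.

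The case analysis is where the key obstacle lies, and it splits cleanly. If $(t_2,u_2)\notin{\kwbis}$, then ($\blacktriangle$-Back) produces a $v$ with $t_1Rv$ and $v\kwbis u_2$, so $v\in[u]$; transitivity of $R$ in $\M$ applied to $s_1Rt_1$ and $t_1Rv$ gives $s_1Rv$, and this witnesses $[s][R][u]$. If instead $(t_2,u_2)\in{\kwbis}$, then $u_2\kwbis t_2\kwbis t$ forces $[t]=[u]$, so $[s][R][u]$ follows directly from the assumption $[s][R][t]$.

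The main obstacle to anticipate is exactly this case distinction: the ($\blacktriangle$-Back) clause only gives a successor when the target pair is not already bisimilar, so one must separately handle the degenerate case in which the two witnesses collapse to the same equivalence class. Once this is in place, the verification uses nothing beyond transitivity of $\kwbis$ (already established) and transitivity of $R$ in the S5 model, so no further subtleties arise.
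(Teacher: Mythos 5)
Your proof is correct and follows essentially the same route as the paper's: reflexivity and symmetry are immediate, and for transitivity of $[R]$ you make exactly the same case split on whether the second witness pair is already $\blacktriangle$-bisimilar, using ($\blacktriangle$-Back) of $\kwbis$ in the non-degenerate case and transitivity of $R$ to produce the witnessing edge. No gaps.
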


\begin{proof}
Suppose that $\M=\lr{S,R,V}$ is an $\mathcal{S}5$-model, to show that $[\M]=\lr{[S],[R],[V]}$ is also an $\mathcal{S}5$-model. We need to show that $[R]$ is an equivalence relation, that is, $[R]$ satisfies the properties of reflexivity, symmetry and transitivity.
The nontrivial case is transitivity. For this, given any $[s],[t],[u]\in[S]$, assume that $[s][R][t]$ and $[t][R][u]$, we need only show that $[s][R][u]$.

By assumption and the definition of $[R]$, there exists $s'\in[s],t'\in[t]$ such that $s'Rt'$, and there exists $t''\in[t],u'\in[u]$ such that $t''Ru'$. We now consider two cases:
\begin{itemize}
\item $t''\kwbis u'$. In this case, we have $[t'']=[u]$, then $[t]=[u]$, thus $[s][R][u]$.
\item $t''\not\kwbis u'$. In this case, using $t'\kwbis t''$ (as $t'\in[t]$ and $t''\in[t]$) and the fact that $\kwbis$ is a $\blacktriangle$-bisimulation, we obtain that there exists $u''$ such that $t'Ru''$ and $u''\kwbis u'$, thus $u''\in[u']=[u]$. Moreover, From $s'Rt',t'Ru''$ and the transitivity of $R$, it follows that $s'Ru''$. We have thus shown that there exists $s'\in[s],u''\in[u]$ with $s'Ru''$, therefore $[s][R][u]$.
\end{itemize}
In both cases we have $[s][R][u]$, as desired.
\end{proof}

\subsection{Characterization Results}

As $\vDash\blacktriangle\phi\leftrightarrow(\phi\to\Box\phi)\land(\neg\phi\to\Box\neg\phi)$, strong noncontingency logic $\SNCL$ can be seen as a fragment of standard modal logic $\ML$, and also a fragment of first-order logic. In this subsection we characterize strong noncontingency logic within standard modal logic and within first-order logic. To make our exposition self-contained, we introduce some definitions and results from e.g. \cite{blackburnetal:2001} without proofs.

\begin{definition}[Ultrafilter Extension]\label{def.ue} Let $\M=\lr{S,R,V}$ be a model. We say that $ue(\M)=\lr{Uf(S),R^{ue},V^{ue}}$ is the \emph{ultrafilter extension} of $\M$, if
\begin{itemize}
\item $Uf(S)=\{u\mid u \text{ is an ultrafilter over }S\}$, where an ultrafilter $u\subseteq \mathcal{P}(S)$ satisfies the following properties:
      \begin{itemize}
      \item $S\in u$, $\emptyset\notin u$,
      \item $X,Y\in u$ implies $X\cap Y\in u$,
      \item $X\in u$ and $X\subseteq Z\subseteq S$ implies $Z\in u$
      \item For all $X\in\mathcal{P}(S)$, $X\in u$ iff $S\backslash X\notin u$% ($-X$ means the complement of $X$)
      \end{itemize}
\item For all $s,t\in Uf(S)$, $sR^{ue}t$ iff for all $X\subseteq S$, $X\in t$ implies $\lambda(X)\in s$, where $\lambda(X)=\{w\in S\mid\text{ there exists }v \text{ such that }wRv \text{ and }v\in X\}$
\item $V^{ue}(p)=\{u\in Uf(S)\mid V(p)\in u\}$.
\end{itemize}
\end{definition}

\begin{definition}[Principle Ultrafilter] Let $S$ be a non-empty set. Given any $s\in S$, the \emph{principle ultrafilter} $\pi_s$ generated by $s$ is defined by $\pi_s=\{X\subseteq S\mid s\in X\}$. It can be shown that every principle ultrafilter is an ultrafilter.
\end{definition}

\begin{proposition}\label{prop.equk}
Let $\M$ be a model and $ue(\M)$ be its ultrafilter extension. Then $ue(\M)$ is m-saturated and $(\M,s)\equk (ue(\M),\pi_s)$.
\end{proposition}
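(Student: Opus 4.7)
The plan is to reduce both claims to a single \emph{truth lemma} for ultrafilter extensions, and then obtain $\blacktriangle$-saturation via a finite-intersection-property argument on the truth sets of the candidate formulas. For this plan, abbreviate $\llbracket\phi\rrbracket^\M := \{s\in S\mid \M,s\vDash\phi\}$.

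\textbf{Step 1: Truth lemma.} I would first prove, by induction on $\phi\in\ML$, that for every ultrafilter $u\in Uf(S)$,
\[
ue(\M),u\vDash\phi \iff \llbracket\phi\rrbracket^\M\in u.
\]
The atomic case is exactly the definition of $V^{ue}$. The Boolean cases use the ultrafilter axioms (closure under finite intersection and the fact that $X\in u$ iff $S\setminus X\notin u$). The only nontrivial case is $\Box\phi$. Noting that $\llbracket\Box\phi\rrbracket^\M = S\setminus\lambda(S\setminus\llbracket\phi\rrbracket^\M)$, the forward direction is routine: if $\llbracket\Box\phi\rrbracket^\M\in u$ and $uR^{ue}v$, then from $S\setminus\llbracket\phi\rrbracket^\M\in v$ one would obtain $\lambda(S\setminus\llbracket\phi\rrbracket^\M)\in u$, contradicting $\llbracket\Box\phi\rrbracket^\M\in u$. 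The converse is the hard direction: assuming $\llbracket\Box\phi\rrbracket^\M\notin u$, i.e.\ $\lambda(S\setminus\llbracket\phi\rrbracket^\M)\in u$, I would build a successor ultrafilter $v$ containing $S\setminus\llbracket\phi\rrbracket^\M$ by showing that the family
\[
\{S\setminus\llbracket\phi\rrbracket^\M\}\cup\{X\subseteq S\mid \lambda(S\setminus X)\notin u\}
\]
has the finite intersection property, and then extending it to an ultrafilter (Ultrafilter Theorem). Checking that this family has FIP, and checking that any such $v$ is indeed a successor of $u$, is the routine bookkeeping step.

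\textbf{Step 2: $\Box$-equivalence.} Once the truth lemma is in hand, $(\M,s)\equk (ue(\M),\pi_s)$ drops out immediately: $ue(\M),\pi_s\vDash\phi$ iff $\llbracket\phi\rrbracket^\M\in\pi_s$ iff $s\in\llbracket\phi\rrbracket^\M$ iff $\M,s\vDash\phi$, for every $\phi\in\ML$.

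\textbf{Step 3: m-saturation.} Fix $u\in Uf(S)$ and a set $\Gamma\subseteq\ML$ every finite subset of which is satisfiable in $R^{ue}(u)$; I need some $v\in R^{ue}(u)$ with $v\vDash\Gamma$. By the truth lemma, this reduces to finding $v\in Uf(S)$ with $uR^{ue}v$ such that $\llbracket\phi\rrbracket^\M\in v$ for every $\phi\in\Gamma$. I would verify that
\[
\mathcal{F}=\{\llbracket\phi\rrbracket^\M\mid\phi\in\Gamma\}\;\cup\;\{X\subseteq S\mid \lambda(S\setminus X)\notin u\}
\]
has the finite intersection property. For any finite $\{\phi_1,\dots,\phi_n\}\subseteq\Gamma$ the finite-satisfiability hypothesis gives a successor $v'$ of $u$ with $v'\vDash\phi_1\wedge\cdots\wedge\phi_n$, so by the truth lemma $\bigcap_i\llbracket\phi_i\rrbracket^\M\in v'$, and hence $\lambda(\bigcap_i\llbracket\phi_i\rrbracket^\M)\in u$, forcing this intersection to be nonempty; combining with sets from the second family is handled by the same observation that $\lambda$ respects finite intersections up to the ultrafilter. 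Extending $\mathcal{F}$ to an ultrafilter $v$ yields a $v\in Uf(S)$ containing every $\llbracket\phi\rrbracket^\M$ (so $v\vDash\Gamma$ by the truth lemma), and containing every $X$ with $\lambda(S\setminus X)\notin u$, which by contrapositive is exactly the condition $uR^{ue}v$.

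\textbf{Main obstacle.} The real work is the existential step in Step 1 and the FIP verification in Step 3; both come down to showing that the ``inverse image'' family $\{X\mid\lambda(S\setminus X)\notin u\}$ meshes correctly with the target sets, which is a standard but delicate ultrafilter manipulation. Once that is set up once, both halves of the proposition fall out of the same machinery.
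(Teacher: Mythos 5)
Your proposal is correct, but there is nothing in the paper to compare it against: the paper does not prove this proposition at all. It is explicitly imported from the literature without proof (``we introduce some definitions and results from e.g.\ \cite{blackburnetal:2001} without proofs''), so Proposition~\ref{prop.equk} is simply a citation of the standard facts that ultrafilter extensions are m-saturated and that $\pi_s$ is modally equivalent to $s$. What you have written out is precisely the textbook argument behind that citation: the truth lemma $ue(\M),u\vDash\phi$ iff the truth set of $\phi$ in $\M$ belongs to $u$, with the existential direction of the $\Box$ case and the m-saturation claim both handled by a finite-intersection-property argument over the family $\{X\subseteq S\mid \lambda(S\setminus X)\notin u\}$. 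The one step you wave at --- that this family ``meshes'' with the target truth sets --- does go through: writing $\hat\lambda(X)=S\setminus\lambda(S\setminus X)$, one has that $\hat\lambda$ commutes with finite intersections, that $\hat\lambda(X)\cap\lambda(Y)\subseteq\lambda(X\cap Y)$, and that $\lambda(\emptyset)=\emptyset\notin u$, which together give nonemptiness of every finite intersection in both Step~1 and Step~3; and absorbing the whole family $\{X\mid\hat\lambda(X)\in u\}$ into the extended ultrafilter $v$ is, by contraposition, exactly the condition $uR^{ue}v$. So your proof is sound and supplies the details the paper deliberately omits; in the context of this paper the only thing actually needed downstream (Lemma~\ref{lem.equkw}) is the statement itself, for which the citation suffices.
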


Since $\SNCL$ can be viewed as a fragment of $\ML$, every $\SNCL$-formula can be seen as an $\ML$-formula. Thus we have

\begin{lemma}\label{lem.equkw}
Let $\M$ be a model and $s\in\M$. Then $ue(\M)$ is $\SNCL$-saturated and $(\M,s)\equkw(ue(\M),\pi_s)$.
\end{lemma}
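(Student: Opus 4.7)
The plan is to reduce both claims directly to their $\ML$-analogues via the truth-preserving translation $t:\SNCL\to\ML$ constructed in the proof of Proposition~\ref{prop.lessexp}. Recall that for every $\phi\in\SNCL$ and every pointed model $(\N,w)$, we have $\N,w\vDash\phi$ iff $\N,w\vDash t(\phi)$. So $\SNCL$ is, up to logical equivalence, a subset of $\ML$, and one should expect any semantic property formulated via finite subsets and satisfiability to transfer immediately.

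First I would prove $(\M,s)\equkw(ue(\M),\pi_s)$. Let $\phi\in\SNCL$ be arbitrary. By the translation, $\M,s\vDash\phi$ iff $\M,s\vDash t(\phi)$ and $ue(\M),\pi_s\vDash\phi$ iff $ue(\M),\pi_s\vDash t(\phi)$. Since $t(\phi)\in\ML$, Proposition~\ref{prop.equk} gives $\M,s\vDash t(\phi)$ iff $ue(\M),\pi_s\vDash t(\phi)$. Chaining these three equivalences yields $\M,s\vDash\phi$ iff $ue(\M),\pi_s\vDash\phi$, as desired.

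Next I would show that $ue(\M)$ is $\SNCL$-saturated. Fix a point $u\in ue(\M)$ and a set $\Gamma\subseteq\SNCL$ such that every finite subset of $\Gamma$ is satisfiable in $R^{ue}(u)$. Consider the translated set $\Gamma^{t}=\{t(\phi)\mid \phi\in\Gamma\}\subseteq\ML$. Since $t$ preserves truth pointwise, every finite subset of $\Gamma^{t}$ is likewise satisfiable in $R^{ue}(u)$. By Proposition~\ref{prop.equk}, $ue(\M)$ is $\ML$-saturated, so there exists $v\in R^{ue}(u)$ with $ue(\M),v\vDash\Gamma^{t}$. Applying the translation once more, $ue(\M),v\vDash\phi$ for every $\phi\in\Gamma$, which is exactly what $\SNCL$-saturation demands.

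There is no real obstacle here; the only point to verify carefully is that the translation $t$ is defined uniformly on all $\SNCL$-formulas (it is, by the clauses in Proposition~\ref{prop.lessexp}) so that the set $\Gamma^{t}$ exists as a well-defined subset of $\ML$ and the finite-satisfiability condition transfers without fuss.
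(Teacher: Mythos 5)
Your proposal is correct and is exactly the argument the paper intends: the paper gives no detailed proof, merely remarking that since $\SNCL$ can be viewed (via the translation $t$) as a fragment of $\ML$, the lemma follows from Proposition~\ref{prop.equk}. Your write-up simply makes that reduction explicit, including the correct transfer of the finite-satisfiability condition through $t$ for the saturation claim.
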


From Lemma \ref{lem.equkw} and Proposition \ref{prop.delta-saturated}, it follows that
\begin{lemma}\label{lem.ue}
Let $(\M,s)$ and $(\N,t)$ be pointed models. Then $(\M,s)\equkw(\N,t)$ implies $(ue(\M),\pi_s)\kwbis(ue(\N),\pi_t)$.
\end{lemma}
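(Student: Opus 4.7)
The plan is to combine Lemma~\ref{lem.equkw} with Proposition~\ref{prop.delta-saturated}; the statement is essentially a corollary chaining these two results together via transitivity of $\equkw$.

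First I would apply Lemma~\ref{lem.equkw} to each model separately: it gives simultaneously that $ue(\M)$ is $\SNCL$-saturated with $(\M,s)\equkw(ue(\M),\pi_s)$, and that $ue(\N)$ is $\SNCL$-saturated with $(\N,t)\equkw(ue(\N),\pi_t)$. Next I would observe that $\equkw$ is plainly an equivalence relation on pointed models (two pointed models are $\blacktriangle$-equivalent iff they satisfy exactly the same $\SNCL$-formulas, so reflexivity, symmetry and transitivity are immediate from properties of equality on sets of formulas). Combining the two equivalences above with the hypothesis $(\M,s)\equkw(\N,t)$ via transitivity yields $(ue(\M),\pi_s)\equkw(ue(\N),\pi_t)$.

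At this point both sides are $\SNCL$-saturated pointed models that are $\blacktriangle$-equivalent, so Proposition~\ref{prop.delta-saturated} applies directly and delivers $(ue(\M),\pi_s)\kwbis(ue(\N),\pi_t)$, which is the desired conclusion.

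There is no genuine obstacle here: the only thing one has to be mildly careful about is to explicitly invoke both halves of Lemma~\ref{lem.equkw} (the saturation of the ultrafilter extensions and the $\blacktriangle$-equivalence between a model and its ultrafilter extension), since Proposition~\ref{prop.delta-saturated} requires saturation of \emph{both} sides of the bisimilarity claim. Apart from that, the argument is a clean two-line composition.
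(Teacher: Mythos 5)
Your proposal is correct and is exactly the argument the paper intends: the lemma is stated as an immediate consequence of Lemma~\ref{lem.equkw} and Proposition~\ref{prop.delta-saturated}, obtained by transferring $\blacktriangle$-equivalence to the ultrafilter extensions via transitivity of $\equkw$ and then invoking the saturated Hennessy--Milner result. Your explicit note that saturation of \emph{both} ultrafilter extensions is needed is a helpful clarification of what the paper leaves implicit.
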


We are now close to prove two characterization results: strong noncontingency logic is the $\blacktriangle$-bisimulation-invariant fragment of standard modal logic and of first-order logic. In the following, by an $\ML$-formula $\phi$ (resp. a first-order formula $\alpha$) {\em is invariant under $\blacktriangle$-bisimulation}, we mean for any models $(\M,s)$ and $(\N,t)$, if $(\M,s)\kwbis(\N,t)$, then $\M,s\vDash\phi$ iff $\N,t\vDash\phi$ (resp. $\M,s\vDash \alpha$ iff $\N,t\vDash\alpha$).
\begin{theorem}\label{thm.delta-characterization}
An $\ML$-formula is equivalent to an $\SNCL$-formula iff it is invariant under $\blacktriangle$-bisimulation.
\end{theorem}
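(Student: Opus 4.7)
The plan is to prove both directions of the equivalence separately. The forward direction (every $\SNCL$-equivalent $\ML$-formula is $\blacktriangle$-bisimulation invariant) is immediate from Proposition~\ref{prop.invariance}, since that proposition says $\SNCL$-formulas themselves are invariant under $\kwbis$, and invariance transfers along semantic equivalence. The substantive half is the converse: given an $\ML$-formula $\phi$ that is invariant under $\kwbis$, I must produce an $\SNCL$-formula equivalent to it.

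For the hard direction I will follow the classical van Benthem template, adapted to the $\blacktriangle$-setting. Let $SNC(\phi) := \{\psi \in \SNCL \mid \phi \vDash \psi\}$ denote the set of $\SNCL$-consequences of $\phi$. By compactness of $\ML$ (which $\SNCL$ inherits as a fragment), it suffices to show $SNC(\phi) \vDash \phi$: then some finite $\Sigma \subseteq SNC(\phi)$ satisfies $\Sigma \vDash \phi$, and since $\phi \vDash \bigwedge \Sigma$ as well, $\phi$ is equivalent to the $\SNCL$-formula $\bigwedge \Sigma$. To prove $SNC(\phi) \vDash \phi$, fix a pointed model $(\M, s)$ satisfying $SNC(\phi)$ and let $T_s := \{\psi \in \SNCL \mid \M, s \vDash \psi\}$. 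I claim $T_s \cup \{\phi\}$ is satisfiable: otherwise compactness yields a finite $\Sigma' \subseteq T_s$ with $\phi \vDash \neg \bigwedge \Sigma'$, so $\neg \bigwedge \Sigma' \in SNC(\phi)$, which contradicts $\M, s \vDash \bigwedge \Sigma'$. Take $(\N, t) \vDash T_s \cup \{\phi\}$; by construction $(\M, s) \equkw (\N, t)$.

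Now I pass to ultrafilter extensions to gain enough saturation. By Lemma~\ref{lem.equkw}, $(\M,s) \equkw (ue(\M), \pi_s)$ and $(\N, t) \equkw (ue(\N), \pi_t)$, and both extensions are $\SNCL$-saturated, hence $(ue(\M), \pi_s) \equkw (ue(\N), \pi_t)$. Lemma~\ref{lem.ue} (or equivalently Proposition~\ref{prop.delta-saturated} applied to saturated models) then upgrades this to $(ue(\M), \pi_s) \kwbis (ue(\N), \pi_t)$. Because $\phi$ is an $\ML$-formula, Proposition~\ref{prop.equk} gives $(\N, t) \equk (ue(\N), \pi_t)$, so from $\N, t \vDash \phi$ we conclude $ue(\N), \pi_t \vDash \phi$. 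The hypothesis that $\phi$ is invariant under $\kwbis$ transports $\phi$ to $(ue(\M), \pi_s)$, and a final application of $(\M, s) \equk (ue(\M), \pi_s)$ yields $\M, s \vDash \phi$, completing the argument.

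The delicate ingredient is the interplay of the two saturation/equivalence notions in the ultrafilter-extension step: $\phi$ must travel between $\M$ and $ue(\M)$ via $\equk$ (using that it is an $\ML$-formula) while simultaneously the bisimulation between $ue(\M)$ and $ue(\N)$ is only guaranteed at the coarser $\kwbis$ level; the whole proof only closes because $\phi$ is assumed $\kwbis$-invariant, so the two notions meet exactly where they are needed. Once that bookkeeping is clear, the remaining steps are routine applications of compactness and the lemmas already in hand.
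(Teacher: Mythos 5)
Your proof is correct and follows essentially the same route as the paper's: reduce to the set of $\SNCL$-consequences of $\phi$, apply compactness twice to obtain a $\blacktriangle$-equivalent pointed model satisfying $\phi$, and pass to ultrafilter extensions to upgrade $\blacktriangle$-equivalence to $\blacktriangle$-bisimilarity before transporting $\phi$ back. If anything, your version is slightly more careful at one point: to move the $\ML$-formula $\phi$ between $\N,t$ and $ue(\N),\pi_t$ (and back from $ue(\M),\pi_s$ to $\M,s$) you correctly invoke $\Box$-equivalence via Proposition~\ref{prop.equk}, whereas the paper cites Lemma~\ref{lem.equkw}, which only guarantees $\blacktriangle$-equivalence and would not by itself suffice for an arbitrary $\ML$-formula.
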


\begin{proof}
Based on Prop.~\ref{prop.invariance}, we need only show the direction from right to left. For this, suppose that an $\ML$-formula $\phi$ is invariant under $\blacktriangle$-bisimulation.

Let $MOC(\phi)=\{t(\psi)\mid \psi\in\SNCL,\phi\vDash t(\psi)\}$, where $t$ is a translation function which recursively translates every $\SNCL$-formulas into the corresponding $\ML$-formulas. In particular, $t(\blacktriangle\psi)=(t(\psi)\to\Box t(\psi))\land(t(\neg\psi)\to\Box \neg t(\psi))$.

If we can show that $MOC(\phi)\vDash\phi$, then by Compactness Theorem of modal logic, there exists a finite set $\Gamma\subseteq MOC(\phi)$ such that $\bigwedge\Gamma\vDash \phi$, i.e., $\vDash\bigwedge\Gamma\to \phi$. Besides, the definition of $MOC(\phi)$ implies that $\phi\vDash\bigwedge\Gamma$, i.e., $\vDash \phi\to\bigwedge\Gamma$, and thus $\vDash\bigwedge\Gamma\lra\phi$. Since every $\gamma\in\Gamma$ is a translation of an $\SNCL$-formula, so is $\Gamma$. Then we are done.

Assume that $\M,s\vDash MOC(\phi)$, to show that $\M,s\vDash\phi$. Let $\Sigma=\{t(\psi)\mid \psi\in\SNCL,\M,s\vDash t(\psi)\}$. We now claim $\Sigma\cup\{\phi\}$ is satisfiable: otherwise, by Compactness Theorem of modal logic again, there exists finite $\Sigma'\subseteq \Sigma$ such that $\phi\vDash\neg\bigwedge\Sigma'$, thus $\neg\bigwedge\Sigma'\in MOC(\phi)$. By assumption, we obtain $\M,s\vDash\neg\bigwedge\Sigma'$. However, the definition of $\Sigma$ and $\Sigma'\subseteq \Sigma$ implies $\M,s\vDash\bigwedge\Sigma'$, contradiction.

Thus we may assume that $\N,t\vDash\Sigma\cup\{\phi\}$. We can show $(\M,s)\equkw(\N,t)$ as follows: for any $\psi\in\SNCL$, if $\M,s\vDash\psi$, then $\M,s\vDash t(\psi)$, and then $t(\psi)\in\Sigma$, thus $\N,t\vDash t(\psi)$, hence $\N,t\vDash\psi$; if $\M,s\nvDash\psi$, i.e., $\M,s\vDash\neg\psi$, then $\M,s\vDash t(\neg\psi)$, and then $t(\neg\psi)\in\Sigma$, thus $\N,t\vDash t(\neg\psi)$, hence $\N,t\vDash\neg\psi$, i.e. $\N,t\nvDash\psi$.

We now construct the ultrafilter extensions of $\M$ and $\N$, denoted by $ue(\M)$ and $ue(\N)$, respectively. According to the fact that $(\M,s)\equkw(\N,t)$ and Lemma \ref{lem.ue}, we have $(ue(\M),\pi_s)\kwbis(ue(\N),\pi_t)$. Since $\N,t\vDash\phi$, by Lemma \ref{lem.equkw}, we have $ue(\N),\pi_t\vDash\phi$. From supposition it follows that $ue(\M),\pi_s\vDash\phi$. Using Lemma \ref{lem.equkw} again, we conclude that $\M,s\vDash\phi$.
\end{proof}

\begin{theorem}
A first-order formula is equivalent to an $\SNCL$-formula iff it is invariant under $\blacktriangle$-bisimulation.
\end{theorem}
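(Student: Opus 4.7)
The plan is to reduce this first-order characterization to the already proven modal characterization (Theorem \ref{thm.delta-characterization}) by way of van Benthem's classical theorem, thereby avoiding a second from-scratch saturation argument. The easy direction is routine: every $\SNCL$-formula $\phi$ is, via the standard translation (treating $\blacktriangle\psi$ as the $\ML$-formula $(t(\psi)\to\Box t(\psi))\wedge(\neg t(\psi)\to\Box\neg t(\psi))$ and then unfolding $\Box$ into a universal first-order quantifier over $R$-successors), equivalent to a first-order formula in one free variable; and Prop.~\ref{prop.invariance} already delivers invariance of $\phi$ under $\kwbis$.

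For the nontrivial direction, suppose $\alpha(x)$ is a first-order formula that is invariant under $\blacktriangle$-bisimulation. By Prop.~\ref{prop.kbisvskwbis}, every $\Box$-bisimulation is in particular a $\blacktriangle$-bisimulation between pointed models, so $\alpha$ is a fortiori invariant under $\Kbis$. Van Benthem's classical characterization theorem then produces an $\ML$-formula $\phi$ equivalent to $\alpha$ on all pointed models. Since $\phi$ and $\alpha$ have the same truth set, $\phi$ inherits invariance under $\kwbis$, and Theorem \ref{thm.delta-characterization} yields an $\SNCL$-formula $\chi$ equivalent to $\phi$, hence equivalent to $\alpha$.

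An alternative, self-contained plan would mimic the proof of Theorem \ref{thm.delta-characterization} directly: let $MOC(\alpha)=\{t(\psi)\mid\psi\in\SNCL,\alpha\vDash t(\psi)\}$, show $MOC(\alpha)\vDash\alpha$ by compactness, pick $(\N,t)\vDash\Sigma\cup\{\alpha\}$ where $\Sigma$ collects the $\SNCL$-theory of $(\M,s)$, and conclude $(\M,s)\equkw(\N,t)$. The complication is that first-order formulas are \emph{not} preserved by ultrafilter extensions, so Lemma~\ref{lem.equkw} cannot be invoked as in the modal proof. One would instead pass to $\omega$-saturated elementary extensions $\M^\ast\succeq\M$ and $\N^\ast\succeq\N$, observe that $\omega$-saturation implies $\SNCL$-saturation, and apply Prop.~\ref{prop.delta-saturated} to conclude $(\M^\ast,s)\kwbis(\N^\ast,t)$; invariance of $\alpha$ under $\kwbis$ plus preservation of $\alpha$ by elementary extensions then transports $\N,t\vDash\alpha$ to $\M,s\vDash\alpha$.

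The main obstacle, if one takes the direct route, is precisely this switch from ultrafilter extensions to $\omega$-saturated elementary extensions and the verification that $\omega$-saturation suffices for $\SNCL$-saturation. The reduction via van Benthem is therefore preferable: it packages both the compactness argument and the saturation argument into previously established black boxes, and turns the proof into a two-line composition of Prop.~\ref{prop.kbisvskwbis}, van Benthem's theorem, and Theorem \ref{thm.delta-characterization}.
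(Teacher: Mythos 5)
Your main argument is correct and is exactly the paper's proof: use Proposition~\ref{prop.kbisvskwbis} to transfer $\blacktriangle$-bisimulation invariance to $\Box$-bisimulation invariance, invoke van Benthem's characterization theorem to obtain an equivalent $\ML$-formula, and then apply Theorem~\ref{thm.delta-characterization}. The alternative saturation-based route you sketch is a reasonable aside, but the route you actually take coincides with the paper's.
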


\begin{proof}
Based on Prop.~\ref{prop.invariance}, we need only show the direction from right to left. For this, suppose that a first-order formula $\alpha$ is invariant under $\blacktriangle$-bisimulation, then by Prop.~\ref{prop.kbisvskwbis}, we have that $\alpha$ is also invariant under $\Box$-bisimulation. From van Benthem Characterization Theorem (cf. e.g. \cite{blackburnetal:2001}), it follows that $\alpha$ is equivalent to an $\ML$-formula $\phi$. From this and supposition, it follows that $\phi$ is invariant under $\blacktriangle$-bisimulation. Applying Thm.~\ref{thm.delta-characterization}, $\phi$ is equivalent to an $\SNCL$-formula. Therefore, $\alpha$ is equivalent to an $\SNCL$-formula.
\end{proof}

\section{Axiomatization: Minimal system}\label{sec.minimal}

In this section, we present a complete axiomatic system for $\SNCL$ over the class of all frames.

\subsection{Axiomatic system and soundness}

\begin{definition} [Axiomatic system \SLCL] The axiomatic system \SLCL\ consists of all propositional tautologies (\TAUT), uniform substitution (\SUB), modus ponens (\MP), plus the following axioms and inference rule:
\[
\begin{array}{ll}
\KwTop& \blacktriangle\top\\

\EquiKw&\blacktriangle\neg p\leftrightarrow\blacktriangle p\\

\KwCon&\blacktriangle p\land\blacktriangle q\to\blacktriangle (p\land q)\\

\R&\text{From }\phi\to\psi \text{ infer }\blacktriangle\phi\land\phi\to\blacktriangle\psi\\
\end{array} \]

A {\em derivation} from $\Gamma$ to $\phi$ in $\SLCL$, notation: $\Gamma\vdash_{\SLCL}\phi$, is a finite sequence of $\SNCL$-formulas in which each formula is either an instantiation of an axiom, or an element of $\Gamma$, or the result of applying an inference rule to prior formulas in the sequence. Formula $\phi$ is {\em provable} in $\SLCL$, or a {\em theorem}, notation: $\vdash\phi$, if there is a derivation from the empty set $\emptyset$ to $\phi$ in $\SLCL$.
\end{definition}

Intuitively, Axiom \KwTop\ means that tautologies are always strongly noncontingent; Axiom \EquiKw\ says that a formula is strongly noncontingent is the same as its negation is strongly noncontingent; Axiom \KwCon\ states that if two formulas are strongly noncontingent, then so is their conjunction; Rule \R\ stipulates the \emph{almost} monotonicity of the strong noncontingency operator.

Note that when it comes to completeness, Axiom \KwTop\ is indispensable in the system \SLCL, because the subsystem $\SLCL-\KwTop$ is incomplete. To see this, define an auxiliary semantics $\Vdash$ as the same as $\vDash$, except that all formulas of the form $\blacktriangle\phi$ are interpreted as \emph{false}. One can check the system $\SLCL-\KwTop$ is sound with respect to the new semantics $\Vdash$, but $\KwTop$ is \emph{not} valid, which entails that \KwTop\ is not provable in $\SLCL-\KwTop$. On the other hand, \KwTop\ is valid under the standard semantics $\vDash$. Therefore, $\SLCL-\KwTop$ is not complete with respect to the semantics $\vDash$.

\medskip

By induction on $n\in\mathbb{N}$, using Axiom \KwCon\ we have
\begin{fact}\label{fact.two}
For all $n\in\mathbb{N}$, $\vdash\blacktriangle\phi_1\land\cdots\land\blacktriangle\phi_n\to\blacktriangle(\phi_1\land\cdots\land\phi_n)$.
\end{fact}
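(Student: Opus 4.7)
The plan is to prove this by straightforward induction on $n$, using Axiom $\KwCon$ as the engine in the inductive step.

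For the base case, take $n=1$ (or $n=0$, falling back on Axiom $\KwTop$). When $n=1$ the statement is just $\vdash \blacktriangle\phi_1 \to \blacktriangle\phi_1$, which is an instance of a propositional tautology via \TAUT\ and \MP. When $n=2$ the statement is exactly Axiom $\KwCon$ after an application of \SUB\ replacing $p$ by $\phi_1$ and $q$ by $\phi_2$.

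For the inductive step, I would assume the induction hypothesis
\[
\vdash \blacktriangle\phi_1 \land \cdots \land \blacktriangle\phi_n \to \blacktriangle(\phi_1 \land \cdots \land \phi_n),
\]
and derive the corresponding claim for $n+1$. First apply \SUB\ to Axiom $\KwCon$, substituting $p$ by $\phi_1\land\cdots\land\phi_n$ and $q$ by $\phi_{n+1}$, to obtain
\[
\vdash \blacktriangle(\phi_1\land\cdots\land\phi_n) \land \blacktriangle\phi_{n+1} \to \blacktriangle\bigl((\phi_1\land\cdots\land\phi_n)\land\phi_{n+1}\bigr).
\]
Reading iterated conjunction left-associatively, the consequent is $\blacktriangle(\phi_1\land\cdots\land\phi_{n+1})$. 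Combining this with the induction hypothesis using propositional reasoning (\TAUT, \MP) yields
\[
\vdash \blacktriangle\phi_1 \land \cdots \land \blacktriangle\phi_{n+1} \to \blacktriangle(\phi_1 \land \cdots \land \phi_{n+1}),
\]
closing the induction.

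There is essentially no obstacle here; the only thing to be careful about is the parenthesization of $\phi_1\land\cdots\land\phi_{n+1}$, but fixing the standard left-associative convention makes the step a direct instance of $\KwCon$ with no need to invoke Rule \R\ or any replacement-of-equivalents principle (which would in any case be delicate, since $\SLCL$ is not monotonic with respect to $\blacktriangle$).
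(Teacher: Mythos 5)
Your proof is correct and follows exactly the route the paper indicates: induction on $n$ with Axiom $\KwCon$ (instantiated via \SUB) driving the inductive step and \TAUT/\MP\ handling the propositional glue. The remark about fixing a left-associative bracketing convention to avoid invoking any replacement-of-equivalents principle is a sensible point of care, but does not change the argument.
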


\weg{\begin{proposition}
\REKw: $\dfrac{\phi\leftrightarrow\psi}{\blacktriangle\phi\leftrightarrow\blacktriangle\psi}$ is admissible in \SLCL.
\end{proposition}

\begin{proof}
Suppose that $\vdash\phi\leftrightarrow\psi$. Then $\vdash\phi\to\psi$. By \R, we have $\vdash\blacktriangle\phi\land\phi\to\blacktriangle\psi$. From the supposition, it follows that $\vdash\neg\phi\to\neg\psi$. By \R\ again, we obtain $\vdash\blacktriangle\neg\phi\land\neg\phi\to\blacktriangle\neg\psi$.
\end{proof}

\begin{proposition}
\KwT: $\blacktriangle \phi\land\blacktriangle(\phi\to \psi)\land \phi\to\blacktriangle \psi$ is provable in \SLCL.
\end{proposition}

\begin{proof} The following is a sequence of proofs in \SLCL:
\[
\begin{array}{ll}

\end{array}
\]
\end{proof}

\weg{\begin{proposition}
The following rule is admissible in \SLCL, where $n\in\mathbb{N}$: $$\dfrac{(\psi_1\land\cdots\land\psi_n)\to\phi}{(\blacktriangle\psi_1\land\psi_1\land\cdots\blacktriangle\psi_n\land\psi_n)\to\blacktriangle\phi}$$
\end{proposition}

\begin{proof}
By induction on $n$.
\begin{itemize}
\item $n=0$. We need to show that: $\vdash\phi$ implies $\vdash\blacktriangle\phi$.

Assume that $\vdash\phi$, then $\vdash\top\to\phi$. By \R, we have $\vdash\blacktriangle\top\land\top\to\blacktriangle\phi$. From \TAUT\ and \KwTop, it follows that $\vdash\blacktriangle\phi$.
\end{itemize}
\end{proof}}}

\begin{proposition}[Soundness]\label{prop.soundness}
The axiomatic system \SLCL\ is sound with respect to the class of all frames.
\end{proposition}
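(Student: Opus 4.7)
The plan is the standard axiomatic soundness argument: verify that every axiom is valid on every frame, and that every inference rule preserves validity. Since TAUT, US and MP are inherited from classical propositional logic in the usual way, the work reduces to checking $\KwTop$, $\EquiKw$, $\KwCon$ and $\R$. I will also appeal freely to the semantic reformulation $\vDash\blacktriangle\phi\leftrightarrow(\phi\to\Box\phi)\land(\neg\phi\to\Box\neg\phi)$ from Fact~\ref{fact.one}, which turns the two-clause semantics into a shape where propositional reasoning suffices.

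First I would dispose of the easy cases. For $\KwTop$, note that the antecedent of the ``false'' clause in the semantics of $\blacktriangle\top$ is never met, while the ``true'' clause reduces to showing that every successor satisfies $\top$, which is immediate. The axiom $\EquiKw$ is already recorded as the validity $\vDash\blacktriangle\phi\leftrightarrow\blacktriangle\neg\phi$ in Fact~\ref{fact.one}, so nothing new needs to be done.

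The axiom $\KwCon$ is the first place where a genuine case split occurs. Fix $\M,s\vDash\blacktriangle p\land\blacktriangle q$. If $s\vDash p\land q$ then $s\vDash p$ and $s\vDash q$, whence by the two assumed $\blacktriangle$-statements every successor of $s$ satisfies both $p$ and $q$, and hence $p\land q$. Otherwise $s\nvDash p\land q$, so $s\nvDash p$ or $s\nvDash q$, and in either subcase the corresponding $\blacktriangle$-assumption forces every successor of $s$ to falsify $p\land q$. In both branches the defining conditions of $\blacktriangle(p\land q)$ at $s$ are met.

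Finally, for the rule $\R$, assume $\vDash\phi\to\psi$ and $\M,s\vDash\blacktriangle\phi\land\phi$. Then $s\vDash\psi$, so only the positive clause of $\blacktriangle\psi$ needs attention: from $s\vDash\phi$ and $\blacktriangle\phi$ every successor satisfies $\phi$, and from $\vDash\phi\to\psi$ every successor then satisfies $\psi$, as required. I expect no real obstacle; the only subtlety worth flagging is that the rule $\R$ really does need the extra conjunct $\phi$ on the left-hand side, since without it one could derive the invalid monotonicity $\blacktriangle\phi\to\blacktriangle\psi$ already ruled out in the paragraph following Prop.~\ref{prop.ad}. Once the four items above are checked, soundness of $\SLCL$ with respect to the class of all frames follows by the standard induction on the length of derivations.
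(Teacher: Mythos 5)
Your proposal is correct and follows essentially the same route as the paper's own proof: a direct semantic check of $\KwTop$, $\EquiKw$, $\KwCon$ (with the same case split on whether $s\vDash p\land q$) and $\R$ (where the conjunct $\phi$ makes the negative clause of $\blacktriangle\psi$ vacuous). Nothing further is needed.
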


\begin{proof}
The validity of Axioms \KwTop\ and \EquiKw\ are immediate from the semantics.

For Axiom \KwCon, given an arbitrary model $\mathcal{M}=\langle S,R,V\rangle$ and any $s\in S$, suppose that $\mathcal{M},s\vDash\blacktriangle p\land\blacktriangle q$, to show $\M,s\vDash\blacktriangle(p\land q)$. First assume that $s\vDash p\land q$. Let $t\in S$ be arbitrary such that $sRt$. By assumption $s\vDash p$ and supposition $s\vDash\blacktriangle p$, we obtain $t\vDash p$. By assumption $s\vDash q$ and supposition $s\vDash\blacktriangle q$, we have $t\vDash q$, and then $t\vDash p\land q$. We have thus shown that $s\vDash p\land q$ implies for all $t$ with $sRt$, $t\vDash p\land q$. Analogously, we can show that $s\nvDash p\land q$ implies for all $t$ with $sRt$, $t\nvDash p\land q$. Therefore, $s\vDash\blacktriangle(p\land q)$.
\weg{suppose towards contradiction that $\mathcal{M},s\vDash\blacktriangle p\land\blacktriangle q$ but $\mathcal{M},s\nvDash\blacktriangle(p\land q)$. Then either ($\mathcal{M},s\vDash p\land q$ and there exists $t$ such that $sRt$ and $t\nvDash p\land q$) or ($\mathcal{M},s\nvDash p\land q$ and there exists $t$ such that $sRt$ and $t\vDash p\land q$). If the first case holds, since $s\vDash p$, by Prop.~\ref{prop.ad} and $s\vDash\blacktriangle p$, we derive $\mathcal{M},s\vDash\Box p$; similarly, we can obtain $\mathcal{M},s\vDash\Box q$. Then from $sRt$, we can show that $\mathcal{M},t\vDash p\land q$, contradiction. If the second case holds, since $s\nvDash p\land q$, it follows that $s\vDash\neg p$ or $s\vDash\neg q$. By Fact~\ref{fact.one} and supposition, we have $\mathcal{M},s\vDash\Box\neg p$ or $\mathcal{M},s\vDash\Box\neg q$. Then from $sRt$ we obtain $t\vDash \neg p$ or $t\vDash \neg q$, contrary to $t\vDash p\land q$.}

\weg{For Rule \R, assume $\vDash\phi\to\psi$, we need to show $\vDash\blacktriangle\phi\land\phi\to\blacktriangle\psi$. Given any model $\mathcal{M}=\langle S,R,V\rangle$ and any $s\in S$, suppose that $\mathcal{M},s\vDash\blacktriangle\phi\land\phi$, to show $\mathcal{M},s\vDash\blacktriangle\psi$. By supposition $\mathcal{M},s\vDash\phi$ and assumption, we obtain $\mathcal{M},s\vDash\psi$. By supposition again and Prop.~\ref{prop.ad}, $\mathcal{M},s\vDash\Box\phi$, then for all $t\in S$ such that $sRt$, we have $\mathcal{M},t\vDash\phi$. Using assumption again, we get $t\vDash\psi$, thus $s\vDash\Box\psi$, and hence $s\vDash\psi\land\Box\psi$. Using Prop.~\ref{prop.ad} again, we conclude that $\mathcal{M},s\vDash\blacktriangle\psi$, as desired.}
For Rule \R, assume $\vDash\phi\to\psi$, we need to show $\vDash\blacktriangle\phi\land\phi\to\blacktriangle\psi$. Given any model $\mathcal{M}=\langle S,R,V\rangle$ and any $s\in S$, suppose that $\mathcal{M},s\vDash\blacktriangle\phi\land\phi$, to show $\mathcal{M},s\vDash\blacktriangle\psi$. By supposition $\mathcal{M},s\vDash\phi$ and assumption, we obtain $\mathcal{M},s\vDash\psi$. For all $t$ such that $sRt$, from supposition follows that $t\vDash \phi$, and then $t\vDash\psi$ in virtue of the assumption. Moreover, it is clear that $s\nvDash\psi$ implies for all $t$ with $sRt$, $t\nvDash\psi$. Therefore, $\M,s\vDash\blacktriangle\psi$.
\end{proof}

\subsection{Canonical model and completeness}

We proceed with the completeness proof for $\SLCL$. The completeness is shown by a construction of canonical model.
\begin{definition}[Canonical Model]\label{def.cm} $\mathcal{M}^c=\langle S^c, R^c, V^c\rangle$ is the \emph{canonical model }for system $\SLCL$, if
\begin{itemize}
\item $S^c=\{s\mid s\text{ is a maximal consistent set of }\SLCL\}$.
\item For any $s,t\in S^c$, $sR^ct$ iff for all $\phi$, if $\phi\in s$, then $\blacktriangle\phi\in s$ implies $\phi\in t$.
\item $V^c(p)=\{s\in S^c\mid p\in s\}$.
\end{itemize}
\end{definition}

The definition of canonical relation $R^c$ is inspired by the almost-equivalence schema AE (see Prop.~\ref{prop.ad}). Recall that in the canonical model for standard modal logic $\ML$, the canonical relation is always defined by $sR^ct$ iff for all $\phi$, $\Box\phi\in s$ implies $\phi\in s$. Now according to the schema AE, we can replace $\Box\phi\in s$ with $\blacktriangle\phi\in s$ given $\phi\in s$.

\begin{lemma}[Truth Lemma]\label{lem.truthlem}
For any $\phi\in\SNCL$ and any $s\in S^c$, we have $$\mathcal{M}^c,s\vDash\phi\Longleftrightarrow\phi\in s.$$
\end{lemma}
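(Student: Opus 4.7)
The plan is to prove the Truth Lemma by induction on the structure of $\phi \in \SNCL$. Atomic, $\top$, and Boolean cases are standard (they only rely on $s$ being maximal consistent), so the entire difficulty lies in the modal clause $\blacktriangle\phi$. By the induction hypothesis, it suffices to establish
\[
\blacktriangle\phi \in s \;\Longleftrightarrow\; \bigl(\phi\in s \Rightarrow \forall t(sR^ct \Rightarrow \phi\in t)\bigr) \;\&\; \bigl(\phi\notin s \Rightarrow \forall t(sR^ct \Rightarrow \phi\notin t)\bigr).
\]

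For the left-to-right direction, assume $\blacktriangle\phi \in s$. If $\phi\in s$, then directly from the definition of $R^c$, every $R^c$-successor $t$ of $s$ contains $\phi$. If $\phi\notin s$, then $\neg\phi\in s$, and by Axiom \EquiKw\ we also have $\blacktriangle\neg\phi\in s$; applying the definition of $R^c$ with $\neg\phi$ in place of $\phi$ gives $\neg\phi\in t$ for every $R^c$-successor $t$, hence $\phi\notin t$.

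The right-to-left direction is the genuine content and I would prove it by contraposition: assume $\blacktriangle\phi \notin s$, and construct a bad successor. Note that by \EquiKw\ we also have $\blacktriangle\neg\phi\notin s$. Split on whether $\phi\in s$. Suppose first $\phi\in s$; I then form
\[
\Sigma \;=\; \{\neg\phi\} \cup \{\psi : \psi\in s \text{ and } \blacktriangle\psi\in s\}
\]
and claim $\Sigma$ is $\SLCL$-consistent. If not, there exist $\psi_1,\ldots,\psi_n$ from the second part with $\vdash \psi_1\land\cdots\land\psi_n \to \phi$. Since each $\blacktriangle\psi_i \in s$, Fact~\ref{fact.two} gives $\blacktriangle(\psi_1\land\cdots\land\psi_n)\in s$; since each $\psi_i\in s$, also $\psi_1\land\cdots\land\psi_n\in s$. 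Applying Rule \R\ to $\vdash \psi_1\land\cdots\land\psi_n\to\phi$ yields $\vdash \blacktriangle(\psi_1\land\cdots\land\psi_n)\land(\psi_1\land\cdots\land\psi_n)\to\blacktriangle\phi$, so $\blacktriangle\phi\in s$, contradicting our assumption. By Lindenbaum's Lemma, extend $\Sigma$ to a maximal consistent $t\in S^c$; by construction $sR^ct$ and $\neg\phi\in t$, so $\phi\notin t$ while $\phi\in s$, violating the semantic clause for $\blacktriangle\phi$. The case $\phi\notin s$ (i.e.\ $\neg\phi\in s$) is symmetric: construct $\Sigma' = \{\phi\}\cup\{\psi : \psi\in s, \blacktriangle\psi\in s\}$, and if inconsistent derive $\vdash\psi_1\land\cdots\land\psi_n\to\neg\phi$, whence \R\ yields $\blacktriangle\neg\phi\in s$, contradicting $\blacktriangle\neg\phi\notin s$.

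The main obstacle is engineering the consistency argument for $\Sigma$: we have to derive $\blacktriangle\phi$ (or $\blacktriangle\neg\phi$) in $s$ from a propositional implication, and the only gateway is Rule \R, which additionally demands the antecedent $\psi_1\land\cdots\land\psi_n$ itself to be present. This forces the specific choice of $\Sigma$ (witnesses $\psi$ must satisfy both $\psi\in s$ and $\blacktriangle\psi\in s$), which then dovetails exactly with the definition of $R^c$ to guarantee $sR^ct$ for the Lindenbaum extension. Axiom \KwCon\ (via Fact~\ref{fact.two}) and Axiom \EquiKw\ are the two algebraic facts that make the bookkeeping close up.
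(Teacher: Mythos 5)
Your proof is correct and follows essentially the same route as the paper: the same unfolding via the canonical relation and Axiom \EquiKw\ for the left-to-right direction, and the same consistency argument via Fact~\ref{fact.two} and Rule \R\ for the right-to-left direction (the paper proves both consistency claims, for $\{\neg\phi\}$ and for $\{\phi\}$, and case-splits on $\phi\in s$ only at the end, whereas you case-split first --- an immaterial difference). The one small point worth adding, which the paper relegates to a footnote, is that Axiom \KwTop\ guarantees $\{\psi\mid\blacktriangle\psi\land\psi\in s\}$ is nonempty, so the finite list $\psi_1,\ldots,\psi_n$ witnessing inconsistency can always be taken with $n\geq 1$ and Rule \R\ genuinely applies.
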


\begin{proof}
By induction on $\phi$. The only non-trivial case is $\blacktriangle\phi$.

$\Longleftarrow:$ Assume towards contradiction that $\blacktriangle\phi\in s$ but $\mathcal{M},s\nvDash\blacktriangle\phi$. From the semantics and the inductive hypothesis, it follows that ($\phi\in s$ and there exists $t\in S^c$ such that $sR^ct$ and $\phi\notin t$) or ($\phi\notin s$ and there exists $t'\in S^c$ such that $sR^ct'$ and $\phi\in t'$). If the first case holds, then from $\phi\in s$ and the assumption $\blacktriangle\phi\in s$ and $sR^ct$, it follows that $\phi\in t$, contradiction. If the second case holds, then since $\phi\notin s$, we have $\neg\phi\in s$. By $\blacktriangle\phi\in s$ again, Axiom $\EquiKw$ and Rule $\SUB$, we obtain $\blacktriangle\neg\phi\in s$, thus $\neg\phi\land\blacktriangle\neg\phi\in s$, from which and $sR^ct'$ we deduce $\neg\phi\in t'$, contradiction.

$\Longrightarrow:$ Suppose that $\blacktriangle\phi\notin s$, we need to show that $\mathcal{M}^c,s\nvDash \blacktriangle\phi$, by the inductive hypothesis, that is to show, ($\phi\in s$ and there exists $t\in S^c$ such that $sR^ct$ and $\neg\phi\in t$) or ($\neg\phi\in s$ and there exists $t'\in S^c$ such that $sR^ct'$ and $\phi\in t'$). First, we show
\begin{enumerate}
\item\label{a} $\{\psi\mid\blacktriangle\psi\land\psi\in s\}\cup\{\neg\phi\}$ is consistent, and
\item\label{b} $\{\psi\mid\blacktriangle\psi\land\psi\in s\}\cup\{\phi\}$ is consistent.
\end{enumerate}

For \ref{a}, if the set is inconsistent, then there exist $\psi_1,\cdots,\psi_n\in\{\psi\mid\blacktriangle\psi\land\psi\in s\}$ such that $\vdash\psi_1\land\cdots\land\psi_n\to\phi$.\footnote{Note that Axiom \KwTop\ provides the non-emptiness of the set $\{\psi\mid\blacktriangle\psi\land\psi\in s\}$.} By Rule \R, we have $\vdash\blacktriangle(\psi_1\land\cdots\psi_n)\land(\psi_1\land\cdots\psi_n)\to\blacktriangle\phi$. Since $\blacktriangle\psi_i\land\psi_i\in s$ for all $i\in[1,n]$, from Fact~\ref{fact.two}, it follows that $\blacktriangle(\psi_1\land\cdots\land\psi_n)\land(\psi_1\land\cdots\land\psi_n)\in s$, thus we have $\blacktriangle\phi\in s$, contrary to the supposition.

The proof for \ref{b} is similar, in this case we need to use Axiom \EquiKw\ and Rule \SUB.

Therefore we complete the proofs of \ref{a} and \ref{b}. From \ref{a}, the definition of $R^c$ and the observation that every consistent set can be extended to a maximal consistent set (Lindenbaum's Lemma), we obtain that there exists $t\in S^c$ such that $sR^ct$ and $\neg\phi\in t$. Similarly, from \ref{b}, we get that there exists $t'\in S^c$ such that $sR^ct'$ and $\phi\in t'$. Since either $\phi\in s$ or $\neg\phi\in s$, we conclude that $\mathcal{M}^c,s\nvDash \blacktriangle\phi$ based on the previous analysis.
\end{proof}
\weg{It is now a standard exercise to show that
\begin{theorem}\label{thm.k}
\SLCL\ is sound and strongly complete with respect to the class of all frames.
\end{theorem}

We have seen from Proposition \ref{prop.lessexp} that there is a truth-preserving translation function from $\SNCL$ to $\ML$. Thus $\SNCL$ can be seen as a fragment of $\ML$. Since $\ML$ is decidable, we have also the decidability of $\SNCL$.

\begin{proposition}
The logic of strong noncontingency $\SNCL$ is decidable.
\end{proposition}

\weg{The following corollary is immediate. Intuitively, it says that a formula is non-contingent is the same as its negation is non-contingent, which is a stronger version of axiom \EquiKw.
\begin{corollary}\label{coro.equ}
$\vdash\blacktriangle\phi\leftrightarrow\blacktriangle\neg\phi$.
\end{corollary}}

Since the canonical model $\mathcal{M}^c$ is reflexive (thus also serial), we have also shown that

\begin{theorem}
\SLCL\ is sound and strongly complete with respect to the class of reflexive frames.
\end{theorem}

\begin{theorem}
\SLCL\ is sound and strongly complete with respect to the class of serial frames.
\end{theorem}}

\weg{\begin{theorem}[Completeness Theorem] Let $X$ be a normal modal logic between $K$ and $T$: $K\subseteq X\subseteq T$.
The following conditions are equivalent: given any $\phi\in\SNCL$,

$(1)~~\SLCL\vdash\phi$

$(2)~~\vDash\phi$ ($\phi$ is valid on the class of all frames)

$(3)~~\mathbb{F}_X\vDash\phi$ ($\phi$ is valid on the class of frames for $X$)

$(4)~~\mathbb{F}_T\vDash\phi$ ($\phi$ is valid on the class of reflexive frames)
\end{theorem}

\begin{proof}
$(1)\Rightarrow (2)$: by Proposition \ref{prop.soundness}.

$(2)\Rightarrow (3)$: obviously.

$(3)\Rightarrow (4)$: since $X\subseteq T$, we have $\mathbb{F}_T\subseteq\mathbb{F}_X$.

$(4)\Rightarrow (1)$: suppose that $\SLCL\nvdash\phi$, then $\neg\phi$ is $\SLCL$-consistent. By Lindenbaum's Lemma, there is an $s\in S^c$ such that $\neg\phi\in s$, i.e. $\phi\notin s$. By Lemma \ref{lem.truthlem}, $\M^c,s\nvDash\phi$. Moreover, that $R^c$ is reflexive is immediate from the definition of $R^c$. Therefore, $\mathbb{F}_T\nvDash\phi$.
\end{proof}}

\begin{proposition}\label{prop.comp} Let $X$ be a normal modal logic between $K$ and $T$: $K\subseteq X\subseteq T$.
The following conditions are equivalent: given any $\Gamma\cup\{\phi\}\subseteq\SNCL$,

$(1)~~\Gamma\vdash_{\SLCL}\phi$ (there is a derivation from $\Gamma$ to $\phi$ in $\SLCL$)

$(2)~~\Gamma\vDash\phi$ ($\Gamma$ entails $\phi$ over the class of all frames)

$(3)~~\Gamma\vDash_X\phi$ ($\Gamma$ entails $\phi$ over the class of frames for $X$)

$(4)~~\Gamma\vDash_T\phi$ ($\Gamma$ entails $\phi$ over the class of reflexive frames)
\end{proposition}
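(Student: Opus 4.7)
The plan is to close the cycle $(1) \Rightarrow (2) \Rightarrow (3) \Rightarrow (4) \Rightarrow (1)$, reusing the soundness result (Prop.~\ref{prop.soundness}) and the canonical model machinery (Def.~\ref{def.cm} and Lem.~\ref{lem.truthlem}). The three implications $(1)\Rightarrow(2)$, $(2)\Rightarrow(3)$, $(3)\Rightarrow(4)$ are essentially free: the first is a routine upgrade of Prop.~\ref{prop.soundness} to derivations from assumptions (induction on the length of a derivation, noting that $\MP$ and $\R$ preserve validity on any class of frames); the second holds because every $X$-frame is a frame; the third holds because $X \subseteq T$ implies $\mathbb{F}_T \subseteq \mathbb{F}_X$, so entailment over the smaller class follows from entailment over the larger one.

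The real content is $(4)\Rightarrow(1)$, and my approach is the standard Henkin-style argument. Suppose $\Gamma \nvdash_\SLCL \phi$. Then $\Gamma \cup \{\neg\phi\}$ is $\SLCL$-consistent, so by Lindenbaum's Lemma it extends to a maximal consistent set $s \in S^c$. By the Truth Lemma (Lem.~\ref{lem.truthlem}), $\M^c, s \vDash \Gamma$ and $\M^c, s \nvDash \phi$. To land inside condition $(4)$, I need to verify that the frame $\langle S^c, R^c \rangle$ underlying the canonical model is reflexive.

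The key observation, and the only nontrivial point, is that reflexivity of $R^c$ is immediate from Def.~\ref{def.cm}: unpacking the definition, $s R^c s$ means that for every $\phi$, if $\phi \in s$ and $\blacktriangle\phi \in s$, then $\phi \in s$, which is trivially true. (This is precisely the design choice motivated by the almost-equivalence schema AE of Prop.~\ref{prop.ad}: by reading $\Box\phi$ as $\blacktriangle\phi \land \phi$ inside a maximal consistent set, the canonical accessibility collapses to something reflexive by construction.) Hence $\M^c$ is based on a reflexive frame, witnessing $\Gamma \nvDash_T \phi$, which contradicts $(4)$.

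I do not anticipate any genuine obstacle here, as all the ingredients are in place; the only place where one must be a little careful is checking that $\R$ lifts to a correct inference under the entailment relation (it does, because $\R$ only needs a \emph{valid} premise $\phi \to \psi$, and hence can be invoked inside the soundness step without assuming $\Gamma \vDash \phi \to \psi$ on the nose). Together with $(1) \Rightarrow (2)$, the chain closes and all four conditions are equivalent.
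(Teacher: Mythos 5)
Your proposal is correct and follows essentially the same route as the paper: soundness gives $(1)\Rightarrow(2)$, the inclusions of frame classes give $(2)\Rightarrow(3)\Rightarrow(4)$, and $(4)\Rightarrow(1)$ is the Lindenbaum/Truth-Lemma argument using the fact that $R^c$ is reflexive by construction. Your explicit unpacking of why $sR^cs$ holds trivially, and your caveat that Rule \R\ must only be applied to valid premises, merely spell out details the paper leaves implicit.
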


\begin{proof}
$(1)\Rightarrow (2)$: by Prop.~\ref{prop.soundness}.

$(2)\Rightarrow (3)$: obviously.

$(3)\Rightarrow (4)$: since $X\subseteq T$, we have $\vDash_X\subseteq~\vDash_T$.

$(4)\Rightarrow (1)$: suppose that $\Gamma\nvdash_{\SLCL}\phi$, then $\Gamma\cup\{\neg\phi\}$ is $\SLCL$-consistent. By Lindenbaum's Lemma, there is an $s\in S^c$ such that $\Gamma\cup\{\neg\phi\}\subseteq s$, i.e. $\Gamma\subseteq s$ and  $\phi\notin s$. By Lemma \ref{lem.truthlem}, $\M^c,s\vDash\Gamma$ but $\M^c,s\nvDash\phi$. Moreover, that $R^c$ is reflexive is immediate from the definition of $R^c$. Therefore, $\Gamma\nvDash_T\phi$.
\end{proof}

The following are now immediate from Proposition \ref{prop.comp}.

\begin{theorem}\label{thm.k}
\SLCL\ is sound and strongly complete with respect to the class of all frames.
\end{theorem}

\begin{theorem}
\SLCL\ is sound and strongly complete with respect to the class of serial frames.
\end{theorem}

\begin{theorem}
\SLCL\ is sound and strongly complete with respect to the class of reflexive frames.
\end{theorem}

We have seen from Proposition \ref{prop.lessexp} that there is a truth-preserving translation function from $\SNCL$ to $\ML$. Thus $\SNCL$ can be seen as a fragment of $\ML$. Since $\ML$ is decidable, we have also the decidability of $\SNCL$.

\begin{proposition}
The logic of strong noncontingency $\SNCL$ is decidable.
\end{proposition}

\section{Axiomatization: extensions}\label{sec.extension}

In this section, we give extensions of $\SLCL$ over various frame classes and prove their completeness.
Definition \ref{def.ext} indicates extra axioms and the corresponding systems, with in the last column of the table the frame classes for which we will demonstrate completeness.

\begin{definition}[Extensions of \SLCL]\label{def.ext}
\[
\begin{array}{|l|l|l|l|}
\hline
\text{Notation}&\text{Axioms}&\text{Systems}&\text{Frame classes}\\
\hline
\KwTr& \blacktriangle p\to\blacktriangle\blacktriangle p& \SLCLTr=\SLCL+\KwTr&4~(\mathcal{S}4)\\
\KwB& p\to\blacktriangle(\blacktriangle p\to p)&\SLCLB=\SLCL+\KwB&\mathcal{B} ~(\mathcal{TB})\\
\KwEuc&\neg\blacktriangle p\to\blacktriangle\neg\blacktriangle p&\SLCLBEuc=\SLCLB+\KwEuc&\mathcal{B}5~(\mathcal{S}5)\\
\KwEucp&p\land\neg\blacktriangle p\to\blacktriangle(p\land\blacktriangle p)&\SLCLBEucp=\SLCLB+\KwEucp&\mathcal{B}5~(\mathcal{S}5)\\
\hline
\end{array}
\]
\end{definition}

Analogous to the case of $\SLCL$, we can show that Axiom $\KwTop$ is indispensable in the system $\SLCLTr$. However, Axiom $\KwTop$ is dispensable in $\SLCLB$, and thus in $\SLCLBEuc$, since it is provable in $\SLCLB-\KwTop$, just letting $p$ in Axiom $\KwB$ be $\top$.\footnote{Note that we need first show the rule $\dfrac{\phi\leftrightarrow\psi}{\blacktriangle\phi\leftrightarrow\blacktriangle\psi}$ is admissible in $\SLCLB-\KwTop$. For the proof refer to Proposition \ref{prop.re-admissible}, where we do not use Axiom $\KwTop$.} Thus we can replace $\SLCLB$ and $\SLCLBEuc$ by $\SLCLB-\KwTop$ and $\SLCLBEuc-\KwTop$, respectively. We here still write $\SLCLB$ and $\SLCLBEuc$ to keep consistency.

To show the soundness of systems $\SLCLTr$ and $\SLCLB$ with respect to the corresponding classes of frames, by Proposition \ref{prop.soundness}, it suffices to show
\begin{proposition}\label{prop.validities}\
\begin{itemize}
\item \KwTr\ is valid on the class of transitive frames.
\item \KwB\ is valid on the class of symmetric frames.
%\item \KwEuc\ is valid on the class of Euclidean frames.
\end{itemize}
\end{proposition}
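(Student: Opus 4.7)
My plan is to verify the two validity claims separately by straightforward semantic arguments, one case analysis per axiom. The symmetric case is actually free: the ``if'' direction of the argument in Prop.~\ref{prop.definable-b} already establishes exactly that $p \to \blacktriangle(\blacktriangle p \to p)$ is valid on every symmetric frame, so I would simply appeal to that half of the earlier proof rather than repeat it.

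For $\KwTr$, I would fix a transitive frame $\mathcal{F}=\lr{S,R}$, a model $\M$ based on $\mathcal{F}$, and a world $s$ with $\M,s\vDash\blacktriangle p$. The goal $\M,s\vDash\blacktriangle\blacktriangle p$ reduces, via the semantics of $\blacktriangle$, to showing that $\M,t\vDash\blacktriangle p$ for every $R$-successor $t$ of $s$ (the other clause of the definition is vacuous under the hypothesis $s\vDash\blacktriangle p$). So the task is to propagate strong noncontingency of $p$ from $s$ to each successor.

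I would then split on the truth value of $p$ at $s$ and apply Fact~\ref{fact.one}, which rewrites $\blacktriangle p$ as $(p\to\Box p)\land(\neg p\to\Box\neg p)$. If $s\vDash p$ then $s\vDash\Box p$, so every $R$-successor $t$ of $s$ satisfies $p$; and by transitivity every $R$-successor $u$ of $t$ is also an $R$-successor of $s$, hence $u\vDash p$ as well; this gives $t\vDash p\land\Box p$, which by Fact~\ref{fact.one} yields $t\vDash\blacktriangle p$. The dual sub-case $s\vDash\neg p$ is symmetric: $s\vDash\Box\neg p$, so any successor $t$ satisfies $\neg p$ and, again by transitivity, every successor of $t$ satisfies $\neg p$, giving $t\vDash\blacktriangle p$ (equivalently $\blacktriangle\neg p$).

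There is no real obstacle here; the proof is essentially a one-step unfolding of the $\blacktriangle$-semantics combined with the definition of transitivity. The only thing to remember is that, because $\blacktriangle$ is not monotonic, one cannot short-circuit the argument by appealing to a ``$\Box\to\Box\Box$'' step alone — one must separately verify both of the $\blacktriangle$-clauses at each successor $t$, which is precisely what the case split on the truth value of $p$ at $s$ accomplishes.
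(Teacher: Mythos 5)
Your proposal is correct and follows essentially the same route as the paper: the symmetric case is discharged by citing the first half of Prop.~\ref{prop.definable-b} (exactly as the paper does), and the transitive case uses the same case split on the truth value of $p$ at $s$ together with transitivity to push $p$ (or $\neg p$) two steps out; the paper merely phrases this contrapositively as a proof by contradiction. The only cosmetic difference is that you argue directly that each successor $t$ satisfies $\blacktriangle p$, while the paper extracts a counterexample successor and derives a contradiction.
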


\begin{proof} The validity of $\KwB$ on symmetric frames is shown in Prop.~\ref{prop.definable-b}.% We only need to show the case for $\KwTr$.

For \KwTr: given any transitive model $\mathcal{M}=\langle S,R,V\rangle$ and any $s\in S$, assume towards contradiction that $\mathcal{M},s\vDash\blacktriangle p$ but $\mathcal{M},s\nvDash\blacktriangle\blacktriangle p$. By assumption, it must be the case that there exists $t\in S$ such that $sRt$ and $t\vDash\neg\blacktriangle p$. If $s\vDash p$, then from $s\vDash\blacktriangle p$ and $sRt$ we have $t\vDash p$, and thus there exists $u$ such that $tRu$ and $u\nvDash p$. By the transitivity of $R$, we have $sRu$, which is contrary to the fact that $s\vDash\blacktriangle p\land p$ and $u\nvDash p$. If $s\nvDash p$, with the similar argument we can also derive a contradiction.
\weg{\begin{itemize}
\item For \KwTr: given any transitive model $\mathcal{M}=\langle S,R,V\rangle$ and any $s\in S$, assume towards contradiction that $\mathcal{M},s\vDash\blacktriangle p$ but $\mathcal{M},s\nvDash\blacktriangle\blacktriangle p$. By assumption, it must be the case that there exists $t\in S$ such that $sRt$ and $t\vDash\neg\blacktriangle p$. If $s\vDash p$, then from $s\vDash\blacktriangle p$ and $sRt$ we have $t\vDash p$, and thus there exists $u$ such that $tRu$ and $u\nvDash p$. By the transitivity of $R$, we have $sRu$, which is contrary to the fact that $s\vDash\blacktriangle p\land p$ and $u\nvDash p$. If $s\nvDash p$, with the similar argument we can also derive a contradiction.
    \weg{, then either $t\vDash p$ and for some $u$ with $tRu$ and $u\nvDash p$, or $t\nvDash p$ and for some $u'$ with $tRu'$ and $u'\vDash p$. If the first case holds, by the transitivity of $R$, we have $sRt$ and $sRu$ and $t\vDash p$ and $u\nvDash p$, then $s\nvDash\triangle p$, contrary to the assumption $s\vDash\blacktriangle p$ and Prop.~\ref{prop.stronger}.}
\item For \KwB: given any symmetric model $\mathcal{M}=\langle S,R,V\rangle$ and any $s\in S$, suppose towards contradiction that $\mathcal{M},s\vDash p$ but $\mathcal{M},s\nvDash\blacktriangle(\blacktriangle p\to p)$. Then $s\vDash\blacktriangle p\to p$, and thus there exists $t\in S$ such that $sRt$ and $\mathcal{M},t\vDash\neg(\blacktriangle p\to p)$, hence $t\vDash\blacktriangle p$ and $t\vDash\neg p$, i.e., $t\vDash \neg p\land\blacktriangle \neg p$. Since $sRt$ and $R$ is symmetric, we have $tRs$. By semantics of $\blacktriangle$, we conclude that $s\vDash \neg p$, contrary to the supposition.

\weg{\item For \KwEuc: given any Euclidean model $\mathcal{M}=\langle S,R,V\rangle$ and any $s\in S$, assume towards contradiction that $\mathcal{M},s\vDash\neg\blacktriangle p$ but $\mathcal{M},s\nvDash\blacktriangle\neg\blacktriangle p$. By assumption, it must be the case that there exists $t\in S$ such that $sRt$ and $t\vDash\blacktriangle p$.}
\end{itemize}}
\end{proof}

\begin{theorem}[Completeness of $\SLCLTr$]\label{thm.tr}
\SLCLTr\ is sound and strongly complete with respect to the class of $4$-frames, thus it is also sound and strongly complete with respect to the class of $\mathcal{S}4$-frames.
\end{theorem}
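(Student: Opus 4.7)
The plan is to follow the template of Theorem \ref{thm.k} (Proposition \ref{prop.comp}): prove soundness from Prop.~\ref{prop.soundness} together with Prop.~\ref{prop.validities}, then use the canonical model of Def.~\ref{def.cm} (now built from maximal $\SLCLTr$-consistent sets) and establish completeness via a Truth Lemma plus a frame-property check. The Truth Lemma (Lemma \ref{lem.truthlem}) will carry over verbatim, since its proof only appeals to Axioms \KwTop, \EquiKw, \KwCon\ and Rule \R, all of which are inherited by $\SLCLTr$.

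The one genuinely new ingredient is to verify that the canonical relation $R^c$ is transitive when the extra axiom $\KwTr$ is available. I would argue directly from the definition: assume $sR^ct$ and $tR^cu$, and suppose $\phi\in s$ with $\blacktriangle\phi\in s$; the goal is to show $\phi\in u$. From $sR^ct$ applied to $\phi$ we immediately get $\phi\in t$. Next I apply \SUB\ and \MP\ to $\KwTr$ to conclude $\blacktriangle\blacktriangle\phi\in s$, and then invoke $sR^ct$ a second time, this time with the formula $\blacktriangle\phi$ in place of $\phi$ (noting that both $\blacktriangle\phi\in s$ and $\blacktriangle\blacktriangle\phi\in s$), to obtain $\blacktriangle\phi\in t$. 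Now $\phi\in t$ and $\blacktriangle\phi\in t$ together with $tR^cu$ give $\phi\in u$, as required.

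Having shown $R^c$ is transitive, completeness over the class of $4$-frames follows in the standard style: if $\Gamma\not\vdash_{\SLCLTr}\phi$, then $\Gamma\cup\{\neg\phi\}$ is $\SLCLTr$-consistent, so by Lindenbaum's Lemma it extends to some $s\in S^c$; by the Truth Lemma $\M^c,s\vDash\Gamma$ while $\M^c,s\nvDash\phi$, and $\M^c$ is based on a transitive frame. For the $\mathcal{S}4$ case I add the observation (already noted in the proofs of Prop.~\ref{prop.comp}) that $R^c$ is reflexive straight from Def.~\ref{def.cm}: for any $\phi$ with $\phi\in s$, trivially $\phi\in s$, so $sR^cs$. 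Hence $\M^c$ is a preorder, giving soundness and strong completeness with respect to $\mathcal{S}4$-frames as well.

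The only mildly subtle step is the double application of the canonical-relation clause inside the transitivity argument (once to transfer $\phi$ and once to transfer $\blacktriangle\phi$ through the intermediate world $t$); everything else is routine reuse of the machinery already assembled for Theorem~\ref{thm.k}.
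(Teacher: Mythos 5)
Your proposal is correct and follows essentially the same route as the paper: soundness from Prop.~\ref{prop.soundness} plus Prop.~\ref{prop.validities}, and completeness by reducing to Thm.~\ref{thm.k} and verifying transitivity of $R^c$ via exactly the paper's argument (derive $\blacktriangle\blacktriangle\phi\in s$ from \KwTr, apply the clause for $sR^ct$ twice to push both $\phi$ and $\blacktriangle\phi$ into $t$, then use $tR^cu$). The reflexivity observation for the $\mathcal{S}4$ case likewise matches the paper.
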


\begin{proof}
Soundness is clear from the soundness of \SLCL\ and Prop.~\ref{prop.validities}.

For completeness, by Thm.~\ref{thm.k}, we only need to show that $R^c$ is transitive (recall that $R^c$ is reflexive).

Given any $s,t,u\in S^c$, suppose that $sR^ct$ and $tR^cu$, we need to show $sR^cu$. Let $\phi$ be arbitrary, and assume that $\blacktriangle\phi\land\phi\in s$. From the assumption and $sR^ct$ it follows that $\phi\in t$. From $\blacktriangle\phi\in s$, Axiom \KwTr\ and Rule \SUB, we get $\blacktriangle\blacktriangle\phi\in s$. We have shown that $\blacktriangle\blacktriangle\phi\land\blacktriangle\phi\in s$. Then by $sR^ct$ again, we deduce $\blacktriangle\phi\in t$, thus $\phi\land\blacktriangle\phi\in t$. From this and $tR^cu$ it follows that $\phi\in u$. Therefore $sR^cu$.
\end{proof}

\begin{theorem}[Completeness of $\SLCLB$]\label{thm.completeness-b}
\SLCLB\ is sound and strongly complete with respect to the class of $\mathcal{B}$-frames, thus it is also sound and strongly complete with respect to the class of $\mathcal{TB}$-frames.
\end{theorem}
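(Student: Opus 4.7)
The plan is to mirror the proof of Theorem~\ref{thm.tr}. Soundness is immediate: the axioms and rules of $\SLCL$ are sound (Proposition~\ref{prop.soundness}), and $\KwB$ is valid on the class of symmetric frames (Proposition~\ref{prop.validities}). For strong completeness, by the analogue of Proposition~\ref{prop.comp} restricted to $\SLCLB$, it suffices to construct a canonical model as in Definition~\ref{def.cm} (using $\SLCLB$-maximal consistent sets in place of $\SLCL$-ones) and verify that the Truth Lemma (Lemma~\ref{lem.truthlem}) still goes through unchanged, and that the canonical relation $R^c$ is symmetric. The Truth Lemma only uses Axioms $\KwTop$, $\EquiKw$, $\KwCon$ and Rule $\R$ (together with Lindenbaum), all of which remain available in $\SLCLB$. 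Reflexivity of $R^c$ is direct from the definition, so we will in fact obtain completeness with respect to both $\mathcal{B}$-frames and $\mathcal{TB}$-frames simultaneously.

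The main (and only nontrivial) step is therefore to show that $R^c$ is symmetric whenever $\KwB$ is in the system. Suppose $sR^ct$; we want to show $tR^cs$, i.e.\ that for every $\phi$, if $\phi\in t$ and $\blacktriangle\phi\in t$ then $\phi\in s$. Arguing by contraposition, assume $\phi\notin s$, so that $\neg\phi\in s$. Instantiating Axiom $\KwB$ with $\neg\phi$ in place of $p$ (and using $\SUB$) gives $\blacktriangle(\blacktriangle\neg\phi\to\neg\phi)\in s$, and since $\neg\phi\in s$ we also have the propositional consequence $\blacktriangle\neg\phi\to\neg\phi\in s$. Thus both $\psi:=\blacktriangle\neg\phi\to\neg\phi$ and $\blacktriangle\psi$ lie in $s$, so by the definition of $R^c$ and the hypothesis $sR^ct$ we conclude $\psi\in t$, i.e.\ $\blacktriangle\neg\phi\to\neg\phi\in t$. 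Hence either $\blacktriangle\neg\phi\notin t$ or $\neg\phi\in t$; using $\EquiKw$ this reads as $\blacktriangle\phi\notin t$ or $\phi\notin t$, which is exactly what we needed.

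The potentially subtle point is the simultaneous use of $\KwB$, $\EquiKw$, and the specific shape of the canonical relation (which only asks about $\phi$ that already belong to the source world); this is where the schema AE, and in particular the design choice to require $\phi\in s$ in the definition of $sR^ct$, does the real work. No further frame conditions need to be checked, and combining symmetry with reflexivity of $R^c$ yields the $\mathcal{TB}$ part as a corollary.
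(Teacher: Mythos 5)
Your proposal is correct and follows essentially the same route as the paper: soundness from Proposition~\ref{prop.validities}, completeness by reusing the canonical model of $\SLCL$ and showing $R^c$ is symmetric via the instance of $\KwB$ at $\neg\phi$, the trivial membership of $\blacktriangle\neg\phi\to\neg\phi$ in $s$, the definition of $R^c$, and Axiom $\EquiKw$ to convert $\blacktriangle\neg\phi$ into $\blacktriangle\phi$ inside $t$. The only cosmetic difference is that you spell out the re-verification of the Truth Lemma for $\SLCLB$-maximal consistent sets, which the paper leaves implicit by citing Theorem~\ref{thm.k}.
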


\begin{proof}
Soundness is immediate from the soundness of \SLCL\ and the validity of $\KwB$ (Prop.~\ref{prop.validities}).

For the completeness, by Thm.~\ref{thm.k}, we only need to show that $R^c$ is symmetric (recall that $R^c$ is reflexive).

Given any $s,t\in S^c$. Suppose that $sR^ct$, to show $tR^cs$.  For this, given any $\phi$, and assume that $\phi\notin s$, the only thing is to show that $\blacktriangle\phi\land\phi\notin t$. By assumption, we have $\neg\phi\in s$, and thus $\blacktriangle\neg\phi\to\neg\phi\in s$. From Axiom \KwB\ and Rule \SUB, it follows that $\blacktriangle(\blacktriangle\neg\phi\to\neg\phi)\in s$. By supposition, we obtain $\blacktriangle\neg\phi\to\neg\phi\in t$. By Axiom $\EquiKw$ and Rule $
\SUB$, we get $\blacktriangle\phi\to\neg\phi\in t$, and then we conclude that $\blacktriangle\phi\land\phi\notin t$, as desired.
\end{proof}

%\KwEuc: $\neg\blacktriangle p\to\blacktriangle\neg\blacktriangle p$

One may expect that the same story goes with Axiom \KwEuc\ and system $\SLCL+\KwEuc$. Although \KwEuc\ can provide the completeness of $\SLCL+\KwEuc$ with respect to the class of (reflexive and) Euclidean frames (see the proof of Thm.~\ref{thm.comp-kb5}), unfortunately, it cannot provide the soundness of the system in question, because \KwEuc\ is {\em not} valid on the class of Euclidean frames (or $\mathcal{S}5$-frames).
\begin{proposition}\label{prop.invalidity}
\KwEuc\ is not valid on the class of $5$-frames (or $\mathcal{S}5$-frames).
\end{proposition}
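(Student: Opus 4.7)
The plan is to refute $\KwEuc$ by producing a pointed Euclidean model in which $\neg \blacktriangle p$ holds at some world while $\blacktriangle \neg \blacktriangle p$ fails there. I would take the two-world candidate $\M = \lr{S, R, V}$ with $S = \{s, t\}$, $R = \{(s,t), (t,t)\}$, and $V(p) = \{s\}$, and first verify that the frame is Euclidean: every pair $(y,z)$ arising as successors of a common world has $y = z = t$, and $tRt$ holds, so the Euclidean condition is trivially satisfied.

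Next I would compute $\blacktriangle p$ at both worlds directly from the semantics. At $s$ we have $s \vDash p$ but its sole successor $t$ satisfies $\neg p$, so the conjunct ``$\phi \to \Box \phi$'' in the semantic clause for $\blacktriangle p$ fails, giving $s \vDash \neg \blacktriangle p$. At $t$ we have $t \nvDash p$ and its sole successor $t$ also satisfies $\neg p$, so by Fact~\ref{fact.one} we get $t \vDash \blacktriangle p$. Finally, I would evaluate $\blacktriangle \neg \blacktriangle p$ at $s$: since $s \vDash \neg \blacktriangle p$ but its successor $t$ satisfies $\blacktriangle p$ (hence $t \nvDash \neg \blacktriangle p$), the first conjunct of the $\blacktriangle$-clause applied to $\neg \blacktriangle p$ fails at $s$, giving $s \nvDash \blacktriangle \neg \blacktriangle p$. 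Together with $s \vDash \neg \blacktriangle p$ this yields $s \nvDash \neg \blacktriangle p \to \blacktriangle \neg \blacktriangle p$, refuting $\KwEuc$ on a $5$-frame.

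There is no genuine obstacle in the 5-frame direction: the key insight is to engineer a frame where a world with non-uniform $p$-behaviour among its successors can itself see a world with uniform $p$-behaviour, and the two-world design above is the minimal such. The main additional subtlety, if one wants to also address the parenthetical $\mathcal{S}5$-frame case, is that adding a reflexive arrow at $s$ while preserving Euclideanness forces further edges, so a separate and more carefully engineered model would be needed; within any single $\mathcal{S}5$ equivalence class the truth value of $\blacktriangle p$ is uniform, so the invalidating behaviour must in any case be located across such classes.
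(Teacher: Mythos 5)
Your countermodel is correct and settles the $5$-frame half of the claim by essentially the paper's route: the paper also uses a two-world model with $s\vDash p$, $t\nvDash p$, an arrow from $s$ to $t$, and a loop at $t$. In fact your version is the sound one. The paper additionally puts a reflexive loop at $s$ (so as to call the model ``Euclidean and reflexive''), but the resulting relation $\{(s,s),(s,t),(t,t)\}$ is \emph{not} Euclidean: from $sRt$ and $sRs$ the Euclidean condition would force $tRs$, which is absent. Dropping the loop at $s$, as you do, yields a genuinely Euclidean frame, and your evaluation of $\blacktriangle p$ and $\blacktriangle\neg\blacktriangle p$ at $s$ and $t$ is correct. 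You are also right to balk at the parenthetical $\mathcal{S}5$ case: it cannot be repaired. A reflexive Euclidean relation is symmetric, so every $\mathcal{S}5$-frame is a $\mathcal{B}5$-frame, and Proposition~\ref{prop.kweuc-valid} then says $\KwEuc$ \emph{is} valid there; your own observation that on an $\mathcal{S}5$-model the truth value of $\blacktriangle p$ is constant on each equivalence class (and that every successor of a world lies in that world's class) already gives a direct proof of this validity. So the proposition is true only for $5$-frames, your proof of that part is complete, and the parenthetical claim is inconsistent with the rest of the paper.
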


\begin{proof}
Consider the following Euclidean (and reflexive) model $\mathcal{M}$:

$$
\xymatrix{s:p\ar[rr]\ar@(ul,ur)&& t:\neg p\ar@(ur,ul)
}$$
\weg{First, it is easy to see that $\mathcal{M},s\vDash p\land\Diamond \neg p$. By Fact~\ref{fact.one}, we obtain $s\vDash\blacktriangledown p$, i.e. $s\vDash\neg\blacktriangle p$.}First, as $\mathcal{M},s\vDash p$, $\M,t\nvDash p$ and $sRt$, we have $s\vDash\neg\blacktriangle p$.
Second, since $t$ can only `see' itself, we get $t\vDash\blacktriangle p$, i.e. $t\nvDash\neg\blacktriangle p$, thus $s\nvDash\blacktriangle\neg\blacktriangle p$.\weg{ hence $s\vDash\Diamond\blacktriangle p$. Then $s\vDash\neg\blacktriangle p\land\Diamond\blacktriangle p$. From this and Fact~\ref{fact.one}, it follows that $s\vDash\blacktriangledown\neg\blacktriangle p$, i.e. $s\nvDash\blacktriangle\neg\blacktriangle p$.} Therefore $\mathcal{M},s\nvDash\neg\blacktriangle p\to\blacktriangle\neg\blacktriangle p$.
\end{proof}

We have seen that $\KwEuc$ is not valid on the class of Euclidean frames (or $\mathcal{S}5$-frames). However, this formula is indeed valid on the class of symmetric and Euclidean frames, as shown below.

\begin{proposition}\label{prop.kweuc-valid}
$\KwEuc$ is valid on the class of $\mathcal{B}5$-frames.
\end{proposition}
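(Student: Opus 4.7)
The plan is as follows. Fix a symmetric and Euclidean frame $\mathcal{F}=\langle S,R\rangle$, a model $\M$ based on it, and a world $s\in S$ with $\M,s\vDash\neg\blacktriangle p$. I need to establish $\M,s\vDash\blacktriangle\neg\blacktriangle p$. Unpacking the hypothesis via the semantics of $\blacktriangle$, there is some $t\in S$ with $sRt$ such that $s$ and $t$ disagree on $p$; without loss of generality I may argue symmetrically between the cases ``$s\vDash p,\,t\nvDash p$'' and ``$s\nvDash p,\,t\vDash p$''.

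Next I unpack the goal. Since $s\vDash\neg\blacktriangle p$, the clause ``$s\nvDash\neg\blacktriangle p\Rightarrow\cdots$'' in the definition of $\blacktriangle\neg\blacktriangle p$ is vacuous, so it suffices to show that for every $u$ with $sRu$, we have $u\nvDash\blacktriangle p$. So fix an arbitrary successor $u$ of $s$.

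The key step is to feed both $s$ and $t$ into $u$ as successors, using the two frame properties. From $sRu$ and symmetry I get $uRs$; from $sRu$ and $sRt$ together with Euclideaness (if $xRy$ and $xRz$ then $yRz$) I get $uRt$. Since $s$ and $t$ disagree on $p$ and both are successors of $u$, whichever truth value $p$ takes at $u$ it will be contradicted by one of $s,t$; hence $u\nvDash\blacktriangle p$, i.e.\ $u\vDash\neg\blacktriangle p$, as required.

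I do not expect any real obstacle here: the argument is essentially a three-line use of symmetry and Euclideaness to guarantee that every $R$-successor of $s$ inherits the ``disagreement pair'' $\{s,t\}$ as its own successors, thereby forcing $\neg\blacktriangle p$ to propagate one step along $R$. The only place to be careful is the bookkeeping in the outer clause of $\blacktriangle$: one must notice that only the positive clause of the definition needs to be checked (because $s\vDash\neg\blacktriangle p$), so no analogous argument is required in the other direction.
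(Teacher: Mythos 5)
Your proof is correct. It does, however, organize the argument differently from the paper. The paper argues by contradiction: it takes the offending successor $t$ with $sRt$ and $t\vDash\blacktriangle p$, uses symmetry to get $tRs$ and hence (after a case split on whether $t\vDash p$) transfers the truth value of $p$ from $t$ to $s$, then extracts from $s\vDash\neg\blacktriangle p$ a successor $u$ disagreeing with $s$ on $p$, and finally uses Euclidicity ($sRt$, $sRu$, so $tRu$) to contradict $t\vDash\blacktriangle p$. You instead give a direct proof: you first extract from $s\vDash\neg\blacktriangle p$ a successor $t$ disagreeing with $s$ on $p$, and then show that an \emph{arbitrary} successor $u$ of $s$ sees both members of the disagreeing pair $\{s,t\}$ (symmetry gives $uRs$, Euclidicity gives $uRt$), which forces $u\nvDash\blacktriangle p$ regardless of the truth value of $p$ at $u$. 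Your decomposition buys two small things: it eliminates the case analysis on the truth value of $p$ at the witness (because ``$s$ and $t$ disagree on $p$'' is already symmetric in the two cases), and it isolates the conceptual content --- on $\mathcal{B}5$-frames every successor of $s$ inherits $s$'s disagreement pair as successors of its own, so $\neg\blacktriangle p$ propagates one step along $R$. The paper's version is closer in spirit to the completeness argument that follows it, but logically the two proofs use the same two frame conditions on essentially the same configuration of points. One point you handled correctly and explicitly that is easy to get wrong: only the first clause of the semantics of $\blacktriangle\neg\blacktriangle p$ at $s$ needs checking, since $s\vDash\neg\blacktriangle p$ makes the other clause vacuous.
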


\begin{proof}
Suppose not, that is, there exists symmetric and Euclidean model $\M=\lr{S,R,V}$ and $s\in S$ such that $\M,s\vDash\neg\blacktriangle p$ but $\M,s\nvDash\blacktriangle\neg\blacktriangle p$. From supposition, it follows that for some $t$ such that $sRt$ and $\M,t\vDash\blacktriangle p$. By symmetry of $R$,  $tRs$. We consider two cases. If $t\vDash p$, then using the fact that $t\vDash\blacktriangle p$ and $tRs$, we obtain that $s\vDash p$. From this and $s\vDash\neg\blacktriangle p$, then there exists $u\in S$ such that $sRu$ and $\M,u\nvDash p$. From $sRt$ and $sRu$ and  the Euclidicity of $R$, it follows that $tRu$, then using $t\vDash\blacktriangle p$ and $t\vDash p$, we obtain $u\vDash p$, contradiction. If $t\nvDash p$, we can also get a contradiction by a similar argument.
\end{proof}

The axiomatization for $\SNCL$ over Euclidean frames is quite involved. Recall that unlike Axiom $\KwB$, Axiom $\KwEuc$ cannot define the corresponding frame property, because the property of Euclidicity is undefinable in $\SNCL$; unlike Axiom $\KwTr$, Axiom $\KwEuc$ is not valid on the corresponding frames. Despite this, we indeed have the following completeness result.

\begin{theorem}[Completeness of $\SLCLBEuc$]\label{thm.comp-kb5}
$\SLCLBEuc$ is sound and strongly complete with respect to the class of $\mathcal{B}5$-frames, thus it is also sound and strongly complete with respect to the class of $\mathcal{S}5$-frames.
\end{theorem}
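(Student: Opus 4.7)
The plan is to mirror the canonical-model approach used for $\SLCL$ (Thm.~\ref{thm.k}) and $\SLCLB$ (Thm.~\ref{thm.completeness-b}), adding one frame-property argument to handle $\KwEuc$. Soundness will follow immediately: $\SLCLB$ is already sound on $\mathcal{B}$-frames, and by Prop.~\ref{prop.kweuc-valid} the extra axiom $\KwEuc$ is valid on all $\mathcal{B}5$-frames.

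For strong completeness, I will take the canonical model $\mathcal{M}^c=\lr{S^c,R^c,V^c}$ for $\SLCLBEuc$ as defined in Def.~\ref{def.cm} (using $\SLCLBEuc$-maximal consistent sets). The Truth Lemma (Lem.~\ref{lem.truthlem}) goes through verbatim, since its proof only uses the axioms $\KwTop,\EquiKw,\KwCon$ and Rule $\R$. It therefore suffices to show that $R^c$ is reflexive, symmetric, and Euclidean; this makes $\M^c$ an $\mathcal{S}5$-model (hence also a $\mathcal{B}5$-model), and the completeness statements over both classes then follow by the usual argument: any $\SLCLBEuc$-consistent set $\Gamma\cup\{\neg\phi\}$ extends by Lindenbaum to some $s\in S^c$, and by the Truth Lemma $\M^c,s\vDash\Gamma$ while $\M^c,s\nvDash\phi$. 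Reflexivity of $R^c$ is immediate from its definition, and symmetry is inherited from $\KwB$ exactly as in the proof of Thm.~\ref{thm.completeness-b}.

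The main step is to establish Euclidicity of $R^c$. Assume $sR^ct$ and $sR^cu$, and fix an arbitrary $\phi$ with $\phi\land\blacktriangle\phi\in t$; I aim to show $\phi\in u$. First, symmetry of $R^c$ gives $tR^cs$, and then the definition of $R^c$ together with $\phi\land\blacktriangle\phi\in t$ forces $\phi\in s$. The delicate part is to show in addition that $\blacktriangle\phi\in s$. For this I argue by contradiction: if $\blacktriangle\phi\notin s$, then $\neg\blacktriangle\phi\in s$, whence Axiom $\KwEuc$ together with $\SUB$ yields $\blacktriangle\neg\blacktriangle\phi\in s$, so $\neg\blacktriangle\phi\land\blacktriangle\neg\blacktriangle\phi\in s$. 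Applying $sR^ct$ to the formula $\neg\blacktriangle\phi$ transports this to $\neg\blacktriangle\phi\in t$, contradicting the hypothesis $\blacktriangle\phi\in t$. Hence $\blacktriangle\phi\land\phi\in s$, and one final application of $sR^cu$ delivers $\phi\in u$, as required.

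The real obstacle is the asymmetry between the target $tR^cu$ (a statement rooted at $t$) and the axiom $\KwEuc$ (which one naturally applies at $s$): bridging the two demands both symmetry (to pull $\phi$ itself from $t$ back to $s$) and $\KwEuc$ (to pull $\blacktriangle\phi$ from $t$ back to $s$, by negation). Once these combine to yield $\blacktriangle\phi\land\phi\in s$, the step through $sR^cu$ is routine. Since the resulting canonical model is reflexive, symmetric, and Euclidean, it is an $\mathcal{S}5$-model, and because $\mathcal{S}5$-frames form a subclass of $\mathcal{B}5$-frames, the single canonical construction yields completeness with respect to both $\mathcal{B}5$ and $\mathcal{S}5$ simultaneously.
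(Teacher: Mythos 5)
Your proposal is correct and takes essentially the same approach as the paper: canonical model, Truth Lemma unchanged, Euclidicity of $R^c$ established by fixing $\phi\land\blacktriangle\phi\in t$ and pulling $\blacktriangle\phi\land\phi$ back into $s$ via Axiom $\KwEuc$. The only cosmetic difference is that you obtain $\phi\in s$ from the symmetry of $R^c$ (i.e.\ $tR^cs$), whereas the paper derives it from the contrapositive of $sR^ct$ applied to $\neg\phi$ together with $\blacktriangle\neg\phi\in s$; both steps are valid, and in fact the paper itself uses your symmetry-based variant in the analogous proof for $\SLCLBEucp$.
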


\begin{proof}
Soundness is immediate from the soundness of $\SLCLB$ and Prop.~\ref{prop.kweuc-valid}. We show the completeness.

By Thm.~\ref{thm.completeness-b}, we only need to show that $R^c$ is Euclidean. %(recall that $R^c$ is reflexive).

Suppose for any $s,t,u\in S^c$ that $sR^ct,sR^cu$, we need show that $tR^cu$. For this, assume for any $\phi$ that $\blacktriangle\phi\land\phi\in t$, we have to show that $\phi\in u$. From $sR^ct$ and $\blacktriangle\phi\in t$, it follows that $\neg(\neg\blacktriangle\phi\land\blacktriangle\neg\blacktriangle\phi)\in s$, then $\blacktriangle\phi\in s$, namely $\blacktriangle\neg\phi\in s$ (for otherwise by Axiom $\KwEuc$ and Rule $\SUB$ we get $\neg\blacktriangle\phi\land\blacktriangle\neg\blacktriangle\phi\in s$, contradiction). Moreover, from $sR^ct$ and $\phi\in t$, we obtain $\neg(\neg\phi\land\blacktriangle\neg\phi)\in s$, and thus $\neg\phi\notin s$, that is, $\phi\in s$. We have thus shown that $\blacktriangle\phi\land\phi\in s$. Therefore $\phi\in u$ from $sR^cu$.
\end{proof}

\weg{Since $R^c$ is reflexive, it is immediate that
\begin{corollary}
$\SLCLBEuc$ is sound and strongly complete with respect to the class of $\mathcal{S}5$-frames.
\end{corollary}}

\weg{\begin{theorem}\label{thm.euc}
\SLCLEuc\ is sound and strongly complete with respect to the class of Euclidean frames.
\end{theorem}}

Consider the formula $\KwEucp:~ p\land\neg\blacktriangle p\to\blacktriangle(p\land\blacktriangle p)$.

\begin{proposition}\label{prop.valid-5}
$\KwEucp$ is valid on the class of Euclidean frames.
\end{proposition}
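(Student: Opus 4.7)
The plan is to unpack the strong noncontingency operator using Fact~\ref{fact.one} and then exploit the Euclidean property in the contradiction step. Fix a Euclidean model $\M=\langle S,R,V\rangle$ and a world $s\in S$ with $\M,s\vDash p\land\neg\blacktriangle p$. Our goal is $\M,s\vDash\blacktriangle(p\land\blacktriangle p)$, which by the semantics splits into two implications depending on whether $s$ satisfies $p\land\blacktriangle p$ or not.

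First I would handle the easy half. Since $s\vDash\neg\blacktriangle p$, certainly $s\nvDash p\land\blacktriangle p$, so the implication ``if $s\vDash p\land\blacktriangle p$ then all successors satisfy $p\land\blacktriangle p$'' is vacuously true. Hence everything reduces to showing the other half: for every $u$ with $sRu$, $u\nvDash p\land\blacktriangle p$.

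Next I would extract a witness of the failure of $\blacktriangle p$ at $s$. Because $s\vDash p$ and $s\vDash\neg\blacktriangle p$, the semantics of $\blacktriangle$ forces the existence of some $t\in S$ with $sRt$ and $\M,t\nvDash p$.

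The heart of the argument, and the step where Euclideanness is essential, is the following. Suppose for contradiction there is some $u$ with $sRu$ and $\M,u\vDash p\land\blacktriangle p$. From $sRt$ and $sRu$ together with the Euclidean property of $R$ we conclude $uRt$. Since $u\vDash p$ and $u\vDash\blacktriangle p$, the semantics of $\blacktriangle$ applied at $u$ with successor $t$ gives $\M,t\vDash p$, directly contradicting the choice of $t$. Therefore no such $u$ exists, and we conclude $\M,s\vDash\blacktriangle(p\land\blacktriangle p)$, proving the validity of $\KwEucp$. I expect no serious obstacle here; the only subtlety is remembering that ``Euclidean'' means $sRt\land sRu\Rightarrow uRt$ (not $tRu$), which is exactly what lets $u$'s strong noncontingency constraint reach down to $t$.
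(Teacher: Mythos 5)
Your proof is correct and follows essentially the same route as the paper's: extract a successor $t$ of $s$ with $t\nvDash p$ from $s\vDash p\land\neg\blacktriangle p$, note that $s\nvDash p\land\blacktriangle p$ reduces the goal to showing no successor satisfies $p\land\blacktriangle p$, and use Euclideanness ($sRu$, $sRt$ imply $uRt$) to derive the contradiction $t\vDash p$. The only cosmetic difference is that the paper frames the whole argument as a single proof by contradiction, whereas you first dispose of the vacuous half explicitly; the substance is identical.
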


\begin{proof}
Given any Euclidean model $\M=\lr{S,R,V}$ and any $s\in S$, suppose towards contradiction that $\M,s\vDash p\land\neg\blacktriangle p$ but $\M,s\nvDash\blacktriangle(p\land\blacktriangle p)$. From the first supposition, there exists $t$ such that $sRt$ and $t\nvDash p$. Because $s\nvDash p\land\blacktriangle p$, by the second supposition, there exists $u$ such that $sRu$ and $u\vDash p\land\blacktriangle p$. Using $sRu,sRt$ and the Euclidicity of $R$, we obtain $uRt$, thus $t\vDash p$, contradiction.
\end{proof}

Thus  $\SLCL+\KwEucp$ is sound with respect to the class of Euclidean frames (and also $\mathcal{S}5$-frames). We are unsure whether $\SLCL+\KwEucp$ is also {\em strongly complete} with respect to the class of Euclidean frames. We leave this for future work.

\medskip

Nevertheless, $\SLCLBEucp$ is sound and strongly complete with respect to the class of symmetric and Euclidean frames.
\begin{theorem}[Completeness of $\SLCLBEucp$]
$\SLCLBEucp$ is sound and strongly complete with respect to the class of $\mathcal{B}5$-frames, thus it is also sound and strongly complete with respect to the class of $\mathcal{S}5$-frames.
\end{theorem}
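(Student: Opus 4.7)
The plan is to follow the template of Thm.~\ref{thm.comp-kb5}. Soundness is routine: $\SLCLB$ is sound on $\mathcal{B}$-frames by Thm.~\ref{thm.completeness-b}, and Prop.~\ref{prop.valid-5} shows $\KwEucp$ is valid on every Euclidean frame, in particular on every $\mathcal{B}5$-frame, so every axiom of $\SLCLBEucp$ is valid on that class and the rules preserve validity. For strong completeness I would reuse the canonical model of Def.~\ref{def.cm} over maximal $\SLCLBEucp$-consistent sets; the Truth Lemma (Lem.~\ref{lem.truthlem}) carries over verbatim, and the Lindenbaum argument of Prop.~\ref{prop.comp} reduces everything to verifying the right frame properties of $R^c$.

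The three properties I will need are reflexivity, symmetry, and Euclideanity of $R^c$; together they make it an $\mathcal{S}5$-relation, which at once gives completeness with respect to $\mathcal{B}5$-frames and, more tightly, with respect to $\mathcal{S}5$-frames. Reflexivity is free from the definition of $R^c$, and symmetry is exactly the argument from Thm.~\ref{thm.completeness-b}, using only Axiom $\KwB$. All of the real work sits in Euclideanity, for which $\KwEucp$ is designed.

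The Euclidean step is the main obstacle. Suppose $sR^ct$ and $sR^cu$, and fix $\phi$ with $\blacktriangle\phi\land\phi\in t$; the target is $\phi\in u$. Symmetry gives $tR^cs$, hence $\phi\in s$. If additionally $\blacktriangle\phi\in s$ then $\blacktriangle\phi\land\phi\in s$ and $sR^cu$ delivers $\phi\in u$ directly. The delicate sub-case is $\neg\blacktriangle\phi\in s$: here $\phi\land\neg\blacktriangle\phi\in s$, so Axiom $\KwEucp$ instantiated at $\phi$ yields $\blacktriangle(\phi\land\blacktriangle\phi)\in s$, which Axiom $\EquiKw$ together with $\SUB$ converts to $\blacktriangle\neg(\phi\land\blacktriangle\phi)\in s$. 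Since $\neg\blacktriangle\phi\in s$ also forces $\neg(\phi\land\blacktriangle\phi)\in s$, the definition of $R^c$ together with $sR^ct$ pushes $\neg(\phi\land\blacktriangle\phi)$ into $t$, contradicting $\phi\land\blacktriangle\phi\in t$. The subtlety is precisely this $\EquiKw$-flip: because $R^c$ transports a formula $\psi$ out of $s$ only when $\blacktriangle\psi\land\psi\in s$, and $\phi\land\blacktriangle\phi$ is itself absent from $s$ in the sub-case, one must swap the positive axiom-conclusion $\blacktriangle(\phi\land\blacktriangle\phi)$ for its negation-form before the canonical transport will fire. Once Euclideanity is in hand, the theorem follows by the general argument of Prop.~\ref{prop.comp}, and the $\mathcal{S}5$-claim drops out since $R^c$ is simultaneously reflexive, symmetric, and Euclidean.
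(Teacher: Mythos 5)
Your proposal is correct and follows essentially the same route as the paper: soundness via Thm.~\ref{thm.completeness-b} and Prop.~\ref{prop.valid-5}, then Euclideanity of $R^c$ by deriving $\phi\in s$ from symmetry, splitting on whether $\blacktriangle\phi\in s$, and in the negative case using $\KwEucp$ plus the $\EquiKw$-flip to transport $\neg(\phi\land\blacktriangle\phi)$ into $t$ for a contradiction. The paper phrases this case split as a single reductio showing $\blacktriangle\phi\in s$, but the argument is identical.
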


\begin{proof}
By Thm.~\ref{thm.completeness-b} and Prop.~\ref{prop.valid-5}, we only need to show that $R^c$ is Euclidean.

Suppose for any $s,t,u\in S^c$ that $sR^ct$ and $sR^cu$, to show $tR^cu$. For this, assume for any $\phi$ that $\phi\land\blacktriangle\phi\in t$, we need only show $\phi\in u$. From the supposition $sR^ct$ and the symmetry of $R^c$, it follows that $tR^cs$. From this and the assumption, we have $\phi\in s$.

We can show that $\blacktriangle\phi\in s$: otherwise, $\blacktriangle\phi\notin s$, then $\neg\blacktriangle\phi\in s$. Since $\phi\in s$, by using Axiom $\KwEucp$ and Rule $\SUB$, we get $\blacktriangle(\phi\land\blacktriangle\phi)\in s$, i.e. $\blacktriangle\neg(\phi\land\blacktriangle\phi)\in s$; moreover, $\neg(\phi\land\blacktriangle\phi)\in s$, then from $sR^ct$ follows that $\neg(\phi\land\blacktriangle\phi)\in t$, contrary to the assumption.

We have thus shown that $\blacktriangle\phi\in s$. Thus $\phi\land\blacktriangle\phi\in s$, and hence $\phi\in u$ due to $sR^cu$.
\end{proof}

\section{The related work}\label{sec.relatedwork}
In \cite[page 119]{DBLP:journals/ndjfl/Humberstone02}, Humberstone proposed an alternative semantics for the agreement operator $O$. Given a model $\M=\langle S,R,V\rangle$ and a world $s\in S$,
$$\M,s\vDash O\phi \Leftrightarrow \forall t\in S(sRt\Rightarrow (\M,s\vDash\phi\Leftrightarrow\M,t\vDash\phi)).$$
%Intuitively, the truth value of $\phi$ is agreed on at $s$, if and only if, $s$ and its all successors agree on $\phi$.
Intuitively, $O\phi$ is true at $s$, if and only if, $s$ and its all successors agree on the truth value of $\phi$. One can show that the semantics of $O$ and $\blacktriangle$ are equivalent.

Humberstone then gave the following axiomatization. Although motivated in a different setting, \cite[page 119]{DBLP:journals/ndjfl/Humberstone02} also axiomatized $\SNCL$ over the class of arbitrary frames. We call it $\mathbf{LA}$, wherein we have replaced $O$ by $\blacktriangle$.\footnote{In~\cite[note 24]{DBLP:journals/ndjfl/Humberstone02}, it is pointed out that ${\mathbf A3}$ can also be replaced with $\blacktriangle\phi\to\blacktriangle(\phi\vee\psi)\lor(\neg\phi\land\blacktriangle(\neg\phi\vee\chi))$ or $\blacktriangle\phi\to(\phi\land \blacktriangle(\phi\vee\psi))\lor\blacktriangle(\neg\phi\vee\chi)$.}
\[\begin{array}{ll}
\mathbf{PL}& \text{all substitution instances of tautologies}\\
{\mathbf A1}& \blacktriangle\phi\to\blacktriangle\neg\phi\\
{\mathbf A2} & \blacktriangle\phi\land\blacktriangledown(\phi\lor\psi)\to\blacktriangledown\psi\\
{\mathbf A3} & \blacktriangle\phi\to(\phi\land \blacktriangle(\phi\vee\psi))\lor(\neg\phi\land\blacktriangle(\neg\phi\vee\chi))\\
{\mathbf R\blacktriangle}& \text{From } \phi \text{ infer } \blacktriangle\phi\\
\mathbf{MP} & \text{From } \phi \text{ and } \phi\to\psi \text{ infer } \psi\\
\mathbf{RE} & \text{From } \phi\leftrightarrow\psi \text{ infer } \blacktriangle\phi\leftrightarrow\blacktriangle\psi\\
\end{array}\]

Inspired by Kuhn's definition of $\lambda$ in the completeness proof for minimal noncontingency logic \cite[page 232]{DBLP:journals/ndjfl/Kuhn95}, Humberstone defined a particular function $\lambda'$ such that, for each maximal consistent set $s$, $\lambda'(s)=\{\phi\mid \phi\land O(\phi\vee\psi)\in s\text{ for all }\psi\}$, and then defined canonical relation $R^c$ as $sR^ct$ iff $\lambda'(s)\subseteq t$. Then he claimed that the system $\mathbf{LA}$ is sound and complete with respect to the class of all Kripke frames.
\weg{By an adaption of Kuhn's argument in the setting of noncontingency logic, Humberstone claimed that the system $\mathbf{LA}$ is sound and complete with respect to the class of all Kripke frames. }%(Actually, ${\mathbf A3}$ can be replaced with ${\mathbf A3'}:~\blacktriangle\phi\to\blacktriangle(\phi\vee\psi)\lor\blacktriangle(\neg\phi\vee\chi)$ (?).) No! Because A3 is not provable from A3', just thinking of the standard semantics of non-contingency  operator.

We tend to find our axiomatization \SLCL\ simpler, e.g. with respect to ${\mathbf A3}$ above. Below we will show that the axioms/rules of $\mathbf{LA}$ are all provable/admissible in $\SLCL$. In the following we write $\vdash$ for $\vdash_{\SLCL}$.

\begin{proposition}
${\mathbf A3}$ is provable in $\SLCL$.
%$\vdash_{\SLCL}\blacktriangle\phi\to(\phi\land \blacktriangle(\phi\vee\psi))\lor(\neg\phi\land\blacktriangle(\neg\phi\vee\chi)).$
\end{proposition}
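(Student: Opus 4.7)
The plan is a direct case split on whether $\phi$ holds, applying Rule~$\R$ twice. The two monotonicity-like implications needed as lemmas are obtained by instantiating $\R$ on tautologies of propositional logic; $\EquiKw$ bridges between $\blacktriangle\phi$ and $\blacktriangle\neg\phi$ in the second case.

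First I would derive the auxiliary theorem $\vdash \blacktriangle\phi\land\phi\to\blacktriangle(\phi\vee\psi)$ by applying $\R$ to the tautology $\phi\to\phi\vee\psi$. Analogously, applying $\R$ to $\neg\phi\to\neg\phi\vee\chi$ yields $\vdash \blacktriangle\neg\phi\land\neg\phi\to\blacktriangle(\neg\phi\vee\chi)$. Next, from Axiom $\EquiKw$ together with $\MP$, I obtain $\vdash \blacktriangle\phi\to\blacktriangle\neg\phi$, so that in the $\neg\phi$-branch the antecedent $\blacktriangle\neg\phi\land\neg\phi$ is available from $\blacktriangle\phi\land\neg\phi$.

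The final step is a purely propositional combination. Assuming $\blacktriangle\phi$, by $\TAUT$ we have $\phi\vee\neg\phi$. In the subcase $\phi$, the first auxiliary theorem gives $\blacktriangle(\phi\vee\psi)$, hence $\phi\land\blacktriangle(\phi\vee\psi)$. In the subcase $\neg\phi$, $\EquiKw$ promotes $\blacktriangle\phi$ to $\blacktriangle\neg\phi$, and then the second auxiliary theorem produces $\blacktriangle(\neg\phi\vee\chi)$, hence $\neg\phi\land\blacktriangle(\neg\phi\vee\chi)$. Disjoining the two subcases and discharging $\blacktriangle\phi$ yields $\mathbf{A3}$.

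There is no real obstacle here; the whole derivation is two uses of $\R$ plus one use of $\EquiKw$, glued together by propositional reasoning. The only tiny subtlety is remembering to translate from $\blacktriangle\phi$ to $\blacktriangle\neg\phi$ via $\EquiKw$ before applying the second monotonicity instance, since $\R$ by itself only supplies almost-monotonicity with the antecedent $\phi$ holding, not $\neg\phi$.
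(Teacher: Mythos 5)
Your proposal is correct and follows essentially the same derivation as the paper: two applications of Rule \R\ to the tautologies $\phi\to\phi\vee\psi$ and $\neg\phi\to\neg\phi\vee\chi$, a conversion from $\blacktriangle\phi$ to $\blacktriangle\neg\phi$ via \EquiKw\ (with \SUB), and a propositional case split on $\phi\vee\neg\phi$ to assemble ${\mathbf A3}$.
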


\begin{proof}
\[
\begin{array}{lll}
(i)& \phi\to\phi\vee\psi & \TAUT\\
(ii)& \blacktriangle\phi\land\phi\to\blacktriangle(\phi\vee\psi) & (i),\R\\
(iii) & \blacktriangle\phi\land\phi\to\phi\land\blacktriangle(\phi\vee\psi)& (ii),\TAUT\\
(iv) &\neg\phi\to\neg\phi\vee\chi & \TAUT\\
(v) & \blacktriangle\neg\phi\land\neg\phi\to\neg\phi\land\blacktriangle(\neg\phi\vee\chi) &(iv),\text{ similar to }(i)-(iii)\\
(vi) & \blacktriangle\phi\to(\phi\land \blacktriangle(\phi\vee\psi))\lor(\neg\phi\land\blacktriangle(\neg\phi\vee\chi))& (iii),(v),\EquiKw,\SUB,\TAUT\\
\end{array}
\]
\end{proof}

\begin{proposition}
${\mathbf R\blacktriangle}$ is admissible in $\SLCL$.
\end{proposition}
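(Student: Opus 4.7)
The plan is to show admissibility by a short direct derivation using Axiom $\KwTop$ together with Rule $\R$. Suppose $\vdash \phi$. I would first turn this into the implication $\vdash \top \to \phi$, which follows from $\phi$ by propositional logic (\TAUT\ gives $\vdash \phi \to (\top \to \phi)$ and \MP\ closes the gap).

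Next I would apply Rule $\R$ to the theorem $\top \to \phi$, with $\top$ in place of the antecedent and $\phi$ in place of the consequent; this yields $\vdash \blacktriangle\top \land \top \to \blacktriangle\phi$. Then Axiom $\KwTop$ gives $\vdash \blacktriangle\top$, and $\vdash \top$ is a tautology, so by propositional reasoning $\vdash \blacktriangle\top \land \top$, and \MP\ concludes $\vdash \blacktriangle\phi$.

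There is no real obstacle here; the only point worth highlighting is precisely why Axiom $\KwTop$ is needed in the base system: without the fact that $\blacktriangle\top$ is derivable, Rule $\R$ applied to $\top \to \phi$ would leave a conjunct $\blacktriangle\top$ stranded on the left that we could not discharge, matching the earlier observation that $\SLCL - \KwTop$ fails to be complete. The proof is therefore a three-line derivation combining \TAUT, \KwTop, \R, and \MP.
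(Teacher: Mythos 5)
Your derivation is correct and is essentially identical to the paper's own proof: both pass from $\vdash\phi$ to $\vdash\top\to\phi$ by \TAUT, apply Rule \R\ to obtain $\vdash\blacktriangle\top\land\top\to\blacktriangle\phi$, and discharge the antecedent via Axiom \KwTop\ and \MP. Your added remark about the indispensability of \KwTop\ matches the paper's earlier discussion of the incompleteness of $\SLCL-\KwTop$.
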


\begin{proof}
Suppose that $\vdash\phi$. Then by Axiom $\TAUT$, we have $\vdash\top\to\phi$. Applying Rule $\R$, we get $\vdash\blacktriangle\top\land\top\to\blacktriangle\phi$. Using Axiom $\KwTop$ and Rule $\MP$, we obtain $\vdash\blacktriangle\phi$.
\end{proof}
\weg{\begin{proof}
Suppose that $\vdash\phi$. Then by Axiom $\TAUT$, we have $\vdash\phi\leftrightarrow\top$. Using $\mathbf{RE}$ (Prop.~\ref{prop.re-admissible}), we get $\vdash\blacktriangle\phi\leftrightarrow\blacktriangle\top$, and hence we obtain $\vdash\blacktriangle\phi$ due to Axiom $\KwTop$.
\end{proof}}

\begin{proposition}\label{prop.re-admissible}
$\mathbf{RE}$ is admissible in \SLCL.
\end{proposition}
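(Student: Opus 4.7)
The plan is to assume $\vdash \phi \leftrightarrow \psi$ and derive $\vdash \blacktriangle\phi \leftrightarrow \blacktriangle\psi$ by a symmetric pair of case splits on $\phi$, using Rule $\R$ twice on each side, together with Axiom $\EquiKw$ to cover the ``false'' case.

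First, from the assumption and $\TAUT$ I extract $\vdash \phi \to \psi$. Applying Rule $\R$ yields $\vdash \blacktriangle\phi \land \phi \to \blacktriangle\psi$. Independently, $\vdash \phi \leftrightarrow \psi$ gives $\vdash \neg\phi \to \neg\psi$ by propositional reasoning, and another application of $\R$ yields $\vdash \blacktriangle\neg\phi \land \neg\phi \to \blacktriangle\neg\psi$. Now I use Axiom $\EquiKw$ with $\SUB$ (instantiating $p$ by $\phi$ and by $\psi$) to rewrite $\blacktriangle\neg\phi$ as $\blacktriangle\phi$ and $\blacktriangle\neg\psi$ as $\blacktriangle\psi$, obtaining $\vdash \blacktriangle\phi \land \neg\phi \to \blacktriangle\psi$. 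Combining this with $\vdash \blacktriangle\phi \land \phi \to \blacktriangle\psi$ via the tautology $\phi \lor \neg\phi$ and $\MP$, I conclude $\vdash \blacktriangle\phi \to \blacktriangle\psi$.

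The reverse direction $\vdash \blacktriangle\psi \to \blacktriangle\phi$ is obtained by an entirely symmetric argument, using $\vdash \psi \to \phi$ and $\vdash \neg\psi \to \neg\phi$ in place of the corresponding implications above. Conjoining the two directions via $\TAUT$ and $\MP$ gives $\vdash \blacktriangle\phi \leftrightarrow \blacktriangle\psi$, as required.

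There is no real obstacle here; the only subtlety worth flagging is that the derivation does \emph{not} invoke Axiom $\KwTop$. This is exactly what the preceding discussion requires, since $\mathbf{RE}$ is needed in the argument (made earlier in Section~\ref{sec.extension}) that $\KwTop$ is derivable in $\SLCLB - \KwTop$ by specialising Axiom $\KwB$ to $p := \top$; so the admissibility proof of $\mathbf{RE}$ must itself go through in the $\KwTop$-free fragment, which the case-split above clearly does.
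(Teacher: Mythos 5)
Your proof is correct and follows essentially the same route as the paper's: derive $\vdash\blacktriangle\phi\land\phi\to\blacktriangle\psi$ and $\vdash\blacktriangle\neg\phi\land\neg\phi\to\blacktriangle\neg\psi$ via Rule \R, rewrite with Axiom \EquiKw\ and \SUB, case-split on $\phi\lor\neg\phi$, and symmetrize. Your observation that \KwTop\ is never used is also exactly the point the paper makes in its footnote about $\SLCLB-\KwTop$.
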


\begin{proof}
Suppose that $\vdash\phi\leftrightarrow\psi$, we need to show that $\vdash\blacktriangle\phi\leftrightarrow\blacktriangle\psi$. From supposition and Axiom $\TAUT$, it follows that $\vdash\phi\to\psi$. With this and Rule $\R$, we have $\vdash\blacktriangle\phi\land\phi\to\blacktriangle\psi$. Similarly, using $\vdash\neg\phi\to\neg\psi$, we can obtain that $\vdash\blacktriangle\neg\phi\land\neg\phi\to\blacktriangle\neg\psi$. Then from Axioms $\EquiKw,\TAUT$ and Rule $\SUB$, we infer that $\vdash\blacktriangle\phi\to\blacktriangle\psi$. From supposition we also have $\vdash\psi\to\phi$. By the similar argument, we can get $\vdash\blacktriangle\psi\to\blacktriangle\phi$, thus we conclude that $\vdash\blacktriangle\phi\leftrightarrow\blacktriangle\psi$.
\end{proof}

In this paper, we gave a simply minimal axiomatization for $\SNCL$, and its extensions with respect to other various classes of frames, and we advanced the research by comparing the relative expressivity of $\SNCL$ and $\ML$, and that of $\SNCL$ and $\NCL$. Moreover, we proposed a notion of bisimulation for $\SNCL$, based on which we characterized its expressive power within modal logic and within first-order logic. We also have results about the frame (un)definability.

\weg{\section{Discussion}

This work is related to Hintikka semantics for questions \cite{Hintikka:1976}, where for example, the sentence ``John remember whether it is raining'' is equivalent to  ``If it is raining then John remembers that it is raining, and if it is not raining then John remembers that it is not raining.'' This formal analysis is criticized in \cite{Karttunen:1977}, but again adopted by researchers, see e.g., \cite{wangetal:2010}.

\cite[page 6]{Karttunen:1977} lists nine types of question embedding verbs, such as verbs of retaining knowledge (e.g. remember), verbs of acquiring knowledge (e.g. learn), verbs of communication (e.g. tell), inquisitive verbs (e.g. wonder), etc.. Remember whether $\phi$ is interpreted as ``if $\phi$ is true, then remember that $\phi$ is true; if $\phi$ is false, then remember that $\phi$ is false''; however, if `wonder whether' is interpreted in the same way, then it will become if $\phi$ is true, then wonder that $\phi$ is true; if $\phi$ is false, then wonder that $\phi$ is false'', although `wonder' takes only interrogative complements \cite{Egre:2008}. For Hamblin, (direct) questions denotes the set of propositions expressed by their possible answers \cite{Hamblin:1973}. In Karttunen's opinion, ``questions denote the set of true propositions expressed by their true answers'' \cite[page 10]{Karttunen:1977}.}

\section{Conclusion and future work}\label{sec.conclu}

In this paper, motivated by Hintikka's treatment of question embedding verbs, and also the variations of noncontingency operator, we proposed a logic of strong noncontingency $\SNCL$, with strong noncontingency operator as a sole primitive modality. This logic is not normal, since it does not have the validity to the effect that modality is closed under material implication. We compared the relative expressive powers of $\SNCL$, modal logic, and noncontingency logic. We demonstrated that in $\SNCL$, some of the five basic frame properties are undefinable  but some is indeed definable. We proposed suitable notions of bisimulation and of bisimulation contraction for $\SNCL$. Based on this bisimulation, we characterized $\SNCL$ as the invariant fragment of standard modal logic and of first-order logic. Last but not least, we presented Hilbert-style axiomatizations for $\SNCL$ over various frame classes, where the minimal system is simpler than that proposed in \cite{DBLP:journals/ndjfl/Humberstone02}.

For further research we leave the question whether $\SLCL+\KwEucp$ is strongly complete with respect to the class of Euclidean frames. As we have observed, the soundness of $\SLCL+\KwEucp$ follows from the soundness of $\SLCL$ and Proposition \ref{prop.valid-5}. Another direction is characterize $\CLtwo$ within $\SNCL$, since $\CLtwo$ is less expressive than $\SNCL$ on non-reflexive models (see Proposition \ref{prop.exp.clncl}). A work related to this is to compare the notion of bisimulations of $\SNCL$ and of $\CLtwo$. From the proof of Proposition \ref{prop.exp.clncl}, we can check that $(\M,s)\Deltabis(\N,t)$ but $(\M,s)\not\kwbis(\N,t)$, thus $\Deltabis$ is not stronger than $\kwbis$. But how about the other way around? And also, one can consider logics in which the combinations of some modalities involved here are contained, such as the combination of $\blacktriangle$ and $\Delta$, and that of $\circ$ and $\Delta$.

Last but not least, we can compare the relative expressivity of $\SNCL$ and $\LEA$. In Section \ref{sec.exp}, we compare the relative expressive powers of $\SNCL$ and $\ML$, and that of $\SNCL$ and $\CLtwo$, but we did not compare the relative expressivity of $\SNCL$ and $\LEA$. For this, we observe that,
since $\vDash\blacktriangle\phi\leftrightarrow\circ\phi\land\circ\neg\phi$, there is a truth-preserving translation from $\SNCL$ to $\LEA$, thus $\LEA$ is at least as expressive as $\SNCL$. \weg{And we can show that $\SNCL$ and $\LEA$ are equally expressive on the class of $\mathcal{T}$-models, since $\blacktriangle\phi\leftrightarrow\circ\phi$ is valid on $\mathcal{T}$-models. (not valid!)}But we do not know whether $\LEA$ is {\em more} expressive than $\SNCL$. Although one can show that $\LEA$ shares the same notion of bisimulation as $\SNCL$ (e.g., Definition \ref{def.delta-bis} and Propositions \ref{prop.invariance},~\ref{prop.hennessy-milner-theorem} also apply to $\LEA$, see \cite{Fan:2015:essence}),  which means the two languages have the same distinguishing power, this does not mean they are equally expressive,\weg{ Bisimulation is a notion about distinguishing power.} because even two languages that have the same distinguishing power does not necessarily have the same expressive power. For example, the full language of propositional logic has the same distinguishing power as its fragment which has only propositional variables, but they are not equally expressive. We conjecture that $\LEA$ is more expressive than $\SNCL$, since it seems to us that even the simplest $\LEA$-formula $\circ p$ cannot be expressed with $\SNCL$. We leave this for future work.

\weg{Another work to do is compare the notion of bisimulations of $\SNCL$ and of $\CLtwo$. From the proof of Proposition \ref{prop.lessexp}, we can check that $(\M,s)\Deltabis(\N,t)$ but $(\M,s)\not\kwbis(\N,t)$, thus $\Deltabis$ is not stronger than $\kwbis$. But how about the other way around? In other words, is $\kwbis$ stronger than $\Deltabis$? We do not know the answer.}

\bibliographystyle{plain}
\bibliography{biblio2014}

\end{document}